\newtheorem{theorem}{Theorem}
\newtheorem{proposition}{Proposition}
\newtheorem{lemma}{Lemma}
\newcommand{\pt}{\partial}
\newcommand{\relmiddle}[1]{\mathrel{}\middle#1\mathrel{}}
\DeclareMathOperator{\argmax}{argmax}
\newcommand{\COMM}[2]{{
\begin{CJK}{UTF8}{ipxm}
\ifthenelse{\equal{#1}{TJK}}{\color{red}}{
\ifthenelse{\equal{#1}{SN}}{\color{blue}}{
\ifthenelse{\equal{#1}{MK}}{\color{cyan}}{
\ifthenelse{\equal{#1}{BB}}{\color{magenta}}}}}
[#1: #2]
\end{CJK}
}}
\renewcommand\appendix{\par
   \setcounter{section}{0}%
   \renewcommand{\thesection}{\Alph{section}}
   \renewcommand{\thesubsection}{\alph{subsection}}
}
\begin{document}


\title{Optimality theory of stigmergic collective information processing by chemotactic cells}


\author{Masaki Kato${}^{1}$}
\email{m\_kato@sat.t.u-tokyo.ac.jp}
\author{Tetsuya J. Kobayashi${}^{2,1,3}$}
\email{tetsuya@mail.crmind.net}
\affiliation{${}^{1}$Department of Mathematical Informatics, the Graduate school of Information Science and Technology, the University of Tokyo, Tokyo, Japan}
\affiliation{${}^{2}$Institute of Industrial Science, The University of Tokyo, Tokyo 153-8505, Japan}
\affiliation{${}^{3}$Universal Biology Institute, The University of Tokyo, Tokyo 113-8654, Japan}


\date{\today}

\begin{abstract}
Collective information processing is fundamental in various biological systems, where the cooperation of multiple cells results in complex functions beyond individual capabilities.
A distinctive example is collective exploration where chemotactic cells not only sense the gradient of guiding exogeneous cues originating from targets but also generate and modulate endogenous cues to coordinate their collective behaviors. 
While the optimality of gradient sensing has been studied extensively in the context of single-cell information processing, the optimality of collective information processing that includes both gradient sensing and gradient generation remains underexplored.
In this study, we formulate the collective exploration problem as a reinforcement learning (RL) by a population.
Based on RL theory, we derive the optimal exploration dynamics of agents and identify their structural correspondence with the Keller-Segel model, the established phenomenological model of collective cellular dynamics. 
Our theory identifies an optimal coupling relation between gradient sensing and gradient generation and demonstrates that the optimal way to generate a gradient qualitatively differs depending on whether the gradient sensing is logarithmic or linear.
The underlying RL structure is leveraged to compare the derived collective dynamics with single-agent searching dynamics, showing that distributed information processing by population enables a fraction of agents to reach the target robustly.
Our formulation provides a foundation for understanding the collective information processing mediated by dynamic sensing and modulation of cues.

\end{abstract}


\maketitle


\renewcommand{\theequation}{\arabic{equation}}

\section{Introduction}
\begin{figure}[tbp]
    \hfill
    \begin{minipage}[b]{\linewidth}
        \centering
    \includegraphics[width=\textwidth]{./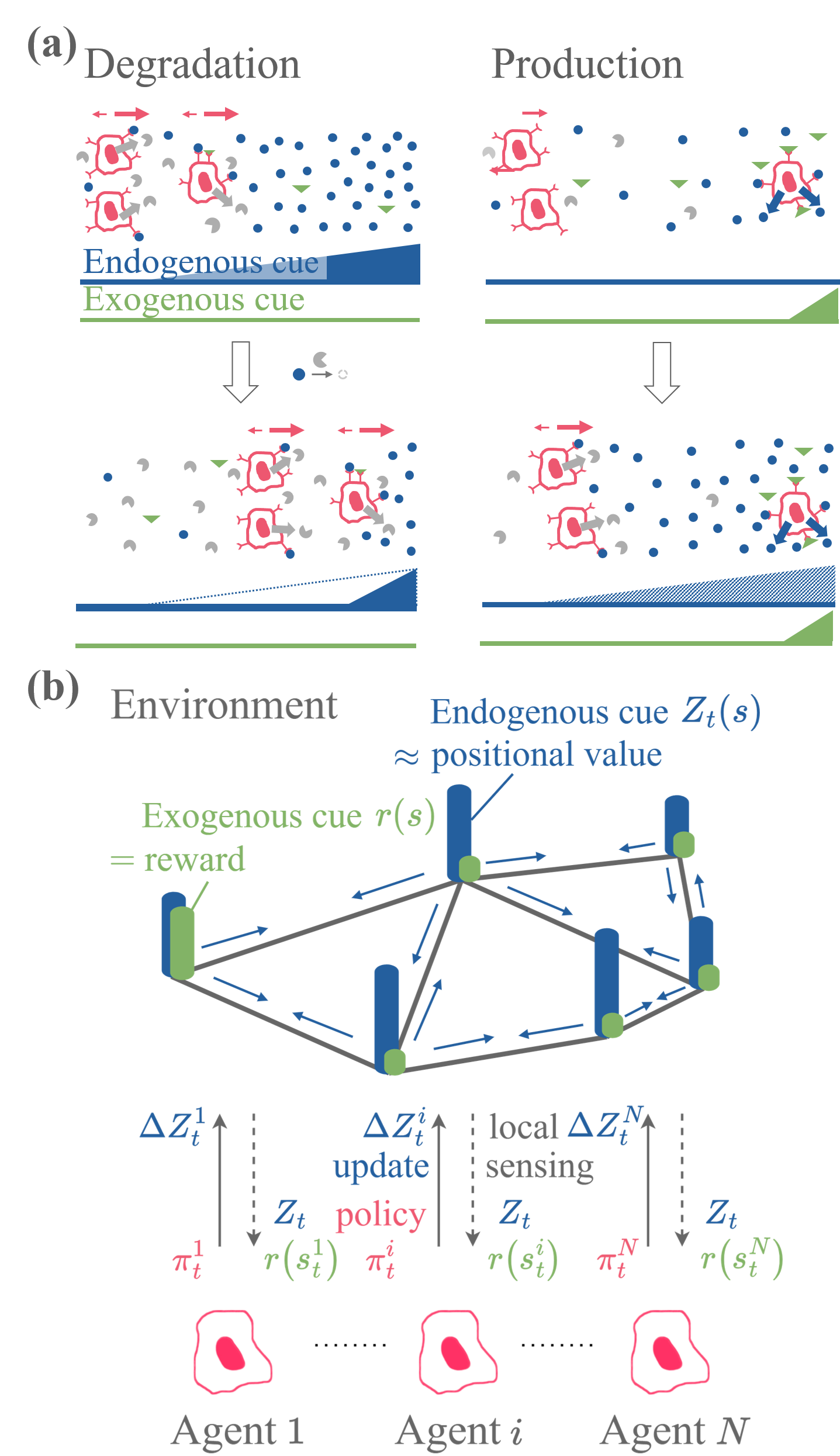}
    \end{minipage}
    \hfill
    \caption{(a) Schematics of an endogenous cue (attractant) degradation and production by a migrating cell population. The degradation by enzymes creates a gradient of the cue, guiding cells toward unexplored areas. 
    The production of endogenous cue in response to sensing the exogenous cue forms a gradient of the endogenous cue toward the source of the exogenous cue, recruiting cells to the source.
    (b) A schematic diagram of an agent population, which distributedly learns optimal migration toward the target locations with higher exogenous cue concentration (reward).
    The optimal state value $V^*$ for optimal migration is determined by the time-invariant concentration $r$ of exogenous cue, which contain the positional information of the targets in the environment.
    The agents learn $V^*$ by associating endogenous cue concentration $Z_t$ with their provisional estimation $V_t$ of $V^*$ ($V_t=\mathcal{F}[Z_t] \approx V^{*}$).
    At time $t$, each agent $i\in\{1,\ldots,N\}$ senses surrounding $Z_t$ and local $r(s_t^i)$ at each time step.
    Based on the sensed $r$ and $Z_t$, each agent transits between vertices according to its policy $\pi^i_t$ and individually modulates local $Z_t(s_t^i)$ by $\Delta Z^i_t=\Delta Z_t(s_t^i)$, where $\Delta Z^i_t<0$ corresponds to degradation, and $\Delta Z^i_t>0$ corresponds to production.  
    }
    \label{fig:landscape}
\end{figure}

Collective information processing by populations of cells is a widespread phenomenon throughout biology---from bacterial populations~\cite{Shapiro1998AnnuRevMicrobiol,Popat2015JRSocInterface,Flemming2016NatRevMicroBiol} to developing tissues~\cite{Friedl2009NatRevMolCellBiol,Theveneau2012DevBiol,Ellison2016PNAS}, immune systems~\cite{Luster2005NatImmu,Mantovani2011NatRevImmunol}, and metastatic tumors~\cite{Axelrod2006PNAS,Deisboeck2009Bioessays,Friedl2009NatRevMolCellBiol}--- where the cooperation of multiple cells leads to complex biological functions beyond individual capabilities. 
Among others, cue-mediated collective exploration is a representative class of information processing.
To share positional information about targets such as invaded pathogens or damaged cells in tissue, cells not only individually sense the chemical signal produced by targets, but also actively degrade and produce the extracellular field of chemical cues (Figure~\ref{fig:landscape} (a)). 
Active cue degradation enables a population of cells to efficiently migrate toward unexplored areas~\cite{Insall2022TrdCellBiol}. 
In addition, the production of cues by cells that have reached the target can recruit other cells and enhance the accumulation at the target, even if the target leaves only a faint signal into the environment~\cite{Oliveira2016NatRevImm}.
As a combination of sensing and modulation of cues, a population of \textit{Dictyostelium discoideum} can cooperatively find the goal of complex mazes~\cite{Tweedy2020Science}. 
Similar strategies may be employed by eukaryotic cell populations in tissues~\cite{Insall2022TrdCellBiol}, including immune~\cite{Breart2011JEM,Lammermann2013Nat} and cancer cells~\cite{Scherber2012IntegrBiol,Susanto2017JCS}.

Despite such accumulating experimental insights into collective information processing mediated by external chemical cues, 
we still lack the theoretical basis to investigate its functionality and efficiency.
Recent studies on cellular information processing of external cues have mainly focused on single-cell chemotaxis, particularly bacterial responses to ligand gradients.
Their optimality has been verified~\cite{Clark2005PNAS,Celani2010PNAS,Mattingly2021NatPhys,Nakamura2022PRR} by comparing optimal responses and behaviors derived theoretically with those measured and estimated by experiments~\cite{Segall1986PNAS,Kalinin2009BioPhys,Lazova2011PNAS,Tu2013AnnRevBioPhys}.
Nonetheless, extending those studies on single-cell responses to collective behaviors remains limited.
Existing works on collective cellular behaviors investigated only phenomenological aspects of the phenomena: cell population motility~\cite{KELLER1971225,Tindall2008BullMathBiol,Painter2019JTB,Alert2022PRL,Phan2024PNAS},  impacts of individual traits on population behavior~\cite{Fu2018NatCom,Mattingly2022PNAS}, and the interplay of population behaviors with growth~\cite{Fraebel2017Elife,Liu2019Nature,Cremer2019Nat}.
Only a few pioneering work~\cite{Pezzotta2018PRE} addressed the functionality and efficiency of collective behaviors.

Considering the efficiency of collective chemotaxis, there are two problems that require insights beyond single-cell studies.
One is the interdependence between the generation of concentration field of cues and the responses to the field for efficient collective chemotaxis.
Focusing only on the responses, as in the case of single-cell studies, is insufficient to understand the efficiency of collective chemotaxis.
For optimal migration and exploration, the way to shape the field should be consistent with the way to respond to it, as already suggested in the context of traveling wave existence~\cite{Horstmann2004JNonlinearSci}. 
A recent study addressed the optimality of cue-induced collective search using optimal control theory~\cite{Pezzotta2018PRE}.
They derived an optimal gradient sensing to a collectively generated cue field, yet the generation process of the field was incorporated heuristically.
Thus, the consistency problem remains unresolved.

The other problem is the benefits and drawbacks of being a collective over being an individual for solving information processing tasks.
Some organisms, such as human beings, possess high computational capabilities as individual agents~\cite{Behrens2018Neuron} whereas others have low capabilities yet work collectively. 
Both of them have been maintained evolutionarily.
Understanding the difference between individual and collective information processing is fundamental to comprehend the evolvability of biological computation~\cite{Sole2019RSocB,Navas2022TrendsCogSci}.

To address these problems in this study, we formulate a cell population exploring the environment through degradation and production of a cue as an agent population performing reinforcement learning (RL) in a distributed way.
This formulation allows us to clarify the optimal coupling between sensing and modulation of the cue and the functionality of distributed computation.
Specifically, we associate the state value $V_{t}$ in RL, the likelihood of accumulation at the target, with the cue concentration $Z_{t}$ in the environment and show that the process by which individual cells cooperatively modulate the cue based on their local observations can be interpreted as a collective learning of the optimal state value by agents based on their experiences (Figure~\ref{fig:landscape} (b)).
The derived dynamics are structurally equivalent to the Keller-Segel model, a widely employed mechanical model for explaining experiments of collective chemotaxis, suggesting that cell populations exhibiting these dynamics can engage in distributed learning.
Furthermore, we derive two qualitatively different equations for optimal cue modulation, which optimally couples with linear and logarithmic sensing, respectively.
Finally, we compare a single smart agent, which possesses the internal memory of entire environment, with a population of non-smart agents, which has no internal memory yet can learn collectively via an external cue in the environment.
demonstrating that the population can achieve accumulation at the target more robustly. 
Our results provide a quantitative foundation for understanding the collective information processing mediated by dynamic sensing and modulation of cues.

\section{Model}
\label{sec:model-intro}

\subsection{Formulation of the collective exploration problem by an agent population}
\label{sec:RL-formulation}
To formulate the environmental exploration by a population of biological agents including cells,
we represent the environment as a connected undirected graph $G = (\mathcal{S}, \mathcal{E})$ with a finite set $\mathcal{S}$ of vertices and a finite set $\mathcal{E}$ of edges, which allows to incorporate general spatial topologies explored by the agent population.
In the environment, $N\in\mathbb{Z}_+$ homogeneous and independent agents are located at the vertices of the graph with overlap.
The location of agent $i\in \{1,2,\ldots,N\}$ at time $t\in\{0,1,\ldots\}$ is indicated by state $s^i_t\in\mathcal{S}$.
Within a small unit time, agent $i$ stochastically chooses either to stay at $s'=s_t^i$ or to move to a neighboring vertex $s'\in\mathcal{N}(s_t^i)$ with probability $\pi_t^i(s'|s_t^i)\in[0,1]$, where $\mathcal{N}(s)$ is the set of neighboring vertices of $s$. 
The transition probability $\pi_t^i:\mathcal{S}\to\Delta(\mathcal{S})$ is referred to as the agent's policy.

Each agent is assumed to learn a better policy $\pi_t^i$ to reach and accumulate at targets in the environment through exploration.
If the agent is a higher organism with sufficient internal memory, it can memorize the past experience, update $\pi_t^i$, and guide itself individually.
However, if the agent is a simple organism such as a cell, a versatile memory is not necessarily available. 

Instead, the policy (stochastic behaviors) of a simple agent is typically controlled and guided by various external cues~\cite{Kolaczkowska2013NatRevImm}.
One of these cues is a primary exogenous guidance cue, which conveys information about the targets' positions and magnitudes.
We specifically refer to the cue generated by the targets as exogenous to discriminate it from the cue generated by agents.
Bacterial chemotaxis to sugars is an example in which an agent is guided by exogenous cues~\cite{Berg2004}.
We consider a single exogenous cue for simplicity, whose spatial distribution is represented by $r:\mathcal{S}\to\mathbb{R}_{\geq 0}$.
For a biological instance, $r(s)=\exp(-l(s;s^0))$ represents fMLP concentration originating from damaged tissue at $s^0$, at distance $l(s;s^0)$ from the tissue~\cite{Szatmary2017MolBiolCell}.
The concentration can also vary depending on the size of the damage~\cite{Uderhardt2019Cell}.
While exogenous cues may be affected by agents, we assume for simplicity that they are not altered by the agents and dependent only on the targets' positions and magnitudes.

While the exogenous cue is the primary information about the targets, they may decay rapidly away from the targets due to harsh environments and their instabilities, being less likely to spread over long distances.
To amplify the information of the decaying exogenous cue, cell populations often utilize secondary endogenous cues, which are generated and modulated by individual cells.
For example, neutrophils sense exogenous cues, such as fMLP, from damaged tissue, and they also actively secrete more diffusive endogenous cues, such as LTB4, packaged in vesicles or exosomes to convey the information to distant immune cells~\cite{Ng2011JID,Afonso2012DevCell,Glaser2021CurrOpiCellBiol}.
For simplicity, we here consider only a single endogenous cue, whose spatial distribution is represented by $Z_t:\mathcal{S}\to\mathbb{R}_{\geq 0}$.

Finally, we assume that the policy $\pi_t^i$ of agent $i$ at vertex $s_{t}^{i}$ and time $t$ is solely determined by the concentrations of the exogenous cue $r(s^i_t)$ at $s_{t}^{i}$ and the endogenous cue around its vicinity $(Z_t(s))_{s\in \{s^i_t\}^\cup \mathcal{N}(s^i_t)}$.
This corresponds to the situation that each agent has no internal memory and no direct interaction with the others. Yet, it can guide itself using locally sensed endogenous and exogenous cues, which effectively work as a share memory deployed in the environment.

\subsection{Characterization of optimal exploration}
While each agent lacks an internal memory, the agents may still achieve target exploration if they collectively and optimally modulate and sense the endogenous cue by utilizing the locally sensed exogenous cue.
To derive optimal degradation, production, and sensing of the endogenous cue for collective exploration,
we associate the concentration of the exogenous cue $r$ with the reward function in RL framework because reaching the targets is equivalent to reaching the place where the exogenous cue $r$ is maximum.
Then, we formulate the problem as learning the optimal policy $\pi^*=\argmax_\pi \mathcal{J}[\pi]$ that maximizes the following expected cumulative reward $\mathcal{J}[\pi]$ with discounting and control penalty:
\begin{align}
    \label{eq: cost-EMDPs}
    \mathcal{J}[\pi]&\coloneqq E\left[\sum_{t=0}^\infty\gamma^tR^\pi(s_t,s_{t+1})\right],\\
    R^\pi(s_t,s_{t+1})&=r(s_t)-\frac{1}{\beta}\log\frac{\pi(s_{t+1}|s_{t})}{p(s_{t+1}|s_t)},
    \label{eq:temp-rewpen}
\end{align}
where the expected value  $E[f(s_{0:t})]$ is the path average of $f(s_{0:t})$ on the state sequences $s_0,s_1,\ldots s_t$. The discount factor $\gamma \in [0,1)$ is an intrinsic parameter common to all agents, weighting the current reward more than the future one.
It would be natural for simple organisms with limited lifespan to prioritize the present over the distant future.
The additional term $\frac{1}{\beta}\log\frac{\pi}{p}$ is a motility control penalty for the cue-dependent policy $\pi$ to deviate from the intrinsic (stationary) transition probability $p:\mathcal{S}\times \mathcal{S}\to [0,1]$.
The intrinsic transition probability would correspond to the random motility spontaneously exhibited by cells without cues. 
This form of cost was employed as biological control costs in bacterial chemotaxis~\cite{Nakamura2022PRR} and immunological learning~\cite{KatoPRR2021}. 
Note that the inverse control weight $\beta\in(0,\infty)$ represents the ease of motility control.
The formulation in Eq.~\eqref{eq: cost-EMDPs} is equivalent to the entropy-regularized Markov decision process $(\mathcal{S},r,p,\gamma,\beta)$~\cite{Todorov2009PNAS} (Appendix.~\ref{sec:entropy-MDPs}).
Defining the state value function $V^\pi:\mathcal{S}\to\mathbb{R}$ as
\begin{align}
    \label{eq:def-value}
    V^\pi(s)&\coloneqq E\left[\sum_{t=0}^\infty\gamma^t R^\pi(s_t,s_{t+1})\relmiddle|s_0=s\right],\quad s\in\mathcal{S},
\end{align}
and the optimal state value function $V^*:\mathcal{S}\to\mathbb{R}$ as $V^*(s)\coloneqq \max_\pi V^\pi(s)$ for $s\in\mathcal{S}$,
the optimal policy $\pi^*$ can be explicitly expressed using $V^*$ as follows:
\begin{equation}
    \pi^*(s'|s)=\frac{p(s'|s)\exp(\beta \gamma V^*(s'))}{\sum_{s'\in\mathcal{S}}p(s'|s)\exp(\beta \gamma V^*(s'))},\quad s,s'\in\mathcal{S}.
    \label{eq:EMDP-greedy-policy}
\end{equation}
The functional form of optimal policy means that the agent should bias its intrinsic transition $p(s'|s)$ to the surrounding vertices with higher $V^*(s')$. 
This behavior can be interpreted as a biased random walk along the local gradient of $V^*(s')$. 
Note that $V^*$ depends on the exogenous cue and environment geometry, both of which are initially unknown to the agents.
Consequently, the agents have to learn this optimal $V^*$ by updating a provisional, non-optimal $V_t$. 
If agents encode $V_t$ through the endogenous cue concentration $Z_t$ such that $V_t(s)=\mathcal{F}[Z_t](s)$, they can update $V_t(s)$ by modulating the endogenous cue via its degradation and production whilst moving by following the suboptimal policy with $V_t$ in Eq.~\eqref{eq:EMDP-greedy-policy}.
When $\mathcal{F}[Z_t(s)]$ closely approximates  $V^*(s)$, agents  move to vertices with higher state values, thus approaching the target.

\subsection{Optimal coupling of cue sensing and modulation}
\label{sec:distZ}
The optimal cue modulation is dependent on the actual form of the mapping $\mathcal{F}$, which boils down to the optimal coupling of chemotatic sensing behavior and cue modulation as we demonstrate in this subsection.

A biologically relevant mapping is the one associating the logarithm of the endogenous cue concentration with the estimate of the optimal state value: $V_t(s)=\mathcal{F}[Z_t](s)\coloneqq\log(Z_t(s))/\beta$ for $s\in\mathcal{S}$. 
This mapping aligns with Weber–Fechner response observed across a wide range of concentrations in various chemotactic cells~\cite{Janetopoulos2004PNAS, Kalinin2009BioPhys,Lazova2011PNAS,Tu2018AnnRevCondens}.
Under this logarithmic mapping, the corresponding agents' policy is reduced to \begin{equation}
    s^i_{t+1}\sim \pi_{t}(\cdot|s^i_{t})\coloneqq \frac{p(\cdot|s^i_{t})Z^{\gamma}_{t}(\cdot)}{\sum_{s'\in\mathcal{S}}p(s'|s^i_{t})Z^{\gamma}_{t}(s')},
    \label{eq:dist-Z-learning-policy}
\end{equation}
Meanwhile, the endogenous cue $Z_t$ should be modulated by its degradation and production for learning $Z_{t}$ which better approximates $V^*$.
We consider the simplest and biologically plausible dynamics of $Z_{t}$ that approximately minimize the deviation between $V_t=\mathcal{F}[Z_t]$ and $V^*$
under the physical constraint of diffusivity of $Z_t$ (see Appendix.~\ref{sec:algorithm-derivation} for derivation):
\begin{subequations}\label{eq:dist-Z-learning-diff}
    \begin{alignat}{3}
        Z_{t+1}(s)&-Z_{t}(s)=&-&\sum_{i=1}^N\mathbbm{1}_{s^i_{t}=s}\alpha\Delta Z_{t}(s^i_t)
        \label{eq:dist-Z-diff-learn}\\
        &&-&D \sum_{s'\in \mathcal{S}}L_{ss'} Z_{t}(s'),\quad s\in\mathcal{S}
        \label{eq:dist-Z-diff-regular}\\
        \Delta Z_t(s)&\coloneqq Z_{t}(s)&-&\exp(\beta r(s))\sum_{s'\in\mathcal{S}}p(s'|s)Z^\gamma_{t}(s'),
        \label{eq:dist-Z-diff-TD}
    \end{alignat}    
\end{subequations}
where $\mathbbm{1}_{s^i_{t}=s}$ is an indicator function that takes $1$ if agent $i$ is at vertex $s$ at time $t$, and $0$ otherwise; $\alpha\in(0,1/N)$ is the speed of the endogenous cue degradation and production by agents; and $L$ is the graph Laplacian of the environment $\mathcal{G}$.
In Eq.~\eqref{eq:dist-Z-diff-learn}, each agent at $s^i_t$ individually modulates $Z_{t}(s^i_t)$ by  $\alpha \Delta Z_t(s^i_t)$, which is dependent exponentially on the locally sensed $r$ and nonlinearly on surrounding $Z_t$.
In addition, $Z_t$ diffuses to neighbors independently of the agents at a constant rate $D$
(Figure~\ref{fig:landscape} (b)).

This learning dynamics is a distributed~\cite{Mnih2016PMLR} and sequentially regularized version of greedy Z-learning~\cite{Todorov2009PNAS}, which have two advantages from both biological and learning theory perspectives:
First, the optimal behavior learned by the dynamics accommodates the cost of agent's motility control as KL divergence from the intrinsic motility (Eq.~\eqref{eq:temp-rewpen}).
Second, it allows learning of the optimal policy~\eqref{eq:EMDP-greedy-policy} through a sequential process whilst maintaining an exploratory policy~\eqref{eq:dist-Z-learning-policy} like the Boltzmann policy. 
We refer to this coupling of cue sensing~\eqref{eq:dist-Z-learning-policy} and modulation~\eqref{eq:dist-Z-learning-diff} as the log-exp coupling model, from the biological interpretation of its continuous limit~\eqref{eq:log-continuous} obtained in the next subsection.

\subsection{Continuous limit of optimal cue sensing and modulation}
To demonstrate the biological relevance of the policy~\eqref{eq:dist-Z-learning-policy} and cue dynamics~\eqref{eq:dist-Z-learning-diff}, we derive its continuous limit, which is structurally equivalent to the well-established Keller-Segel models for collective chemotaxis~\cite{KELLER1971225,Tindall2008BullMathBiol,Painter2019JTB}.

For simplicity, we consider a one-dimensional lattice $\mathcal{S}=S^h\coloneqq \{0,\pm h,\pm 2h,\ldots\}$ with a uniform spacing $h$ and define the agents' intrinsic transition probability $p^h:S^h\times S^h\to[0,1]$ as a lazy random walk, such that $p^h(x|x)=\varepsilon^h(x)=O(h),\ p^h(x\pm h|x)=(1-\varepsilon^h(x))/2$.
Under an appropriate time scale and in the limit of infinite agents, the continuous space-time limits $h\to 0$ of the log-exp coupling models becomes the following pair of one-dimensional partial differential equations (derivation is provided in Appendix~\ref{sec:continuous-limit}):
\begin{subequations}               
    \label{eq:log-continuous}
    \begin{align}
        \pt_t\mu&=-\nabla\cdot\left(\mu\gamma \sigma^2 \nabla \log Z\right)+\frac{\sigma^2}{2}\nabla^2\mu,\\
        \pt_t Z&=-\left[\alpha \mu Z\right]+\left[\alpha\mu \exp(\beta r) Z^\gamma\right]+D\sigma^2\nabla^2 Z,
    \end{align}
\end{subequations}
where $\mu(t,x)\in\mathbb{R}_{\geq 0}$ is the agents density, and $Z(t,x)\in\mathbb{R}_{\geq 0}$ and $r(x)\in\mathbb{R}_{\geq 0}$ are the endogenous and exogenous cue concentrations, respectively. 
The parameter $\sigma \in \mathbb{R}_{\geq 0}$ is the randomness of the agents' motility, determined by the scaling of time and space. 

Equation~\eqref{eq:log-continuous} is structurally homologous to the Keller-Segel models~\cite{KELLER1971225,Tindall2008BullMathBiol,Painter2019JTB}, which was originally introduced to model microbial populations that migrate by senging and degrading an attractant cue:
\begin{subequations}
    \label{eq:KS}
    \begin{align}
        \pt_t n&=-\nabla \cdot \left(\chi n\nabla \log s\right)+D_n\nabla ^2n,\label{eq:KS-popu}\\
        \pt_t s&=-\left[\eta ns\right] + \left[-\kappa s\right]+D_s\nabla^2 s\label{eq:KS-chem},
    \end{align}
\end{subequations}
where $n(t,x)$ and $s(t,x)$ denote the microbial density and the attractant concentration, respectively. 
The parameter $D_n \in \mathbb{R}_{\geq 0}$ represents the randomness of the microbes, and $D_s \in \mathbb{R}_{\geq 0}$ is the diffusion coefficient of the attractant.
Apart from linear diffusion terms appearing in both equations, Eqs.~\eqref{eq:log-continuous} and \eqref{eq:KS} share the logarithmic sensing terms, $\nabla \log Z$ and $\nabla \log s$ and the density-dependent linear degradation term, $-\left[\alpha \mu Z\right]$ and $-\left[\eta ns\right]$.
The original Keller-Segel model does not include the production term, and its modified models introduced the term heuristically without considering the correspondence with the sensing term~\cite{Newman2004PRE,Pezzotta2018PRE}. 
In contrast, the production term we derived, $\left[\alpha\mu \exp(\beta r) Z^\gamma\right]$, has an explicit meaning as optimally coupled cue production with logarithmic sensing.
Notably, our production term depends on the endogenous cue concentration itself, which aligns with the signal relay mechanism in neutrophil swarming~\cite{Strickland2024DevCell}

\begin{table*}[tbp]
    \label{table:cont-comparison}
    \centering
    \begin{tabular}{|c c||c|c|c|}
        \hline & & Keller-Segel model& 
        log-exp coupling
        &
        lin-lin coupling
        \\
        \hline\hline 
        \begin{tabular}{c}
            Mapping from \\
            concentration to value 
        \end{tabular}
        & $\mathcal{F}[Z]$ & - & $\log(Z)/\beta $ & $Z/\beta $\\
        Chemotactic response & $\psi(\nabla Z,\nabla \log Z)$ & $\chi \nabla \log Z$ & $\gamma \sigma^2 \nabla \log Z$ & $\gamma \sigma^2 \nabla  Z$ \\
        Degradation & $H(\mu,Z)$ & $ (\eta\mu+\kappa)Z$ & $\alpha \mu Z$ & $ \alpha (1-\gamma)\mu Z$ \\
        Production & $G(\mu,Z,r)$ & (in extensions, $\mu r$) & $ \alpha\mu \exp(\beta r) Z^\gamma$ & $ \alpha \beta \mu r$ \\
        \hline
    \end{tabular}
    \caption{Component-wise comparison via \eqref{eq:propose-model-cont} of the continuous limits of the two learning dynamics with the classical Keller-Segel equation~\eqref{eq:KS}. The mapping $\mathcal{F}[Z]$ associates the endogenous cue concentration with the estimated state value. The term $\psi(\nabla Z,\nabla\log Z)$ represents the chemotactic response to endogenous cue concentration, and $H(\mu,Z)$ and $G(\mu,Z,r)$ do the degradation and production of the endogenous cue, respectively.
    Phenomelogical meanings of parameters are chemotactic coefficient $\chi$, cellular degradation rate $\eta$, and decay rate $\kappa$. 
    Computational meanings of parameters are inverse control weight $\beta$, discount factor $\gamma$, agent's motility $\sigma$ and learning rate $\alpha$.
    }
\end{table*}

The optimality of Keller-Segel dynamics was highlighted in Pezzotta \textit{et al.}~\cite{Pezzotta2018PRE} where, by using optimal control theory, they demonstrated that a cell population ascending an optimized endogenous cue gradient can be interpreted as an optimally controlled agent population following the optimal state value gradient. Yet, the production of the cue in their model was not theoretically principled. 
Our result extends their approach to reinforcement learning theory by considering an agent population where each agent ascends the endogenous cue gradient whilst cooperatively updating the endogenous cue (the estimated state value) based on the sensed exogenous cue.
This extension encompasses not only the optimal behaviors but also the learning process toward optimality, enabling the derivation of the optimal relationship between sensing and modulation of the endogenous cue gradient.

\subsection{Another optimal coupling}
\label{sec:distG}
The log-exp coupling is obtained by assuming the logarithmic mapping of $Z_{t}$ to $V_{t}$. 
We can obtain another coupling by considering a different mapping, which could expand the applicability of our theory to diverse biological situations. 

Another biologically relevant mapping is the one that linearly associates the endogenous cue concentration with the estimate of the state value $\mathcal{F}[Z_t](s)\coloneqq Z_t(s)/\beta$.
This mapping aligns with the absolute concentration sensing observed in various chemotactic behaviors~\cite{Kalinin2009BioPhys,Lazova2011PNAS,Fuller2010PNAS,Perelson2018}.
The corresponding policy becomes  
\begin{equation}
    s^i_{t+1}\sim \pi_{t}(\cdot|s^i_{t})\coloneqq \frac{p(\cdot|s^i_{t})\exp(\gamma Z_t(\cdot))}{\sum_{s'\in\mathcal{S}}p(s'|s^i_{t})\exp(\gamma Z_t(s'))}.
    \label{eq:dist-G-learning-policy}
\end{equation}
Similarly to the log-exp coupling, the optimal cue modulation dynamics is obtained (see Appendix.~\ref{sec:algorithm-derivation} for derivation):
\begin{subequations}\label{eq:dist-G-learning-diff}
    \begin{alignat}{3}
        Z_{t+1}(s)&-Z_{t}(s)=&-&\sum_{i=1}^N\mathbbm{1}_{s^i_{t}=s}\alpha\Delta Z_{t}(s^i_t)
        \notag\\
        &&-&D \sum_{s'\in \mathcal{S}}L_{ss'} Z_{t}(s'),\quad s\in\mathcal{S},\\
        \Delta Z_t(s)&\coloneqq Z_{t}(s)&-&\beta r(s)-\ln\sum_{s'\in\mathcal{S}}p(s'|s)\exp(\gamma Z_{t}(s')).
        \label{eq:dist-G-diff-TD}
    \end{alignat}
\end{subequations}
This learning dynamics is equivalent to a distributed version of greedy G-learning~\cite{Fox2016} with a sequential regularization.

The continuous limit of Eqs.~\eqref{eq:dist-G-learning-policy} and \eqref{eq:dist-G-learning-diff} is obtained (See Appendix~\ref{sec:continuous-limit} for derivation) as 
\begin{subequations}    \label{eq:lin-continuous}
    \begin{align}
        \pt_t\mu&=-\nabla\cdot\left(\mu\gamma \sigma^2 \nabla  Z\right)+\frac{\sigma^2}{2}\nabla^2\mu,\\
        \pt_t Z&=-\left[\alpha (1-\gamma)\mu Z\right]+\left[\alpha \beta \mu r \right]+D\sigma^2\nabla^2 Z.
    \end{align}
\end{subequations}
 Since the cue sensing and production are linear with respect to $Z$ and $r$, respectively, we refer to this coupling of sensing ~\eqref{eq:dist-G-learning-policy} and modulation~\eqref{eq:dist-G-learning-diff} as the linear-linear (lin-lin) coupling.
 Although production term is not included in the classical Keller-Segel model~\eqref{eq:KS-chem}, the linear production term $\alpha\beta\mu r$ of the lin-lin coupling model~\eqref{eq:lin-continuous} corresponds to the production term $\mu r$ of heuristically extended models~\cite{Newman2004PRE,Pezzotta2018PRE}.

By introducing chemotactic response function $\psi(\nabla Z, \nabla \log Z)$, cue-degradation rate function $H(\mu, Z)$, and cue-production function $G(\mu, Z, r)$, the log-exp couling, lin-lin coupling, and Keller-Segel models are integratively described as 
\begin{subequations}    \label{eq:propose-model-cont}
    \begin{align}
        \pt_t\mu&=-\nabla\cdot\left(\mu\psi(\nabla Z,\nabla \log Z)\right)+\frac{\sigma^2}{2}\nabla^2\mu,
        \label{eq:propose-model-cont-cell}\\
        \pt_t Z&=-H(\mu,Z)+G(\mu,Z,r)+D\sigma^2\nabla^2 Z,
        \label{eq:propose-model-cont-attractant}
    \end{align}
\end{subequations}
where the functional forms of $\psi(\nabla Z, \nabla \log Z)$, $H(\mu, Z)$, and $G(\mu, Z, r)$ are summarized in TABLE I.
From this table, we can identify the correspondences of two couplings with phenomenological Keller-Segel models.

Then, the next question would be how the optimal couplings affect the efficiency of the collective exploration behaviors by agents. 
Through numerical experiments in the following sections~\ref{sec:demostration} and \ref{sec:optimality}, we demonstrate the importance of the coupling between cue sensing and modulation in achieving optimal migration.

\section{Numerical experiments}
\label{sec:numexp}
In this section, we conduct numerical experiments using the derived learning dynamics to evaluate their behaviors and optimality in biologically relevant problems.

\subsection{Maze-solving problem and impact of optimal coupling}
\label{sec:demostration}
Biological systems often encounter exploration problems in complex environments. 
For instance, neutrophils detect and accumulate at infected tissues~\cite{Kolaczkowska2013NatRevImm}, and germ cells migrate to the gonad in embryos ~\cite{Renault2010Dev}.
Navigation in a complex environment can be framed as maze-solving, and various cells were experimentally verified to navigate themselves in artificial mazes~\cite{Scherber2012IntegrBiol,Tweedy2020Science, Phan2020PRX}.
In this section, we employ a complex maze for investigating the properties and optimality of the derived learning dynamics.

We adopted a maze with $|\mathcal{S}|=274$ vertices shown in Fig.~\ref{fig:distZ-diff_maze_onetar_ZZ-ZV}, which models the maze used in an experimental work for maze solving by \textit{Dictyostelium}~\cite[Figure 6]{Tweedy2020Science}.
In the following numerical experiments, we assume a lazy random walk for the intrinsic dynamics $p$ of agents: $p(s|s) = \varepsilon$ and $p(s'|s) = (1-\varepsilon)/|\mathcal{N}(s)|$, where $s' \in \mathcal{N}(s)$.
The parameter $\varepsilon \in [0,1)$ represents the sluggishness of migration, reflecting that the migration speed varies across species, even in the same geometry~\cite{Tweedy2020Science}.
The exogenous cue concentration $r$ is assumed to be high only at a goal (target) vertex $s^\mathrm{goal}\in\mathcal{S}$ and uniformly low at the other vertices
: $r(s^\mathrm{goal})=r_\mathrm{target}$ and $r(s)=r_\mathrm{default}$ for $s\neq s^\mathrm{goal}$.
This represents that the exogenous cue is produced only at $s^\mathrm{goal}$ and decays rapidly away from there.
Initially, the endogenous cue is uniformly distributed ($Z_0(s)= Z_0$ for $s\in\mathcal{S}$), and all $N$ agents start exploring from the vertex $s^i_0=s^\mathrm{start}\in\mathcal{S}$.
Then, agents collectively modify the endogenous cue concentration to reach the goal $s^\mathrm{goal}$ (Figure~\ref{fig:distZ-diff_maze_onetar_ZZ-ZV}).

\begin{figure*}[tbp]
    \hfill
    \begin{minipage}[b]{\linewidth}
        \centering
        \includegraphics[width=\textwidth]{./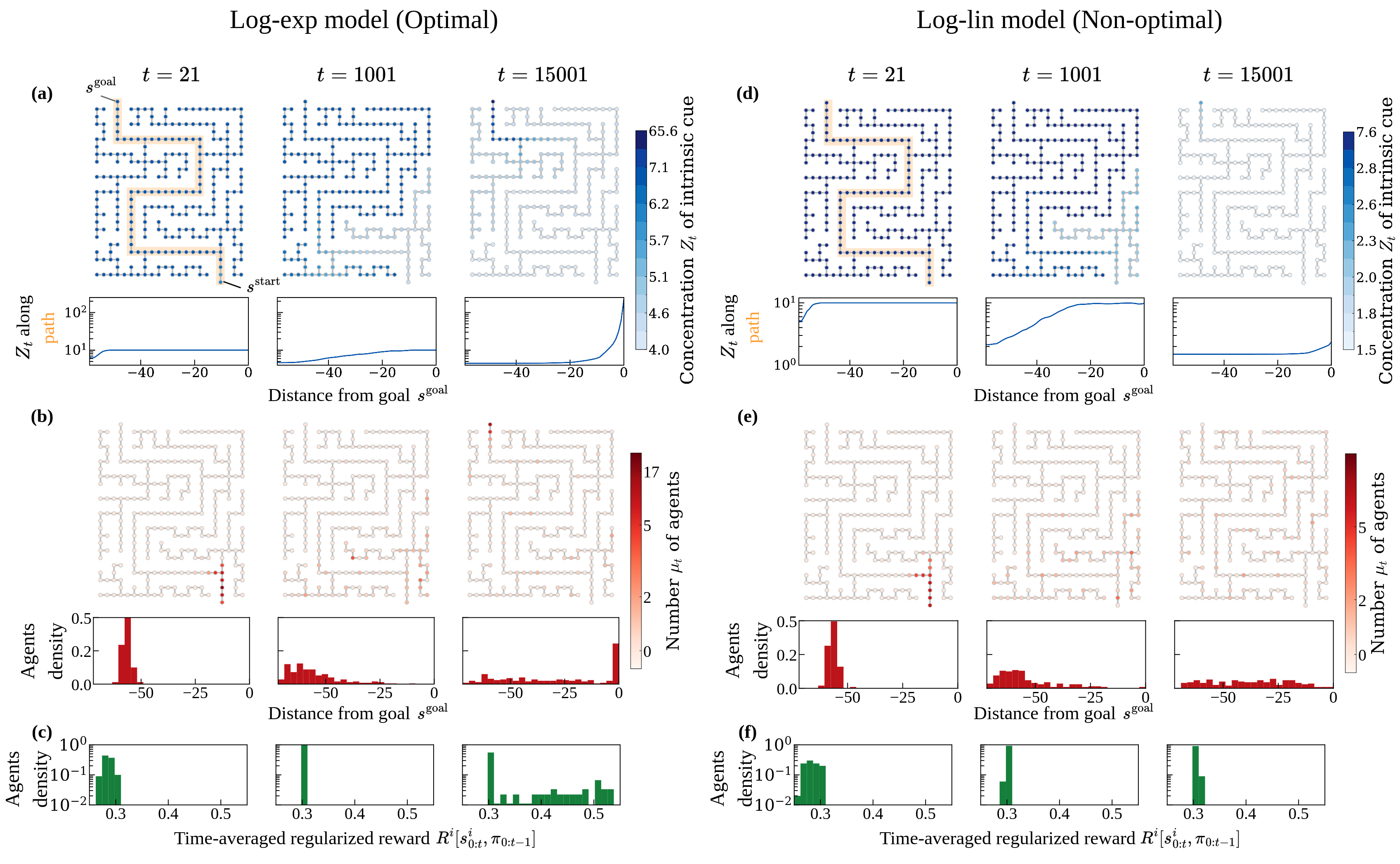}
    \end{minipage}
    \hfill
    \caption{
    Snapshots at $t=21,\, 1001,\, 15001$ for one trial of the maze exploration by the optimal log-exp coupling dynamics~\eqref{eq:dist-Z-learning-policy}, \eqref{eq:dist-Z-learning-diff} and the non-optimal log-lin coupling model~\eqref{eq:dist-Z-learning-policy}, \eqref{eq:dist-G-learning-diff}. 
    The distributions of the endogenous cue $Z_t$ are shown on the maze (upper panel) and on the path from $s^\mathrm{start}$ to $s^\mathrm{goal}$ (lower panel) for the log-exp (a) and the log-lin (d) couplings. 
    The path is highlighted with orange background in the maze and the distribution along the path is shown as a function from the goal $s^\mathrm{goal}$.
    The agent distribution $\mu_t:\mathcal{S}\to\{0,1,\ldots,N\}$ defined as $\mu_t(s)\coloneqq\sum_{i=1}^N\mathbbm{1}_{s_t^i=s}$ is similarly shown for the log-exp (b) and the log-lin (e) couplings. 
    Time-averaged regularized reward distribution $(R^i[s^i_{0:t},\pi_{0:t-1}])^N_{i=1}$ for the log-exp (c) and the log-lin (f) couplings.
    The vertices are color-coded by the value of cue concentration in (a) and (c) and that of agent density in (b) and (e).
    The parameter values used are $\alpha=0.0098$, $N=100$, $D=0.01$, $\gamma=0.8$, $Z_0=10.0$, $\varepsilon=0.5$, $\beta$, $r_\mathrm{target}=1.5$, $r_\mathrm{default}=0.3$.
    }
    \label{fig:distZ-diff_maze_onetar_ZZ-ZV}
\end{figure*}

Figure~\ref{fig:distZ-diff_maze_onetar_ZZ-ZV} (a), (b), and (c) show temporal snapshots of the concentration distribution $Z_t$ (a), agent distribution $\mu_t$ (b), and time-averaged regularized reward distribution $(R^i[s^i_{0:t},\pi_{0:t-1}])^N_{i=1}$ (c) for one trial of the log-exp coupling dynamics~\eqref{eq:dist-Z-learning-policy} and \eqref{eq:dist-Z-learning-diff}.
The time-averaged regularized reward for agent $i$ at time $t$ is defined as
\begin{equation}
    \label{eq:cum-reg-reward}
    R^i[s^i_{0:T},\pi_{0:T-1}]\coloneqq \frac{1}{T}\sum_{t'=0}^{T-1}\left(r(s^i_{t'})-\frac{1}{\beta}\log\frac{\pi_{t'}(s^i_{t'+1}|s^i_{t'})}{p(s^i_{t'+1}|s^i_{t'})}\right),
\end{equation}
which measures the difference between the cumulative exogenous cue concentration sensed by the agent and the motility control cost incurred by the agent from the start to time $T$ along the individual state history.
While the time-averaged reward $R^i[s^i_{0:T},\pi_{0:T-1}]$ assigns less weight on early rewards than the original discounted cumulative reward $\mathcal{J}[\pi]$~\eqref{eq: cost-EMDPs}, the maximizer of $R^i[s^i_{0:T},\pi_{0:T-1}]$ can still correspond to the maximizer of $\mathcal{J}[\pi]$ under certain conditions (c.f.~\cite[Sec. 8.4]{PutermanMDP1994}).

We observe that the agents expand their exploration area by dispersing at branching vertices and shaping a gradient toward unexplored directions via degradation of the endogenous cue  (Fig.~\ref{fig:distZ-diff_maze_onetar_ZZ-ZV} (a) and (b) at $t=21$ and $t=1001$).
Upon reaching $s^\mathrm{goal}$, they produce the endogenous cue, reinforcing the gradient toward $s^\mathrm{goal}$.
Eventually, the gradient aligns from $s^\mathrm{start}$ to $s^\mathrm{goal}$ and major fraction of agents accumulate near $s^\mathrm{goal}$ (Fig.~\ref{fig:distZ-diff_maze_onetar_ZZ-ZV} (a) and (b) at $t=15001$).
As a result, even in a complex environment, the agents can accumulate at $s^\mathrm{goal}$ through the endogenous cue degradation and production, resulting in higher $R^i$ for many agents (Fig.~\ref{fig:distZ-diff_maze_onetar_ZZ-ZV} (b) and (c) at $t=15001$).
These behaviors are qualitatively consistent with experimentally observed phenomena, such as exploration via endogenous cue degradation \cite{Insall2022TrdCellBiol} and accumulation at target through endogenous cue production~\cite{Oliveira2016NatRevImm}.
Similar behaviors are also observed in the optimal lin-lin coupling dynamics~\eqref{eq:dist-G-learning-policy}, \eqref{eq:dist-G-learning-diff} (see Fig.~\ref{fig:distZ-diff_maze_onetar_VV-VZ}).

To evaluate the importance of the optimal coupling of the cue sensing and modulation, we investigated an agent population with a non-optimal coupling of the logarithmic sensing policy~\eqref{eq:dist-Z-learning-policy} and the linear cue production~\eqref{eq:dist-G-learning-policy}, i.e., the log-lin coupling.
Figure ~\ref{fig:distZ-diff_maze_onetar_ZZ-ZV} (d) and (e) show that the agents with the log-lin coupling disperse, degrade the endogenous cue, and generate a cue gradient toward $s^\mathrm{goal}$ up to around $t=1001$, similarly to the optimal log-exp coupling.
However, the generated endogenous cue gradient is shallow and most agents fail to accumulate at the target even after the cue concentration has almost reached the steady state, resulting in a low $R^i$ for most agents (Fig.~\ref{fig:distZ-diff_maze_onetar_ZZ-ZV} (d)-(f) at $t=15001$).
For another non-optimal coupling of linear sensing policy~\eqref{eq:dist-G-learning-policy} and exponential production~\eqref{eq:dist-Z-learning-diff}, we observed an overaccumulation (see Figure~\ref{fig:distZ-diff_maze_onetar_VV-VZ}).
These results highlight the importance of optimal coupling between cue sensing and modulation for achieving efficient exploration and navigation in a complex environment.



\subsection{Coupling optimality and its consequence}
\label{sec:optimality}
To scrutinize how the optimality of coupling affects the agent's behaviors, we compare the steady states in a simpler one-dimensional environment for the four couplings; Each coupling is obtained by combinatorially pairing sensing~\eqref{eq:dist-Z-learning-policy} and \eqref{eq:dist-G-learning-policy}, with modulations~\eqref{eq:dist-Z-learning-diff} and \eqref{eq:dist-G-learning-diff}.
The environment is a graph with $20$ vertices $\mathcal{S}=\{s^0,s^1,\ldots,s^{19}\}$ connected in a chain, where a target locates at the vertex $s^{19}$ (Fig.~\ref{fig:comparison-steady} (a)).
We also modulate the relative importance of the reward and the control cost by changing the parameter $\beta$.

\begin{figure*}[tbp]
    \hfill
    \begin{minipage}[b]{\linewidth}
        \centering
        \includegraphics[width=\textwidth]{./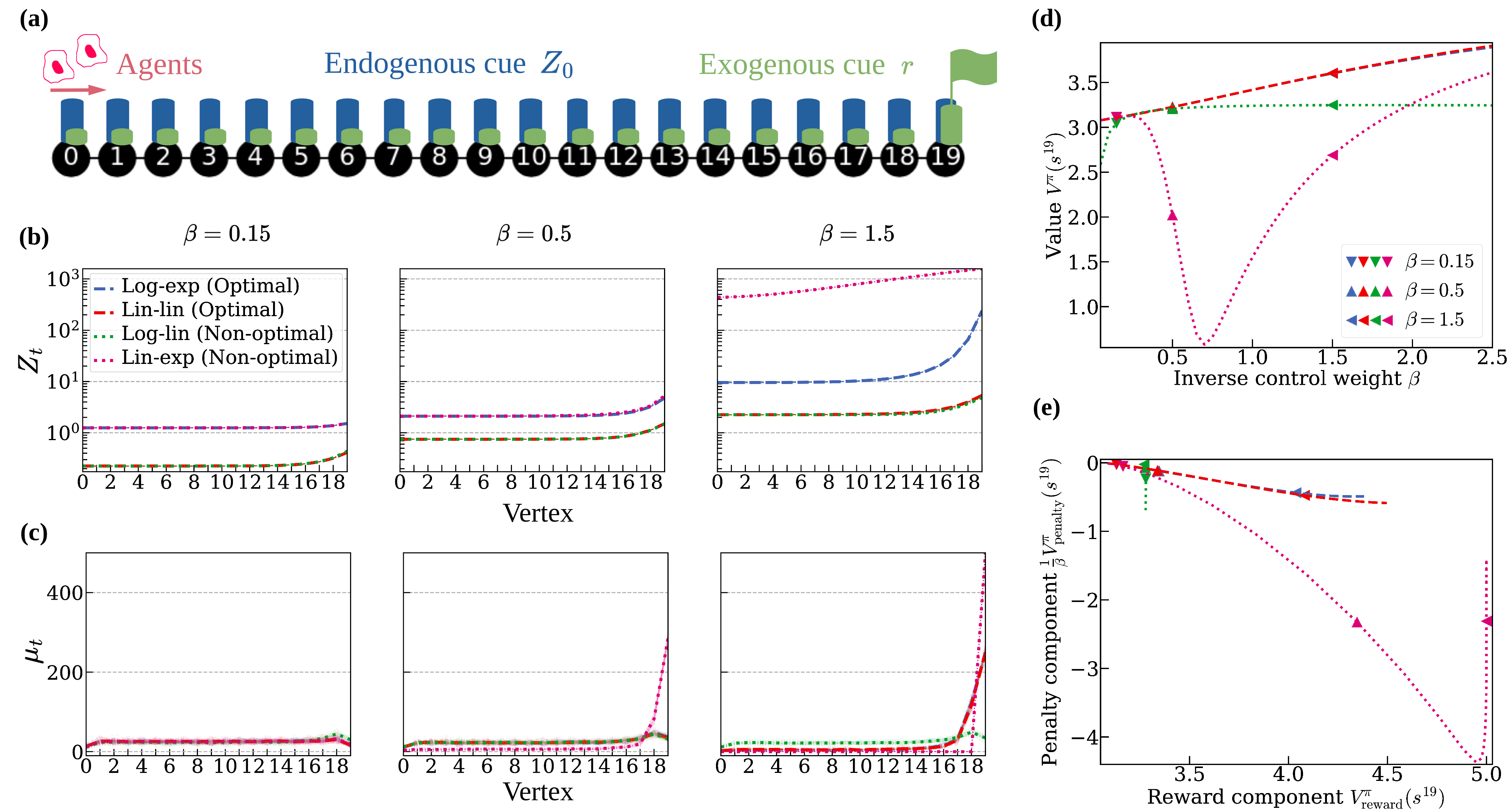}
    \end{minipage}
    \hfill
    \caption{
    (a) An environment represented by a graph with 20 vertices $\mathcal{S}=\{s^0,s^1,\ldots,s^{19}\}$ connected in a chain. 
    All agents start from vertex $s^{0}$, and the target is located only at vertex $s^{19}$. 
    The exogenous cue concentration (reward) is set $r(s^{19})=r_\mathrm{target}=1.0$ and $r(s)=r_\mathrm{default}=0.3$ for $s\neq s^{19}$. 
    (b,c) Endogenous cue concentration $Z_T$ (b) and agent distribution $\mu_T$ (c) at $T=8000$ for $50$ trials with $\beta\in\{0.15,0.5,1.5\}$ for the optimal and non-optimal couplings.
    The optimal couplings are log-exp coupling~\eqref{eq:dist-Z-learning-policy}, \eqref{eq:dist-Z-learning-diff} and lin-lin coupling~\eqref{eq:dist-G-learning-policy}, \eqref{eq:dist-G-learning-diff}; The non-optimal couplings are log-lin coupling~\eqref{eq:dist-Z-learning-policy}, \eqref{eq:dist-Z-learning-diff}, lin-exp coupling~\eqref{eq:dist-G-learning-policy}, \eqref{eq:dist-G-learning-diff}.
    Each trial's $Z_T$ and $\mu_T$ are shown with thin solid curves, and the trial averages $\tilde{Z}_T$ and $\tilde{\mu}_T$ are shown with thick solid curves.
    (d) State value $V^\pi(s^{19})$ at vertex $s^{19}$ for each policy $\pi[\tilde{Z}_T]$ based on the trial-averaged steady concentration $\tilde{Z}_T$ for the four couplings with $\beta\in[0.25,2.5]$. 
    (e) The relationship between the reward component $V^\pi_\mathrm{reward}(s^{19})$ and penalty component $\frac{1}{\beta}V^\pi_\mathrm{penalty}(s^{19})$ of $V^\pi(s^{19})$ for the four couplings with $\beta\in[0.25,2.5]$.
    The parameters values are $N=500,\varepsilon=0.5,\alpha=0.00196,D=0.01,\gamma=0.8,\beta\in[0.25,2.65]$.
    }
    \label{fig:comparison-steady}
\end{figure*}

For the optimal couplings, i.e., log-exp and lin-lin couplings, the steady agent distributions $\mu_T$ are almost identical while the steady cue distribution $Z_T$ are different, for each $\beta$ (Fig.~\ref{fig:comparison-steady} (c)). 
The agents accumulate more at the target as $\beta$ increases, indicating that accumulation is encouraged if the control cost becomes lower and the agents can deviate more from the intrinsic lazy random walk.

However, for the non-optimal couplings, i.e., log-lin and lin-exp couplings, dispersal and overaccumulation in $\mu_T$ were observed, respectively (Fig.~\ref{fig:comparison-steady} (b)), which are consistent with the results in the maze environment (Fig.~\ref{fig:distZ-diff_maze_onetar_ZZ-ZV} (e)).
For the log-lin coupling, the agents remain dispersed while the cue distribution is almost identical to that of the optimal lin-lin coupling. 
Because the logarithmic sensing requires a steeper gradient than linear one to achieve the same chemotactic behavior, the agents with log-lin coupling were less sensitive to the gradient than lin-lin coupling and thus failed to accumulate at the target.
For the lin-exp coupling, agents overaccumulated at the target compared with the optimal log-exp coupling when $\beta$ is high.
As linear sensing is more sensitive than logarithmic sensing especially when the basal cue concentration is high, this oversensitivity results in the overaccumulation. 
These results demonstrate that the disruption of the optimal coupling substantially alters the optimal exploration and exploitation behavior of agents.

To evaluate the optimality of the four couplings in the steady states more quantitatively, 
we compare the state value function
$V^\pi$~\eqref{eq:def-value} for each policy $\pi_T$, along with its reward component $V^\pi_\mathrm{reward}$ and the penalty (control cost) component $V^\pi_\mathrm{penalty}$.
Here we approximate the fluctuating steady concentration $Z_T$, which determines $\pi_T$, with the trial-average $\tilde{Z}_T$, and define 
$V^\pi_\mathrm{reward}(s)\coloneqq
E\left[\sum_{t=0}^\infty\gamma^t r(s_t)\relmiddle|s_0=s\right]$ ,
and $V^\pi_\mathrm{penalty}(s)\coloneqq-E\left[\sum_{t=0}^\infty\gamma^t\log\pi(s_{t+1}|s_{t})/p(s_{t+1}|s_{t})\relmiddle|s_0=s\right]$.
These values are obtained by value iteration method (Appendix.~\ref{sec:entropy-MDPs}), and additional state values are provided in Figure~\ref{fig:comparison-steady-supplement}.

The state value $V^\pi$, the optimization metric, is larger for the optimal log-exp and lin-lin couplings than for the non-optimal log-lin and lin-exp couplings at any $\beta$ (Figure~\ref{fig:comparison-steady} (d)). 
The reward component $V^\pi_\mathrm{reward}$ and the penalty component $V^\pi_\mathrm{penalty}/\beta$ of $V^\pi$ are balanced for the optimal couplings (Figure~\ref{fig:comparison-steady}(e)) such that $V^\pi_\mathrm{reward}$ and $V^\pi_\mathrm{penalty}/\beta$ trade off along the change in $\beta$.
Despite the different shapes of $Z_T$ (Figure~\ref{fig:comparison-steady}(b)), the two optimal couplings exhibit almost identical values of $V^\pi$ and achieve the same balancing curve of $V^\pi_\mathrm{reward}$ and $V^\pi_\mathrm{penalty}/\beta$ for all $\beta$.
This is a natural consequence of the fact that the two optimal couplings solve the same optimization problem, while their representations of $V$ via $Z$ are different.

In constrast, 
we observe low $V^\pi$ and imbalances between $V^\pi_\mathrm{reward}$ and $V^\pi_\mathrm{penalty}/\beta$ for the non-optimal couplings.
$V^\pi_\mathrm{reward}$ of the non-optimal log-lin coupling remains constant regardless of $\beta$, while the absolute value of $V^\pi_\mathrm{penalty}/\beta$ increases as $\beta$ decreases (Figure~\ref{fig:comparison-steady}(e)).
This indicates that agents dynamics are not navigated appropriately to exploit $V^\pi_\mathrm{reward}$ due to the non-optimal coupling.
For the non-optimal lin-exp coupling, $V^\pi_\mathrm{reward}$ is far below that of the optimal coupling and changes non-monotonously as a function of $\beta$ (Fig.~\ref{fig:comparison-steady}(d)).
Figure~\ref{fig:comparison-steady}(e) indicates that the lin-exp coupling incurs a greater penalty than the optimal ones to gain the same value of $V^\pi_\mathrm{reward}$ (Fig.~\ref{fig:comparison-steady}(e)).
This imbalance between $V^\pi_\mathrm{reward}$ and $V^\pi_\mathrm{penalty}/\beta$ results in a deficiency in overall performance.
This analysis verifies that non-optimal couplings are indeed inferior in terms of optimization for accumulating at targets, highlighting the importance to optimally couple cue sensing and modulation.

These results clarify that the optimal endogenous cue modulation qualitatively differs depending on the gradient sensing characteristics:
the logarithmic sensing policy optimally couples with production that is exponentially dependent on exogenous cue concentration, whereas the linear sensing policy optimally couples with production that is linearly dependent on the concentration.
These two are almost identical in terms of the optimization criterion, and there may exist other optimal couplings, corresponding to different choices of the mapping $\mathcal{F}[Z]$.
This flexibility in the mapping allows optimization-based modeling to incorporate biological details, thereby expanding its applicability. 

\subsection{Comparison with a single-agent system}
\label{sec:single-agent}
We have analyzed distributed learning in which agents with limited memory share and learn environmental information cooperatively through the endogenous cue to search for targets.
However, there exist higher organisms that have sufficient memory and computational capacity as a single agent to store and learn environmental information within each individual's internal model~\cite{Behrens2018Neuron}.
An unsolved issue in the evolution of biological information processing is the fitness advantages and disadvantages of distributed information processing by agents with limited computational capacity compared with the centralized one by a single smart agent with sufficient capacity~\cite{Vining2019RSocB,Sole2019RSocB,Navas2022TrendsCogSci}.
Even if both systems achieve the same objective, metrics such as robustness can depend on the system's architecture, and their comparative advantages can vary based on these metrics~\cite{Moses2019FrontImmuno} (c.f. No free lunch theorem for search~\cite{Wolpert1995SFI}).
We address this problem by leveraging the fact that both of our distributed model and the usual single-agent learning model are based on the same RL framework.

Specifically, we consider a single agent with an internal memory to store the estimate of state value function $Z_t:\mathcal{S}\to\mathbb{R}_{\geq 0}$ at each time step $t=0,1,\ldots$.
The agent has a state $s_t\in\mathcal{S}$ and transits between the states based on $Z_t$ following logarithmic sensing policy~\eqref{eq:dist-Z-learning-policy}, and update $Z_{t}$ as
\begin{equation}
    Z_{t+1}(s)-Z_{t}(s)=-\mathbbm{1}_{s_{t}=s}\alpha'\Delta Z_{t}(s),
    \label{eq:single-Z-learning-diff}
\end{equation}
which is nearly identical to the exponential cue production~\eqref{eq:dist-Z-learning-diff} with $N=1$, except the diffusion term~\eqref{eq:dist-Z-diff-regular}.
This update rule is also identical to the standard greedy Z-learning~\cite{Todorov2009PNAS}.
For comparison, the learning rate $\alpha'\in(0,1)$ for the single agent is set to $\alpha'=\alpha N$, indicating that it has the same learning capability as the entire agent population.

\begin{figure}[tbp]
    \hfill
    \begin{minipage}[b]{\linewidth}
        \centering
        \includegraphics[width=\textwidth]{./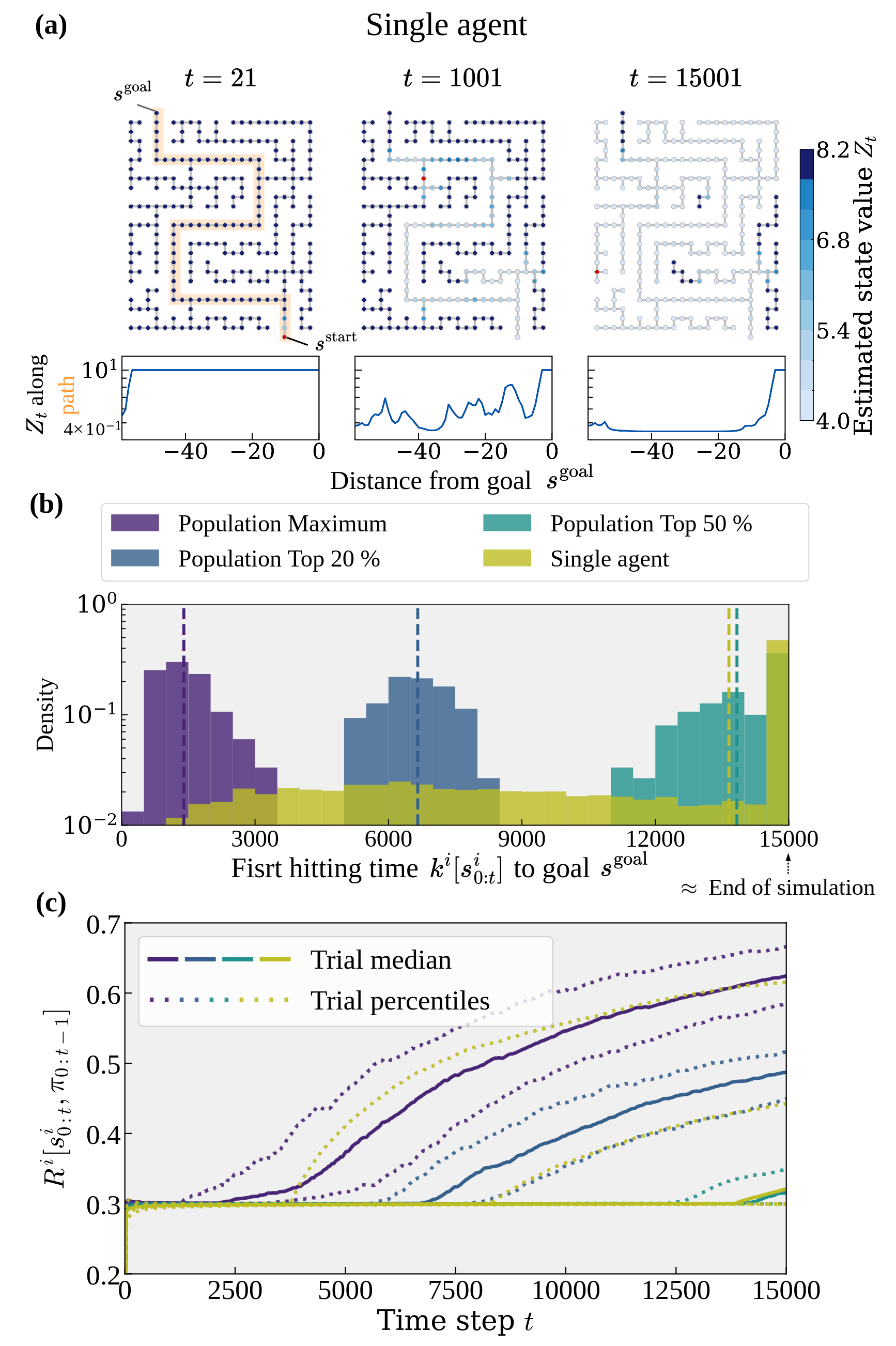}
    \end{minipage}
    \hfill
    \caption{
    (a) Snapshots of the estimated state value $Z_t$ at $t=21,1001,15001$ for one trial by a single smart agent with policy~\eqref{eq:dist-Z-learning-policy} and update rule~\eqref{eq:single-Z-learning-diff}. 
    The upper panels show $Z_{t}$ and the agent's location $s_t$ on the maze.
    The vertex $s_t$, where the agent is located, is marked in red, while the other vertices $s\neq s_t$ are color-coded based on $Z_{t}(s)$.
    The lower panels show the values of $Z_{t}$ along the path highlighted by orange as a function of the distance from the goal $s^\mathrm{goal}$.    
    (b) Distribution of the first hitting time to $s^\mathrm{goal}$ computed from multiple trials by the single agent (yellow), and by the agent at the maximum (purple), $20$th percentile (blue), and $50$th percentile (green) of the population.
    The first hitting time $k^i[s^i_{0:T}]$ is calculated as 
    $k^i[s^i_{0:T}]=\min\left\{t\in\{1,2,\ldots,T\}\relmiddle| s^i_t=s^\mathrm{goal}\right\}$ if $s^\mathrm{goal}\in\{s^i_t\}_{t=0:T}$; otherwise, $k^i[s^i_{0:T}]=T$.
    Each dashed line indicates the median value in the distributions. 
    (c) Temporal variation of the time-averaged regularized reward $R^i[s^i_{0:t},\pi_{0:t-1}]$ of the single agent (yellow) and of the agent at the maximum (purple), the $20$th percentile (blue), and $50$th percentile (green) of the population.
    Each solid line represents the trial median at each time step.
    Dashed lines represent the trial $10,30,70,90$th percentiles for the single agent (yellow), and the trial $10,90$th percentiles for each population percentiles (purple, blue, and green).
    The parameters used are $T=15001$, $(\alpha, N, D,\# \mathrm{trials})=(0.0098, 100, 0.01,150)$ for the agent population, and $(0.98, 1, 0,15000)$ for the single agent.
    The other parameters are same as Figure~\ref{fig:distZ-diff_maze_onetar_ZZ-ZV}.
    }
    \label{fig:distZ-diff_maze_onetar_vs-united_comparison}
\end{figure}

Figure~\ref{fig:distZ-diff_maze_onetar_vs-united_comparison}(a) shows the exploration behavior of the single agent, which should be compared with its distributed counterpart (Fig.~\ref{fig:distZ-diff_maze_onetar_ZZ-ZV} (a)). 
In this one-trial comparison, the overall update speed of $Z_t$ by the single agent is slower than the agent population due to its limited exploration range (Fig.~\ref{fig:distZ-diff_maze_onetar_vs-united_comparison}(a) $t=1001$).
At a time $t=15001$, while the agent population has mostly completed learning and some agents have accumulated at the target (Fig.~\ref{fig:distZ-diff_maze_onetar_ZZ-ZV} (a) and (b)), the single agent is still exploring the target.
Furthermore, the value distribution becomes irregular (Fig.~\ref{fig:distZ-diff_maze_onetar_vs-united_comparison}(a) $t=1001$) due to random exploration histories and the lack of a diffusion term.
These observations suggest that the single smart agent may not explore and learn as effectively as the less smart agent population.

For a more quantitative evaluation, Fig.~\ref{fig:distZ-diff_maze_onetar_vs-united_comparison} (b) summarizes the hitting time statistics for multiple trials. 
The first hitting time of the smart single agent has a broad distribution ranging over $[1500,15000]$, indicating that the success of the exploration by a single agent is extremely stochastic. 
For the agent population, the first hitting time fluctuates not only among trials, but also among agents in the population (Fig.~\ref{fig:distZ-diff_maze_onetar_vs-united_comparison} (b)).
However, if we focus on the agent who hit the target first among the population, 
its hitting time distribution for the multiple trials consolidates around $t=1500$ and its median is close to the shortest hitting time of the single agent among the trials. 
Although the median of this distribution increases as we focus on hitting times ranked at $20\%$ and $50\%$ among the population, the value of $50\%$ is still comparable to that of the single agent.
Thus, population-wise exploration can work more robustly than its single-agent counterpart.
We further evaluate the performance of the learning process using the time-averaged regularized reward $R^i$~\eqref{eq:cum-reg-reward} of each agent, which is related to the optimization metric $\mathcal{J}$~\eqref{eq: cost-EMDPs} (Figure~\ref{fig:distZ-diff_maze_onetar_vs-united_comparison} (c)).
The bottom 50\% of $R^i$ in the agent population is smaller than its median of the single agent's trials, which corresponds to the fraction of agents trapped in dead ends (Fig.~\ref{fig:distZ-diff_maze_onetar_ZZ-ZV} (b) $t=15001$).
However, the top 20\% of $R^i$ in the agent population are larger than the top 30\% of the single agent's trials for almost 90\% of trials.

These results indicate that population-wise exploration can work more robustly in each trial than the single-agent-wise counterpart at the cost of the fraction of agents who could not reach the target.
Such a fraction can be compensated effectively by the self-replication of the other fraction that reached the target given the cost for replication is cheap.
This suggests that distributed exploration becomes more advantageous for simple and fertile organisms.
In real biological problems such as neutrophil accumulation at a wound site, a prompt arrival to the target even by a fraction of immune cells is effective against exponentially growing pathogens entering from the wound.

\section{Summary and Discussion}
\label{sec:conclusion}
To formulate a cell population that distributedly explores and learns target information via chemical cues in the environment, we leveraged reinforcement learning theory and derived two qualitatively different optimal dynamics of the population with linear and logarithmic gradient sensing.
These dynamics have continuous limits that are structurally equivalent to the phenomenological Keller-Segel model, indicating that optimal learning could be achieved by biologically relevant cell populations. 
We also verified that the optimal coupling of the sensing and modulation of cues is crucial for achieving efficient exploration by the population.
Additionally, we have demonstrated that an agent population with limited computational capability can achieve more robust learning via distributed modulation of external cues than a smart single agent with complete internal memory.

\subsection{Extensions of the formulation}
Although we derived only two optimal couplings, our formulation can be extended beyond them by incorporating various biological factors, providing a foundation for the quantitative understanding of collective information processing mediated by extracellular signals.
For instance, we can derive a learning dynamics with a Hill function in chemotactic response and production, reflecting the underlying receptor kinetics~\cite{Tindall2008BullMathBiol,Cremer2019Nat,Bacon1965JMolBiol} (See Appendix.~\ref{sec:generalized-derivation}).
Furthermore, we may explore various collective dynamics from the principle of optimality by modifying (i) the cue-value mapping and control penalty, (ii) regularization term, and (iii) agent's interactions and heterogeneity.

Nonlinear mappings that align with control penalties enable the derivation of  nonlinear dynamics. 
In this work, we used a log-form control penalty (KL-divergence), which aligns with the logarithmic mapping, to obtain the log-exp model. 
Alternatively, other control penalties, such as Tsallis relative entropy~\cite{Chow2018ICML}, could be employed to explore different nonlinear dynamics through their associated nonlinear mappings.


The diffusivity of the endogenous cue, which was regarded as a physical constraint in Sec.~\ref{sec:algorithm-derivation}, can also be interpreted as Laplacian regularization, which ensures continuity and smoothness of the value between adjacent vertices in the graph~\cite{LangNIPS1989,Zhu2003ICML}.
We may exploit other types of regularization.
For instance, $L^2$ regularization ($\lambda \|Z_t\|^2_2$, $\lambda\in\mathbb{R}_{\geq 0}$) can suppress extreme values and results in another sequential regularization term $-\lambda Z_t(s)$. 
This aligns with the cue decay ($\lambda=\kappa$, Eq.~\eqref{eq:KS-chem}) in Keller-Segel models, or the pheromones evaporation in ant population dynamics~\cite{Chowdhury2005PhysLifRev}.
Moreover, agent-dependent regularization might be employed to account for the alternation of cue diffusivity via the enzyme produced by cells~\cite{Weickert1997Springer}.

Incorporation of agents' interactions and heterogeneity is more challenging theoretically.
We obtained homogeneous agents that interact only through one endogenous cue, which stems from our assumption that the reward function depends only on the targets' location and magnitude.
However, agents' behaviors can be influenced by integrated information from other agents, which is often biologically mediated via a variety of cues, including repellents~\cite{Hughes2018FEBS}.
For instances, immune responses are modulated by chemokines released by other immune cells to control inflammation~\cite{Hopke2020NatCom}. 
Microorganisms recognize and aggregate in response to the states of others to adapt to starvation~\cite{Crespi2001TreEcoEvol}.
In development, cells repel each other to form tissues~\cite{Stramer2017NatRevMolCellBiol}.
Moreover, biological systems often involve heterogeneous cell populations, where different subgroups play distinct roles, such as leaders navigating gradients at the front and followers trailing behind~\cite{theveneau2013NatCellBiol}.
Extending our formulation to include interactions~\cite{Lauriere2022ArXiv} and heterogeneity~\cite{Subramanian2020ArXiv} is essential to comprehend the computational aspects of complex and collective behaviors observed in heterogeneous biological populations.

\subsection{Evolutionary advantages of distributed computation}
Understanding the evolution of biological information processing 
lies at the intersection of biophysics, evolutionary biology, and computational neuroscience.
Our result demonstrates that a population of simple agents can excel at learning target information robustly compared to a single smart agent. 
This aligns with the redundancy principle~\cite{Schuss2019PhysLifeRev}, which states that having many particles with the same objective can increase the probability of quickly achieving the objective as a population.
This principle has been applied to various biological processes from synaptic signaling to embryo fertilization, where the arrival time of the fastest individuals is more critical than the population average.
According to this biophysical principle, 
distributed information processing may have an evolutionary benefit for agents with limited capabilities but low replication costs. 
For example, neutrophils may achieve a robust population response by dispersing numerous genetically identical cells.
Integrating learning and evolutionary theories~\cite{Turney1996EvolComp,Xue2016PNAS,Khadka2018NIPS,Nakashima2022PRR} would be a promising direction for advancing our understanding of the principles behind the emergence and persistence of diverse forms of biological information processing.
\begin{acknowledgments}
The authors thank Shuhei A. Horiguchi for discussions.
The first author received the JSPS Research Fellowship Grant No.24KJ0846 and was partially supported by the World-leading Innovative Graduate Study Program in Proactive Environmental Studies (WINGS-PES), The University of Tokyo. This work was supported by JST CREST JPMJCR2011, and JSPS KAKENHI Grant Numbers 24KJ0090, 24H02148.

\end{acknowledgments}

%

\onecolumngrid
\appendix 
\clearpage
\setcounter{section}{0}

\renewcommand\appendixname{Supplement}

\section{A brief review of reinforcement learning}
\setcounter{page}{1}
\setcounter{equation}{0}
\setcounter{figure}{0}
\renewcommand{\theequation}{S\arabic{equation}}
\renewcommand{\thefigure}{S\arabic{figure}}
\renewcommand{\thepage}{S\arabic{page}}

\label{sec:MDPs}
In this section, we review entropy-regularized Markov decision process (ERMDP) and entropy-regularized reinforcement learning (ERRL) used in the main text.
\subsection{Conventional MDP and RL}
This subsection reviews the infinite horizon discount Markov decision process (MDP) and conventional reinforcement learning (RL) based on ~\cite{PutermanMDP1994, BartoSutton2018}, which is the foundation of ERMDP and ERRL.

Conventional MDP is represented as an interaction process between an environment and an agent.
Initially, at time $t=0$, the environment samples an initial state $s_0$ from an initial distribution $p_0$.
At time $t=0,1,\ldots$, the agent observes the state $s_t$ from the environment and stochastically decides the action $a_t$ based on the behavior policy $\pi(\cdot|s_t)$ conditioned only on the current state $s_t$.
The environment then samples the next state $s_{t+1}$ according to the state transition probability $P(\cdot|s_t,a_t)$ conditioned on the state $s_t$ and the agent's action $a_t$, and returns a reward $r_t$ based on $s_t, a_t$ to evaluate how good the current state and action are.
In sum, MDP is mathematically defined as a tuple $(\mathcal{S},\mathcal{A},P,r)$ comprising the set of possible states $\mathcal{S}$ in the environment, the set of all possible actions $\mathcal{A}$ by the agent, transition probability $P:\mathcal{S}\times \mathcal{A}\to\Delta(\mathcal{S})$, and reward function $r:\mathcal{S}\times\mathcal{A}\to\mathbb{R}$.
If we assume $P(\cdot|s,a)$ is deterministic and $r:\mathcal{S}\to\mathbb{R}$, we can recover the formulation without explicit action employed, as adapted in the main text, where the agent directly controls state transition probability $P:\mathcal{S}\to\Delta(\mathcal{S})$ instead of taking explicit actions.
For simplicity, we consider $\mathcal{S}$ and $\mathcal{A}$ are finite and $r$ is bounded.

The main objective in MDP is to find a policy $\pi^\dagger$ that maximizes the cumulative sum of rewards.
For conventional MDP and RL, the objective is defined as expected cumulative discount reward $\mathcal{J}[\pi]$ with discount factor $\gamma\in[0,1)$:
\begin{equation}
    \mathcal{J}[\pi]\coloneqq E_{P^\pi[s_{1:\infty},a_{0:\infty}\mid s_0] p_0(s_0)}\left[\sum_{k=0}^\infty \gamma^kr (s_k,a_k)\right], \label{eq:cost-discMDP}
\end{equation}
where the expectation is defined via the probability measure $P^\pi[s_{1:\tau+1},a_{0:\tau}|s_0]$ over trajectories $(s_k,a_k)_{k=0,\ldots,\tau}$ under $\pi$ and $P$ and an initial state probability $p_0$:
\[P^\pi[s_{1:\tau+1},a_{0:\tau}|s_0]\coloneqq \prod_{k=0}^\tau P(s_{k+1}|s_k,a_k)\pi(a_k|s_k).\]
Note we use this notation for expectation to clarify the dependence on $\pi$ and $P$ in the rest of this section.
The objective~\eqref{eq:cost-discMDP} evaluates future rewards as smaller by $\gamma$, which can easily account for continuous tasks observed in RL problems.
It is reasonable for biological organisms with limited memory to evaluate the near future, and mammals have been believed to implement the criterion \eqref{eq:cost-discMDP} by using neurotransmitter concentrations as discount factor $\gamma$~\cite{Tanaka2004NatNeu}.

One representative method to find the optimal policy $\pi^\dagger\coloneqq \operatorname{argmax}_\pi \mathcal{J}[\pi]$ is the value-based approach, where indirectly determines $\pi^\dagger$ via state value functions or action value functions. 
The state value function $V^\pi:\mathcal{S}\to\mathbb{R}$ and action value function $Q^\pi:\mathcal{S}\times\mathcal{A}\to\mathbb{R}$ for a policy $\pi:\mathcal{S}\to\Delta(\mathcal{A})$ is defined as follows:
\begin{subequations}
    \label{eq:action-state-values}
    \begin{align}
        V^\pi(s)&\coloneqq E_{P^\pi[s_{1:\infty},a_{0:\infty}|s_0=s]}\left[\sum_{k=0}^\infty \gamma^kr (s_k,a_k)\right],\quad s\in\mathcal{S}\\
        Q^\pi(s,a)&\coloneqq E_{P^\pi[s_{2:\infty},a_{1:\infty}|s_1]P(s_1|s_0=s,a_0=a)}\left[\sum_{k=0}^\infty \gamma^kr (s_k,a_k)\right],\quad s\in\mathcal{S}, a\in\mathcal{A}.
    \end{align}
\end{subequations}
Define the optimal state value function $V^\dagger:\mathcal{S}\to\mathbb{R}$ and action value function $Q^\dagger:\mathcal{S}\times\mathcal{A}\to\mathbb{R}$ as follows:
\begin{subequations}
    \begin{align}
    V^\dagger(s)&\coloneqq \sup_\pi V^\pi (s),\quad s\in\mathcal{S},\label{eq:optval-noreg-def}\\
    Q^\dagger(s,a)&\coloneqq \sup_\pi Q^\pi (s,a),\quad s\in\mathcal{S}\ a\in\mathcal{A},
    \end{align}
\end{subequations}
then $\pi^\dagger$ is explicitly written with $Q^*$: 
\begin{equation}
    \pi^\dagger(a|s)=\begin{cases}1&a=\underset{a'\in\mathcal{A}}{\operatorname{argmax}}\ Q^\dagger(s,a'),\\
    0&\mathrm{otherwise}
    \end{cases}\quad s\in\mathcal{S}.
    \label{eq:optpolicy-noreg}
\end{equation}
Thus, maximizing $\mathcal{J}[\pi]$ is replaced by maximizing $V^\pi$ and $Q^\pi$.

One conventional method for finding $V^\pi$ or $Q^\pi$ for given $r$ and $P$ is value iteration, which exploits the fact that $V^\pi$ and $Q^\pi$ are the unique solutions to the following Bellman equations, respectively:
\begin{subequations}
    \begin{align}
        V^\pi(s)&=E_{\pi(a|s)}\left[r(s,a)+E_{P(s'|s,a)}\left[\gamma V^\pi(s')\right]\right],\quad \forall s\in\mathcal{S}\label{eq:bellman-Vnoreg}\\
        Q^\pi(s,a)&=r(s,a)+E_{P(s'|s,a)}\left[\gamma Q^\pi(s',a')\right],\quad \forall (s,a)\in\mathcal{S}\times\mathcal{A}.
       \label{eq:bellman-Qnoreg}
    \end{align}
\end{subequations}
Sometimes, the Bellman operators $\mathcal{B}^\pi:\mathbb{R}^{|\mathcal{S}|}\to\mathbb{R}^{|\mathcal{S}|}$ and $\mathcal{B}^\pi:\mathbb{R}^{|\mathcal{S}|\times |\mathcal{A}|}\to\mathbb{R}^{|\mathcal{S}|^\times |\mathcal{A}|}$ are defined such that Eq.~\eqref{eq:bellman-Vnoreg} and  Eq.~\eqref{eq:bellman-Qnoreg} become $V^\pi(s)=\mathcal{B}^\pi[V^\pi](s)$ and $Q^\pi(s,a)=\mathcal{B}^\pi[Q^\pi](s,a)$, respectively. 
Value iteration finds $V^\pi$ ($Q^\pi$) by a fixed-point iteration of $\mathcal{B}^\pi$ (e.g., $V^{(n+1)}=\mathcal{B}^\pi [V^{(n)}]$, $n=1,2,\ldots$).

In real-world systems, including biological systems, the agent may have a limited access to reward $r$ and state transition $P$ in the environment.
Reinforcement learning (RL) extends to the situation where agents can partially access $r$ and $P$ only through interactions with the environment, and learn $\pi^\dagger$ from time series of states, actions, and rewards $(s_t,a_t,r_t)_{t=0,1,\ldots}$.
One popular method in reinforcement learning is Q-learning, a variant of TD-learning, which uses the fact that $V^\dagger$ and $Q^\dagger$ are the unique solutions to the following Bellman optimality equation:
\begin{subequations}
    \begin{align}
        V^\dagger(s)&=E_{\pi(a|s)}\left[r(s,a)+\gamma\max_{a}E_{P(s'|s,a)}\left[V^\dagger(s')\right]\right],\quad \forall s\in\mathcal{S},\label{eq:bellman-opt-Vnoreg}\\
        Q^\dagger(s,a)&=r(s,a)+\gamma E_{P(s'|s,a)}\max_{a'}\left[Q^\dagger(s',a')\right],\quad \forall(s,a)\in\mathcal{S}\times\mathcal{A}.
    \label{eq:bellman-opt-Qnoreg}
    \end{align}
\end{subequations}
Q-learning approximates the expectation w.r.t. $P$ on RHS of Eq.~\eqref{eq:bellman-opt-Qnoreg} using current samples and relax it with a learning rate $\alpha_t\in(0,1)$:
\begin{subequations}
    \begin{align}
        Q_{t+1}(s_t,a_t)&=(1-\alpha_t)Q_t(s_t,a_t)+\alpha_t\left(r_t+\gamma \max_{a'\in\mathcal{A}}Q_t(s_{t+1},a')\right),\label{eq:Q-learning-update}\\
        Q_{t+1}(s,a)&=Q_{t}(s,a),\quad \forall (s,a)\neq (s_t,a_t).
    \end{align}
    \label{eq:Q-learning}
\end{subequations}
$(s_t,a_t,r_t)_{t=0,1,\ldots}$ is generated by sampling $a_t \sim \pi_b(\cdot | s_t)$, $r_t = r(s_t, a_t)$ and $  s_{t+1} \sim P(\cdot | s_t, a_t)$.
The policy $\pi_b$ followed by action is referred as the behavior policy, which in general differs from the target policy $\pi^\dagger$.
Thus Q-learning is categorized into “off-policy” algorithms.
The update difference $r_t+\gamma \max_a Q_t(s_{t+1},a)-Q_t(s_t,a_t)$ in Eq.~\eqref{eq:Q-learning-update} is referred to as temporal difference (TD) error, which is thought to be represented in neural activity to achieve reinforcement learning in insects~\cite{Bennett2021NatCom} and primates~\cite{Schultz2016NatRevNeu}.

\subsection{Entropy-regularized MDP and RL}
\label{sec:entropy-MDPs}
There are two major issues with the conventional MDP-based RL formulation.
First, approximating the expected value using samples introduces stochastic noise, which is then amplified by the max operator, leading to overestimation and unstable learning.
Second, the resulting deterministic target policy $\pi^\dagger$~\eqref{eq:optpolicy-noreg} lacks a direct theoretical links to exploratory behavior policies $\pi_b$ (e.g., $\varepsilon$-greedy, Boltzmann), introducing an additional parameter tuning.
These issues become more significant when using RL to model biological systems, as biological agents rarely behave deterministically even after learning is complete.
This persistent stochasticity reflects the inherent randomness of biological processes, making deterministic control strategies costly and hence unnatural~\cite{KatoPRR2021}.

The entropy-regularized MDP (ERMDP) addresses these issues by incorporating a control cost into the policy, balancing rewards exploitation and exploration through policy regularization~\cite{Ziebart2010thesis,Haarnoja2017ICML}.
One formulation defines the following objective $\mathcal{J}[\pi]$ for policy $\pi$~\cite{Fox2016}:
\begin{subequations}    \label{eq:cost-discERMDP-w/action}
    \begin{align}
        \mathcal{J}[\pi]&\coloneqq E_{P^\pi[s_{1:\infty},a_{0:\infty}|s_0] p_0(s_0)}\left[\sum_{k=0}^\infty \gamma^k R^\pi(s_k,a_k)\right],\\
        R^\pi(s_k,a_k) &= r(s_k,a_k) -\frac{1}{\beta}\log\frac{\pi(a_k|s_k)}{\pi_\mathrm{ref}(a_k|s_k)},
    \end{align}
\end{subequations}
where the control penalty for $\pi$ is KL divergence from $\pi$ to an intrinsic policy $\pi_\mathrm{ref}:\mathcal{S}\to\Delta(\mathcal{A})$. 
$\beta\in(0,\infty)$ is the inverse control weight and smaller $\beta$ assigns higher control cost.

The formulation with policy regularization also applies when the agent directly controls $P$~\cite{Todorov2009PNAS,Azar2012JMLR}, as adapted in the main text.
Specifically, when the agent controls $P$ itself, at each time $t=0,1,\ldots$, the agent observes state $s_t\in\mathcal{S}$ and selects a policy policy $\pi_t:\mathcal{S}\to\Delta(\mathcal{S})$; and the environment samples next state $s_{t+1}\sim \pi_t(\cdot|s_t)$ and returns $s_{t+1}$ and a reward $r_t$ to the agent.
The objective $\mathcal{J}$ becomes:
\begin{align*}
    \mathcal{J}[\pi]&\coloneqq E_{P^\pi[s_{1:\infty}|s_0] p_0(s_0)}\left[\sum_{t=0}^\infty\gamma^tR^\pi(s_t,s_{t+1})\right],\\
    R^\pi(s_t,s_{t+1})&\coloneqq r(s_t)-\frac{1}{\beta}\log\frac{\pi(s_{t+1}|s_{t})}{p(s_{t+1}|s_t)}, \\
    P^\pi[s_{1:\infty}|s_0]&\coloneqq \prod_{t=0}^\infty \pi(s_{t+1}|s_t)p(s_0),
\end{align*}
for given intrinsic transition $p:\mathcal{S}\to\Delta(\mathcal{S})$.
Note that $\mathcal{J}[\pi]$ is defined only for policies $\pi$ which satisfies $\pi(s'|s)=0$ for all $s,s'\in\mathcal{S}$ such that $p(s'|s)=0$.
The penalty for the transition from time $t$ to $t+1$ is weighted by $\gamma^t$, just as the reward $r(s_{t})$ at time $t$. Alternatively, if we weight the penalty by $\gamma^{t+1}$— aligning it instead with the timing of $r(s_{t+1})$— this is equivalent to replacing $\beta$ by $\beta/\gamma$.
Note that we can also consider the minimization counterpart with negative control penalty, which corresponds to nagative chemotaxis away from chemorepellent~\cite{Tso1974JBac}.

As in standard MDPs, the state value function $V^\pi:\mathcal{S}\to\mathbb{R}$ for $\pi$ (Eq.~\eqref{eq:def-value}) and the optimal state value function $V^*:\mathcal{S}\to\mathbb{R}$ are defined:
\begin{align*}
    V^\pi(s)&\coloneqq 
    E_{P^\pi[s_{0:\infty}|s_0=s]}\left[\sum_{t=0}^\infty\gamma^t R^\pi(s_t,s_{t+1})\right],\quad s\in\mathcal{S},\\
    V^*(s)&\coloneqq \sup_\pi V^\pi(s),\quad s\in\mathcal{S}.
\end{align*}
The optimal policy $\pi^*$ that maximize $\mathcal{J}[\pi]$ is explicitly written using $V^*$ (Eq.~\eqref{eq:EMDP-greedy-policy}): 
\begin{equation*}
    \pi^*(s'|s)=\frac{p(s'|s)\exp(\beta \gamma V^*(s'))}{\sum_{s'\in\mathcal{S}}p(s'|s)\exp(\beta \gamma V^*(s'))},\quad s,s'\in\mathcal{S}.
\end{equation*}
Unlike the optimal policy $\pi^\dagger$ ~\eqref{eq:optpolicy-noreg} in conventional MDP, which relies on max operator, $\pi^*$  instead has softmax operator, resulting in a stochastic and exploratory behavior.
The level of exploration depends on inverse control weight $\beta$ and discount factor $\gamma$.
Similarly, $V^\pi$ is the unique solution to the Bellman equation:
\begin{equation}
    V^\pi(s)=r(s)+E_{\pi(s'|s)}\left[\gamma V^\pi(s)-\frac{1}{\beta}\log\frac{\pi(s'|s)}{p(s'|s)}\right],\quad \forall s\in\mathcal{S}.
    \label{eq:bellman-stationary}
\end{equation}
Bellman operator $\mathcal{B}^\pi:\mathbb{R}^{|\mathcal{S}|}\to\mathbb{R}^{|\mathcal{S}|}$ is defined such that Eq.~\eqref{eq:bellman-stationary} becomes $V^\pi(s)=\mathcal{B}^\pi[V^\pi](s)$ and one can obtain $V^\pi$ by value iteration, a fix-point iteration of $\mathcal{B}^\pi$.

Based on ERMDP, we can consider reinforcement learning problem, referred as entropy-regularized reinforcement learning (ERRL), where the agent only partially access to $r$ through interaction with the environment. 
Similarly to the conventional RL, optimal state value $V^*$ is the unique solution to the Bellman optimality equation:
\begin{subequations}    \label{eq:bellman-optimality}
    \begin{align}
        V^*(s)&=r(s)+\max_\pi E_{\pi(s'|s)}\left[\gamma V^*(s')-\frac{1}{\beta}\log\frac{\pi(s'|s)}{p(s'|s)}\right]\\
        &=r(s)+\frac{1}{\beta}\log \mathcal{Z}[V^*](s),\quad \forall s\in\mathcal{S},
        \label{eq:bellman-opt-G}\\
        \mathcal{Z}[V](s)&\coloneqq \sum_{s'\in\mathcal{S}} p(s'|s)\exp(\beta\gamma V(s')).
    \end{align}
\end{subequations}
Note Eq.~\eqref{eq:bellman-opt-G} is free from max operator.
Similarly to Q-learning~\eqref{eq:Q-learning}, we can consider a variant of TD-learning, called G-learning~\cite{Fox2016}, using Eq.~\eqref{eq:bellman-optimality}:
\begin{subequations}    \label{eq:greedy-G-value}
    \begin{align}
        V_{t+1}(s_t)&=(1-\alpha_t)V_t(s_t)+\alpha_t\left(r_t+\frac{1}{\beta}\log \sum_{s'\in\mathcal{S}}p(s'|s_t)\exp(\beta\gamma V_t(s'))\right),
        \label{eq:greedy-G-value-update}\\
        V_{t+1}(s)&=V_{t}(s),\quad \forall s\neq s_t,
    \end{align}
\end{subequations}
where $\alpha_t\in(0,1)$ is a learning rate and $(s_t,r_t)_{t=0,1,\ldots}$ is sampled by $s_{t+1}\sim\pi_b(\cdot|s_t)$, $r_t=r(s_t)$.

Moreover, ERRL admits a TD-style learning with a behavior policy that is both exploratory and theoretically linked to the optimal policy, thus eliminating the need to adjust parameters during learning.
By leveraging the stochasticity of the optimal policy $\pi^*$, we can use the following “greedy” policy $\pi_t$ as the behavior policy $\pi_b$:
\begin{equation}
    \pi_t(s'|s)\coloneqq\frac{p(s'|s)\exp(\gamma V_t(s'))}{\sum_{s'\in\mathcal{S}}p(s'|s)\exp(\beta\gamma V_t(s'))},\quad s,s'\in\mathcal{S}.
    \label{eq:greedy-G-transition}
\end{equation}
This is equivalent to the softmax (Boltzmann) policy with preference $\beta\gamma V_t$, and it is greedy in a sense that if $V_t(s)=V^*(s)$ for all $s$, then $\pi_t=\pi^*$.
When initialized with a uniform state value, it balances exploration and exploitation: initially it samples a wide range of states, and as learning progresses it increasingly choose states with higher $V_t$.
Greedy G-learning~\cite{Fox2016} is the learning algorithm consisting of the behavior policy~\eqref{eq:greedy-G-transition} and value update~\eqref{eq:greedy-G-value}.
Note that we can also learn a scaled value: $V^*(s)\approx Z_t(s)/\beta$ (Sec.~\ref{sec:distG}), instead of directly learning of value: $V^*(s)\approx V_t(s)$.

Continuous bijection $Z_t(s)\coloneqq\mathcal{F}^{-1}[V_t](s)\coloneqq \exp(\beta V_t(s))$ allows us to obtain another variant of TD-learning.  
Applying the mapping $\mathcal{F}^{-1}$ to the Bellman optimality equation~\eqref{eq:bellman-optimality},
we obtain another variant of Bellman optimality equation~\eqref{eq:bellman-opt-Z} and corresponding optimal policy~\eqref{eq:greedy-opt-Z}:
\begin{subequations}
    \begin{align}
        Z^*(s)&= \exp(\beta r(s))\mathcal{Z}[Z^*](s),\quad\forall s\in\mathcal{S},\label{eq:bellman-opt-Z}\\
        \mathcal{Z}[Z](s)&\coloneqq \sum_{s'\in\mathcal{S}} p(s'|s)(Z(s'))^\gamma,\\
        \pi^*(s'|s)&= \frac{p(s'|s)(Z^*(s'))^\gamma}{\mathcal{Z}[Z^*](s)},\quad s,s'\in\mathcal{S}.\notag\\
        \label{eq:greedy-opt-Z}
    \end{align}
\end{subequations}
One sees inverse control weight $\beta$ is the scaling factor for reward $r$~\cite{Haarnoja2017ICML} in Eq.~\eqref{eq:bellman-opt-Z}, indicating that lower control cost is equivalent to estimating the rewards as higher.
Similarly to greedy G-learning, based on the Bellman optimality equation~\eqref{eq:bellman-opt-Z} and optimal policy~\eqref{eq:greedy-opt-Z}, we obtain a variant of TD-learning~\eqref{eq:greedy-Z-learning} with greedy behavior policy~\eqref{eq:greedy-Z-transition}, called greedy Z-learning~\cite{Todorov2009PNAS}:
\begin{subequations}    \label{eq:greedy-Z-learning}
    \begin{align}
        Z_{t+1}(s_t)&=(1-\alpha_t)Z_t(s_t)+\alpha_t\exp(\beta r(s_t)) \sum_{s'\in\mathcal{S}}p(s'|s_t)Z_t^\gamma(s'),
        \label{eq:greedy-Z-value-update}\\
        Z_{t+1}(s)&=Z_{t}(s),\quad \forall s\neq s_t,
    \end{align}
\end{subequations}
where $\alpha_t\in(0,1)$ is a learning rate and $(s_t,r_t)_{t=0,1,\ldots}$ are sampled by $s_{t+1}\sim\pi_t(\cdot|s_t)$, $r_t=r(s_t)$, and
\begin{equation}
    \pi_t(s'|s)\coloneqq\frac{p(s'|s)Z_t^\gamma(s')}{\sum_{s'\in\mathcal{S}}p(s'|s)Z_t^\gamma(s')},\quad s,s'\in\mathcal{S}.
    \label{eq:greedy-Z-transition}
\end{equation}
\section{Derivation of learning dynamics and continuous limits}
\subsection{Derivation of learning dynamics}
\subsubsection{Derivation of the endogenous cue modulation dynamics}
\label{sec:algorithm-derivation}
In this section, we derive the endogenous cue modulation dynamics~\eqref{eq:dist-Z-learning-diff}, \eqref{eq:dist-G-learning-diff} as an approximated gradient descent on a cost function.

Although multiple choices for a cost function are possible, we specifically choose one based on the following fact: for a continuous bijective mapping $\mathcal{F}$, minimizing the deviation between $V^*$ and $V_t=\mathcal{F}[Z_t]$ is equivalent to minimizing the deviation between the $Z^*\coloneqq\mathcal{F}^{-1}[V^*]$ and $Z_t$, where $Z^*(s)=\mathcal{F}^{-1}[V^*](s)$ for $s\in\mathcal{S}$. 
Following this fact, we consider a simplest dynamics that approximately minimize the following cost function $\mathcal{C}_\mathcal{F}[{Z}_t]$, which incorporates the squared error between the $Z_t$ and $Z^*$ and the physical constraint of diffusion associated with the endogenous cue:
\begin{equation}
    \label{eq:distZ-diff-cost}
        \mathcal{C}_\mathcal{F}[Z_t]\coloneqq \frac{N}{2}\sum_{s\in S}\mu^{\pi_t}(s)(Z^*(s)-Z_t(s))^2-\frac{D}{2\alpha }\sum_{s,s'\in\mathcal{S}} L_{ss'}Z_t(s)Z_t(s'),
\end{equation}
where $\mu^{\pi_t}(s)$ denotes the stationary distribution of $\pi_t$ (assuming $\pi_t$ is ergodic), $D \in \mathbb{R}_+$ represents the diffusion coefficient, $\alpha \in (0,1/N)$ is the time-scaling factor (speed) of the endogenous cue production and degradation by agents, and $L$ denotes the graph Laplacian of the environment $\mathcal{G}$.
The second term of Eq.~\eqref{eq:distZ-diff-cost} models diffusion. 
A larger diffusion coefficient compared to the endogenous cue modulation by agents, the stronger the diffusion becomes, particularly at vertices with fewer agents.

Next, we consider a simplest biologically plausible learning dynamics to minimize the cost function~\eqref{eq:distZ-diff-cost}.
While various learning algorithms, such as offline algorithms~\cite{BartoSutton2018}, are used in engineering applications, not all of these have straightforward biological interpretations.
In this work, we adapt a sequential learning dynamics in which each agent in the population degrades or produces the endogenous cue at each time step.
The simplest sequential dynamics is the gradient descent on $\mathcal{C}_\mathcal{F}[{Z}_t]$; but this requires complete knowledge of the optimal concentration $Z^*$, which is initially unknown to agents.
In RL theory, a method called semi-gradient descent~\cite{BartoSutton2018,precup2000} approximates the gradient by using only the current estimate $Z_t$ and the immediate reward $r(s_t)$.
Following this approach, we approximate the gradient descent on $\mathcal{C}_\mathcal{F}[{Z}_t]$ using the locally sensed current ${Z}_t$ and $r$ by a finite homogeneous population of $N$ agents.
Through the logarithmic mapping $\mathcal{F}$, we derive Eq.~\eqref{eq:dist-Z-learning-diff} as follows:
\begin{subequations}    \label{eq:Zdiff-deriv}
    \begin{alignat}{3}
        Z_{t+1}(s)-Z_{t}(s)&=&&-\alpha \nabla_{Z}\mathcal{C}_\mathcal{F}[Z_t](s)\\
        &=&&-\alpha N\mu^{\pi_t}(s)(Z_t(s)-Z^*(s))-D\sum_{s'\in \mathcal{S}}L_{ss'}Z_t(s')\\
        &\approx&&-\alpha N\sum_{i=1}^N\frac{1}{N}\mathbbm{1}_{s^i_{t}=s}(Z_t(s)-Z^*(s))-D\sum_{s'\in \mathcal{S}}L_{ss'}Z_t(s')\label{eq:Zdiff-step1}\\
        &\approx &&-\alpha\sum_{i=1}^N\mathbbm{1}_{s^i_{t}=s}\left(Z_t(s)-\exp(\beta r(s))\sum_{s'\in\mathcal{S}} p(s'|s)Z_t^\gamma(s')\right)\notag\\
        &&&-D\sum_{s'\in \mathcal{S}}L_{ss'}Z_t(s')\label{eq:Zdiff-step2}\\
        &= &&-\alpha \sum_{i=1}^N\mathbbm{1}_{s^i_{t}=s}\left(Z_t(s)-\rho_t(s^i_{t+1}|s)\exp(\beta r(s))Z_t^\gamma(s^i_{t+1})\right)\notag\\
        &&&-D\sum_{s'\in \mathcal{S}}L_{ss'}Z_t(s')\label{eq:Zdiff-step3},\quad s^i_{t+1}\sim \pi_t(\cdot|s_t^i)\in\mathcal{S},
    \end{alignat}
\end{subequations}
for $s\in\mathcal{S}$, where $\rho_t(s'|s)\coloneqq p(s'|s)/\pi_t(s'|s)$ ($s,s'\in\mathcal{S}$) is the importance sampling ratio~\cite{precup2000} that corrects the deviation between the behavior policy $\pi_t$~\eqref{eq:dist-Z-learning-policy} and the target policy $p$.
In Eq.~\eqref{eq:Zdiff-step1}, we use the stochastic gradient descent~\cite{Sun2018NIPS} by approximating the stationary distribution $\mu^{\pi_t}$ with sample states $(s^i_t)_{i=1}^N$ following a Markov process with policy $\pi_t$.
In Eq.~\eqref{eq:Zdiff-step2}, we replace the unknown optimal value $Z^*$ with the observed reward $(r(s^i_t))_{i=1}^N$ and the current estimate $Z_t$, using the fact that $Z^*$ is the unique solution to the Bellman optimality equation~\eqref{eq:bellman-opt-Z}, as adapted in the semi-gradient descent~\cite{BartoSutton2018,precup2000}: $Z^*=\mathcal{F}^{-1}\circ\mathcal{B}^*\circ\mathcal{F}[Z^*]\approx\mathcal{F}^{-1}\circ\mathcal{B}^*\circ\mathcal{F}[Z_t]$, where $\mathcal{B}[V](s)\coloneqq r(s)+\log\sum_{s'\in\mathcal{S}}p(s'|s)\exp(\gamma V(s'))$ thus $Z^*=\mathcal{F}^{-1}\circ\mathcal{B}^*\circ\mathcal{F}[Z^*]$. 
Eq.~\eqref{eq:Zdiff-step3} is another expression using only sample states $(s_t^i,s_{t+1}^i)$, which is mathematically equivalent to Eq.~\eqref{eq:Zdiff-step2} as long as agents follow the greedy policy $\pi_t$~\eqref{eq:dist-Z-learning-policy}.
We adapt Eq.~\eqref{eq:Zdiff-step2} to simplify the calculation of the continuous limit in Appendix.~\ref{sec:continuous-limit}.

Similarly, through the linear mapping $\mathcal{F}$, we derive Eq.~\eqref{eq:dist-G-learning-diff} by using the associated Bellman optimality equation~\eqref{eq:bellman-opt-G} in Eq.~\eqref{eq:Zdiff-step2} and the greedy policy $\pi_t$~\eqref{eq:dist-G-learning-policy} in Eq.~\eqref{eq:Zdiff-step3}:
\begin{subequations}    \label{eq:Gdiff-deriv}
    \begin{alignat}{3}
        Z_{t+1}(s)-Z_{t}(s)&\approx &&-\alpha\sum_{i=1}^N\mathbbm{1}_{s^i_{t}=s}\left(Z_t(s)-\beta r(s)+\log\sum_{s'\in\mathcal{S}} p(s'|s)\exp(\gamma Z_t(s'))\right)\notag\\
        &&&-D\sum_{s'\in \mathcal{S}}L_{ss'}Z_t(s').\\
        &= &&-\alpha \sum_{i=1}^N\mathbbm{1}_{s^i_{t}=s}\left(Z_t(s)-\beta r(s)+\log \rho_t(s^i_{t+1}|s)\exp(\gamma Z_t(s^i_{t+1}))\right)\notag\\
        &&&-D\sum_{s'\in \mathcal{S}}L_{ss'}Z_t(s'),\quad s^i_{t+1}\sim \pi_t(\cdot|s_t^i)\in\mathcal{S}.
    \end{alignat}
\end{subequations}

\subsubsection{Generalized derivation of learning dynamics}
\label{sec:generalized-derivation}
We can generalize the derivation in Appendix.~\ref{sec:algorithm-derivation} by introducing another continuous bijection $\mathcal{I}:\mathbb{R}^{|\mathcal{S}|}\to\mathbb{R}^{|\mathcal{S}|}$ to explore other learning dynamics.
Minimizing the deviation between $V^*$ and $V_t=\mathcal{F}[Z_t]$ is equivalent to minimizing the deviation between the $\mathcal{I}[Z^*]=\mathcal{I}[\mathcal{F}^{-1}[V^*]]$ and $\mathcal{I}[Z_t]$.
Thus, we consider the following cost function to minimize:
\begin{equation}
    \label{eq:distZ-diff-cost-T}
        \mathcal{C}_{\mathcal{F},\mathcal{I}}[Z_t]\coloneqq \frac{N}{2}\sum_{s\in S}\mu^{\pi_t}(s)(\mathcal{I}[Z^*](s)-\mathcal{I}[Z_t](s))^2-\frac{D}{2\alpha }\sum_{s,s'\in\mathcal{S}} L_{ss'}Z_t(s)Z_t(s').
\end{equation}
Similarly to Appendix.~\ref{sec:algorithm-derivation}, applying the approximation $\mathcal{I}[Z^*]=\mathcal{I}\circ\mathcal{F}^{-1}\circ\mathcal{B}^*\circ\mathcal{F}[Z^*]\approx\mathcal{I}\circ\mathcal{F}^{-1}\circ\mathcal{B}^*\circ\mathcal{F}[Z_t]$ to the gradient descent on the cost~\eqref{eq:distZ-diff-cost-T}, we obtain corresponding generalized learning dynamics.
For example, choosing $\mathcal{F}[Z](s)=\log(Z(s)/\beta)$ and $\mathcal{I}[Z](s)=(Z(s))^\gamma$ gives: 
\begin{alignat*}{3}
        Z_{t+1}(s)-Z_{t}(s)&\approx &&-\alpha\sum_{i=1}^N\mathbbm{1}_{s^i_{t}=s}\left(Z^\gamma_t(s)-\exp(\beta\gamma r(s))\left(\sum_{s'\in\mathcal{S}} p(s'|s)Z^\gamma_t(s')\right)^\gamma\right)\notag\\
        &&&-D\sum_{s'\in \mathcal{S}}L_{ss'}Z_t(s'),
\end{alignat*}
whose continuous limit has a degenerative degradation $H(\mu,Z^\gamma)=\alpha \mu Z^\gamma$~\cite{Tindall2008BullMathBiol}.

Similarly to Sec.~\ref{sec:distZ}, we can generalize the derivation of policy as:
\begin{equation}
    \pi_t(s'|s)\propto p(s'|s)\exp(\mathcal{F}[Z_t](s')).
\end{equation}
As another example, consider a Hill type mapping $\mathcal{I}=\mathcal{F}:[0,\infty)^{|\mathcal{S}|}\to[0,1)^{|\mathcal{S}|}$ defined by $\mathcal{F}[Z](s)\coloneqq (Z(s))^a/(K+(Z(s))^a)\in[0,1)$, $a>0$. 
This nonlinear response, derived from the underlying receptor kinetics, closely matches the experimentally observed response to endogenous concentration~\cite{Fu2018NatCom,Cremer2019Nat,Phan2024PNAS,Bacon1965JMolBiol}.
Note that we must assume $0\leq r(s)< 1-\gamma$ to ensure $V^*\in[0,1)$ since $\min_{s\in\mathcal{S}}r(s)\leq V(s)\leq \max_{s\in\mathcal{S}}r(s)/(1-\gamma)$, $s\in\mathcal{S}$.
For the Michaelis-Menten case $a=1$, the learning dynamics become:
\begin{align}
    \pi_t(s'|s)&\propto p(s'|s)\exp\left(\beta\gamma \frac{Z_t(s')}{K+Z_t(s')}\right),\\
    Z_{t+1}(s)-Z_t(s) &= \sum_{i=1}^N\mathbbm{1}_{s=s^i_t}\alpha\Delta Z_t(s)+D\sum_{s'\in\mathcal{S}}L_{ss'}Z_t(s'),\\
    \Delta Z_t(s)&= -\frac{Z_t(s)}{K+Z_t(s)}+\left(r(s)+\frac1\beta\log\sum_{s'} p(s'|s)\exp\left(\beta\gamma \frac{Z_t(s)}{K+Z_t(s)}\right)\right).
\end{align}
A formal continuous limit of this dynamics has nonlinear chemotactic response $\psi(Z,\nabla Z)=\frac{\gamma\sigma^2 K}{(K+Z)^2}\nabla Z$ and degradation $H(\mu,Z)=\alpha(1-\gamma) \mu\frac{Z}{K+Z}$, both consistent with phenomenologically established functions~\cite{Tindall2008BullMathBiol}.

These generalization are motivated by the interpretation that TD-learning algorithms derived by semi-gradient descent are stochastic approximation algorithms to find the unique fixed point of the Bellman optimality equation~\cite{Borkar2008}.
While we do not formally prove convergence of the generalized algorithms, we anticipate they decrease the deviation between $V_t$ and $V^*$ if a suitable Lyapunov function exists that vanishes only at the mapped fixed point.

\subsection{Derivation of the continuous limits}
\label{sec:continuous-limit}
In this section, we derive the continuous limits~\eqref{eq:log-continuous} and \eqref{eq:lin-continuous} from the coupled Eqs.~\eqref{eq:dist-Z-learning-policy}, \eqref{eq:dist-Z-learning-diff} and Eqs.~\eqref{eq:dist-G-learning-policy}, \eqref{eq:dist-G-learning-diff} by restricting to the one-dimensional Euclidean space for simplicity.
As a discrete approximation of the one-dimensional Euclidean space, we consider a one-dimensional lattice $S^h\coloneqq \{0,\pm h,\pm 2h,\ldots\}$ with a uniform spacing $h\in\mathbb{R}_{>0}$, and define a learning dynamics on the state space $\mathcal{S}=S^h$.
We additionally assume the intrinsic transition probability $p^h$ on $S^h$ as a lazy random walk, similar to Sec.~\eqref{sec:numexp}:
\begin{subequations}    \label{eq:lazy-rw}
    \begin{align}
        p^h(x|x)&=\varepsilon^h(x)=O(h),\\
        p^h(x\pm h|x)&=\frac{1-\varepsilon^h(x)}{2},
    \end{align}
\end{subequations}
where we assume that $\varepsilon^h(x)\to 0$ as $h\to 0$, indicating that the smaller the spacing $h$, the less likely it is to stay at the same vertex.
Although the strict convergence requires taking the continuous limit of both two dynamics simultaneously, we present each continuous limit separately here.   

\subsubsection{Continuous limit of agents' motility}
In this subsection, we roughly derive the continuous limit of agents' motility. 
Given the endogenous cue concentration $Z(t',\cdot):\mathbb{R}\to\mathbb{R}_{\geq 0}$
, we firstly consider agents with a policy~\eqref{eq:dist-G-learning-policy} dependent on $Z$ on $S^h$.
The trajectory of an agent's position is a Markov chain $(\xi^h_n)_{n=0,1,\ldots}$ on $S^h$ with the following transition probability $\pi^h:S^h\to S^h$: 
\begin{subequations}
    \label{eq:discrete-markov}
    \begin{align}
        \pi^h(x|x)&=\varepsilon^h(x)\exp(\gamma Z(t',x))/\mathcal{Z}^h(x)\\
        \pi^h(x\pm h|x)&=\frac{1-\varepsilon^h(x)}{2}\exp(\gamma Z(t',x\pm h))/\mathcal{Z}^h(x),\\
        \mathcal{Z}^h(x)&\coloneqq \varepsilon^h(x)\exp(\gamma Z(t',x))+\frac{1-\varepsilon^h(x)}{2}\left(\exp(\gamma Z(t',x+h))+\exp(\gamma Z(t',x- h))\right),
    \end{align}
\end{subequations}
for $x\in S^h$, where $Z(t',\cdot)$ is assumed to be constant over times $n=0,1,\ldots$.

We define the time step $\Delta t^h\coloneqq h^2/\sigma^2$ depending on spacing $h$, for $n=0,1,\ldots$, the position increment $\Delta \xi^h_n\coloneqq \xi^h_n-\xi^h_{n-1}$ satisfies local consistency~\cite{kushner2001numerical}:
\begin{subequations}
    \begin{align}
        E[\Delta\xi^h_n]&=h\pi^h(x+h|x)-h\pi^h(x-h|x)\\
        &=\gamma h^2\pt_x Z(t',x)+O(h^4)\label{eq:local-consistency-mean-2}\\
        &=\gamma \sigma^2 \pt_x Z(t',x)\Delta t^h+O\left((\Delta {t^h})^2\right), 
        \label{eq:local-consistency-mean}
    \end{align}
\end{subequations}
and 
\begin{subequations}
    \begin{align}
        E\left[\left(\Delta\xi^h_n-E[\Delta\xi^h_n]\right)^2\right]&=E[(\Delta\xi^h_n)^2]-\left(E[\Delta\xi^h_n]\right)^2\\
        &=h^2+O(h^4)\label{eq:local-consistency-var-2}\\
        &=\sigma^2\Delta t^h+O\left((\Delta {t^h})^2\right),
        \label{eq:local-consistency-var}
    \end{align}
\end{subequations}
where in Eqs.~\eqref{eq:local-consistency-mean-2}, \eqref{eq:local-consistency-var-2}, the following Taylor series of $\pi^h$ w.r.t. $h$ is used:
\begin{subequations}
    \begin{alignat}{3}
        \pi^h(x\pm h|x)&=\frac{1-\varepsilon^h(x)}{2}\exp(\gamma Z(t',x))&&\left(1\pm\gamma h\pt_xZ(t',x)+O(h^2)\right)\notag\\
        &&&\left(\exp(\gamma Z(t',x))(1+O(h^2))\right)^{-1}\label{eq:policy-taylor}\\
        &=\frac{1}{2}-\frac{\varepsilon^h(x)}{2}\pm\frac{\gamma}{2} h\pt_xZ(t',&&x)+O(h^2),
    \end{alignat}
\end{subequations}
where in Eq.~\eqref{eq:policy-taylor}, the Taylor series w.r.t. $h$ are performed for $\exp(\gamma Z(t',x\pm h))$ and $\mathcal{Z}^h(x)$.

As $\Delta t^h\to0$ $(h\to 0)$ and local consistency~\eqref{eq:local-consistency-mean} and \eqref{eq:local-consistency-var} is satisfied, if the Markov chain $(\xi^h_n)_{n=0,\ldots}$ satisfies the initial condition $\xi^h_0=x$, the solution $\xi^h(t)\coloneqq\xi_n^h,\ t\in[t_n^h,t_{n+1}^h),\ t^h_n\coloneqq t'+\sum_{i=0}^{n-1}\Delta t^h$ of the corresponding time-continous Markov chain converges to the solution of the following stochastic differential equations~\cite[Theorem 4.1]{kushner2001numerical}:
\begin{equation}
    \label{eq:limit-sde}
    \dd{x}(t)=\gamma \sigma^2 \pt_xV(t,x)\dd{t}+\sigma \dd{W}(t)
\end{equation}
with $x(t')=x$, 
where $W(t)$ is a one-dimensional standard Wiener process.
This indicates that an agent following a biased random walk weighted by the exponent of the surrounding endogenous cue concentration corresponds to an agent following a random walk climbing the gradient of the endogenous cue concentration in the space-time continuous limit.

The Fokker-Planck-Kolmogorov forward (FPK) equation for the probability density $\rho(t,x)$ of random variable $x$ following SDE~\eqref{eq:limit-sde} is given~\cite{Oksendal1992} by
\begin{equation}
        \pt_t\rho(t,x)=-\nabla\cdot(\gamma \sigma^2\nabla Z(t,x)\cdot\rho(t,x))+\frac{\sigma^2}{2}\nabla^2\rho(t,x)
        \label{eq:limit-fpk}
\end{equation}
with $\rho(t',x)=1$, yielding $\phi(\nabla Z,\nabla \log Z)=\gamma \sigma^2 \nabla Z$ (Eq.~\eqref{eq:lin-continuous}).
This indicates that the distribution of agents following a random walk weighted by the exponent of the surrounding endogenous cue concentration~\eqref{eq:dist-Z-learning-policy} corresponds to a diffusion-advection equation with the gradient of the concentration as drift in the space-time continuous limit~\cite{Grima2004PRE,Meyer2023PRE}.

The FPK equation for policy~\eqref{eq:dist-Z-learning-policy} is derived similarly by setting $Z(t,x)\to \log Z(t,x)$: 
\begin{equation}
    \label{eq:limit-fpk-2}
    \pt_t\rho(t,x)=-\nabla\cdot(\gamma \sigma^2\nabla \log Z(t,x)\cdot\rho(t,x))+\frac{\sigma^2}{2}\nabla^2\rho(t,x)
\end{equation}
with $\rho(t',x)=1$, 
yielding $\phi(\nabla Z,\nabla \log Z)=\gamma \sigma^2 \nabla \log Z$ (Eq.~\eqref{eq:log-continuous}).
This indicates that the distribution of agents following a random walk weighted by the power of the surrounding concentration~\eqref{eq:dist-G-learning-policy} corresponds to a diffusion-advection equation with the logarithmic gradient as drift in the limit.

\subsubsection{Continuous limit of the endogenous cue modulation}
First, we consider the modulation of the endogenous cue concentration $Z^h_t:\mathbb{R}\to\mathbb{R}_{\geq 0}$ following Eq.~\eqref{eq:dist-Z-learning-diff} on $S^h\subset \mathbb{R}$.
Here, $Z^h_t(x)=Z^h_t(nh)$, $x\in[nh,(n+1)h)$, $n=0,\pm 1,\ldots$ and set appropriate boundary conditions without strict handling.
Additionally, we approximate the number of agents $\mu_t(x)$ at vertex $x\in S^h$ at time $t$, which is a random variable, by a deterministic value $\mu_t^h(x)$, assuming an infinite number of agents.

With the time step $\Delta t^h=h^2/\sigma^2$ and the learning rate $\alpha^h(x)=\alpha \Delta t^h$, for $x\in S^h$, the update of $Z^h_t(x)$ following Eq.~\eqref{eq:dist-Z-learning-diff} is give by:
\begin{subequations}
    \begin{align}
        Z^h_{t+\Delta t^h}(x)-Z^h_t(x)&=-\alpha^h(x)\mu^h_t(x)\left(Z^h_t(x)-\exp(\beta r(x))\left(\sum_{y\in\{x,x\pm h\}}p^h(y|x)(Z^h_t(y))^\gamma\right)\right)\notag\\
        &+D\left(Z^h_t(x+h)-Z^h_t(x)+Z^h_t(x-h)-Z^h_t(x)\right)
        \label{eq:discrete-dist-Z-leaning-diff-def}\\
        &=-\alpha \Delta t^h\mu^h_t(x)\left(Z^h_t(x)-\exp(\beta r(x))\left((Z^h_t(x))^\gamma+O(h^2)\right)\right)\notag\\
        &+D\sigma^2\Delta t^h
        \frac{1}{h}\left(\frac{Z^h_t(x+h)-Z^h_t(x)}{h}-\frac{Z^h_t(x)-Z^h_t(x-h)}{h}\right)
        \label{eq:discrete-dist-Z-leaning-diff}
    \end{align}
\end{subequations}
By taking the limit as $h\to 0$, the formal space-time continuous limit $Z$ of $Z^h_t$ satisfies the following partial differential equation with appropriate boundary condition: 
\begin{equation}
\label{eq:continuous-dist-Z-leaning-diff}
    \pt_t Z(t,x)=-\alpha \mu(t,x) Z(t,x)+\alpha\mu(t,x)\exp(\beta r(x))Z^\gamma(t,x)+D\sigma^2\nabla^2 Z(t,x),
\end{equation}
where, $\mu(t,x)$ denotes the spatial limit of $\mu^h_t(x)$.
When $\mu(t,x)=N\rho(t,x)$ with $\rho(t,x)$ following~\eqref{eq:limit-fpk}, we obtain $H(\mu,Z)=\alpha\mu Z$, $G(\mu,Z,r)=\alpha\mu \exp(\beta r)Z^\gamma$ (Eq.~\eqref{eq:log-continuous}).
This indicates that the distributed learning dynamics by agent population, based on immediate local concentrations and logarithmic mapping, corresponds to a diffusion equation with linear degradation and nonlinear production in the space-time limit.


Similarly, the update of $Z^h_t(x)$ following Eq.~\eqref{eq:dist-G-learning-diff} with Talor series w.r.t. $h$ is given by:
\begin{subequations}
    \begin{align}
        Z^h_{t+\Delta t^h}(x)-Z^h_t(x)&=-\alpha^h(x)\mu^h_t(x)\left(Z^h_t(x)-\beta r(x)-\left(\ln\sum_{y\in\{x,x\pm h\}}p^h(y|x)\exp(\gamma Z^h_t(y))\right)\right)\notag\\
        &+D\left(Z^h_t(x+h)-Z^h_t(x)+Z^h_t(x-h)-Z^h_t(x)\right)\label{eq:discrete-dist-G-leaning-diff-def}\\
        &=-\alpha^h(x)\mu^h_t(x)\left(Z^h_t(x)-\beta r(x)-\gamma Z^h_t(x)+O(h^2)\right)\notag\\
        &+D\left(Z^h_t(x+h)-Z^h_t(x)+Z^h_t(x-h)-Z^h_t(x)\right),
        \label{eq:discrete-dist-G-leaning-diff}
    \end{align}
\end{subequations}
and the formal space-time continuous limit $Z$ of $Z^h_t$ satisfies the following PDE with appropriate boundary condition:
\begin{equation}
\label{eq:continuous-dist-G-leaning-diff}
    \pt_t Z(t,x)=-\alpha(1-\gamma) \mu(t,x) Z(t,x)+\alpha\beta\mu(t,x) r(x)+D\sigma^2\nabla^2 Z(t,x),
\end{equation}
yielding $H(\mu,Z)=\alpha(1-\gamma)\mu Z$, $G(\mu,Z,r)=\alpha\beta\mu r$ (Eq.~\eqref{eq:lin-continuous}).
This indicates that the distributed learning dynamics by agent population, based on immediate local concentrations and linear mapping, corresponds to a diffusion equation with linear degradation and linear production in the space-time limit.

\section{Supplemental experiments}
This section presents supplemental experiments that reinforce the finding discussed in Sec.~\ref{sec:numexp} and provide additional insights.
\subsection{Maze solving with other couplings}
\label{sec:demostration-supplement}
In this subsection, we show maze-solving dynamics using other couplings: the optimal lin-lin coupling and the non-optimal lin-exp coupling within the same environment as Sec.~\ref{sec:demostration}.

The optimal lin-lin coupling dynamics~\eqref{eq:dist-G-learning-policy} and \eqref{eq:dist-G-learning-diff} show qualitatively similar exploration behavior to the optimal log-exp coupling dynamics~\eqref{eq:dist-Z-learning-policy} and \eqref{eq:dist-Z-learning-diff} (Fig.~\ref{fig:distZ-diff_maze_onetar_VV-VZ} (a) and (b)).
At the almost steady states, the distribution of agents $\mu_t$ and of time-averaged rewards $R^i$ are nearly identical to those of the log-exp coupling (Fig.~\ref{fig:distZ-diff_maze_onetar_VV-VZ} (b) and (c) $t=15001$), though the concentration gradient is more gradual than the log-exp coupling (Fig.~\ref{fig:distZ-diff_maze_onetar_VV-VZ} (a)), consistent with observations in a chain graph (Fig.~\ref{fig:comparison-steady} (b)).
This similarly supports the theoretical equivalence of two optimal log-exp and lin-lin couplings in solving the optimization problem, as verified in Sec.~\ref{sec:optimality}. 

In contrast, the non-optimal lin-exp coupling~\eqref{eq:dist-G-learning-policy} and \eqref{eq:dist-Z-learning-diff} exhibits distinct behavior in both transient and steady states, diverging from the optimal couplings, even from the non-optimal log-lin coupling, .
During transient states, the agents produce the endogenous cue faster than agent following the optimal log-exp coupling, despite both following an exponential production dynamics (Fig.~\ref{fig:distZ-diff_maze_onetar_VV-VZ} (d) $t=1001$).
In the almost steady state, agents generate a high, gradual gradient (Fig.~\ref{fig:distZ-diff_maze_onetar_VV-VZ} (d) $t=15001$), strongly accumulate at the goal (Fig.~\ref{fig:distZ-diff_maze_onetar_VV-VZ} (e) $t=15001$), resulting in comparable $R^i$ values to those of the optimal couplings (Fig.~\ref{fig:distZ-diff_maze_onetar_VV-VZ} (f) $t=15001$).
This is consistent with the behavior observed in a chain graph (Fig.~\ref{fig:comparison-steady} (b), (c)).
Moreover, unlike the other three couplings, not all values of $Z_t$ align with the distance from the goal, implying incomplete exploration in the steady state.

This behavior can be attributed to oversensitive in agent dynamics, as discussed in Sec.~\ref{sec:optimality}.
Even with few agents at the goal, they remain there with high probability, maintaining cue production despite its diffusion. 
Meanwhile, agents that have not reached the goal are attracted by the diffused cue, reaching the goal before fully degrading the cue at unexplored vertices.
As a result, agents aggregate at the goal without sufficient exploration, achieving $R^i$ values comparable to those of a deterministic policy toward the goal.

The non-optimal lin-exp coupling exhibits comparable performance to the optimal couplings in terms of the time-averaged reward $R^i$, due to its tendency to quickly accumulate agents at the goal.
However, this advantage comes with a trade-off in temporal control costs.
For instance, agents newly placed far from the goal incur high control costs due to its oversensitivity but seldom receive rewards, resulting a lower expected discounted cumulative reward $V^\pi$ (see Fig.~\ref{fig:comparison-steady-supplement} (a)).
This distinction highlights the difference between the two metrics: while the time-averaged reward $R^i$ places greater emphasis on the distant future, valuing steady-state accumulation, while the discounted cumulative reward $V^\pi$ prioritizes the near future. 
This highlights a trade-off: optimizing for $R^i$ requires agents to tolerate higher control penalties initially to secure rewards in the long run, whereas optimizing for $V^\pi$ emphasizes minimizing immediate control costs, balancing them with achievable near-term rewards.  
This study used the time-averaged reward metric to compare agents histories across time steps, but the choice of metric should align with biological contexts.
For example, in environments with sparse targets, simple agents like neutrophils may prioritize low control costs for near-term rewards, making the discounted metric more suitable. 

These results reinforce 
the importance of optimal coupling between cue sensing and modulation for efficient migration, as discussed in Sec.~\ref{sec:demostration} and \ref{sec:optimality}.
The results also suggest that while optimization-based modeling effectively clarifies how different components of complex dynamics contribute to function,  selecting the optimization criteria, such as $R^i$ or $V^\pi$, should reflect specific biological contexts to ensure greater relevance.

\begin{figure*}[tbp]
    \hfill
    \begin{minipage}[b]{\linewidth}
        \centering
        \includegraphics[width=\textwidth]{./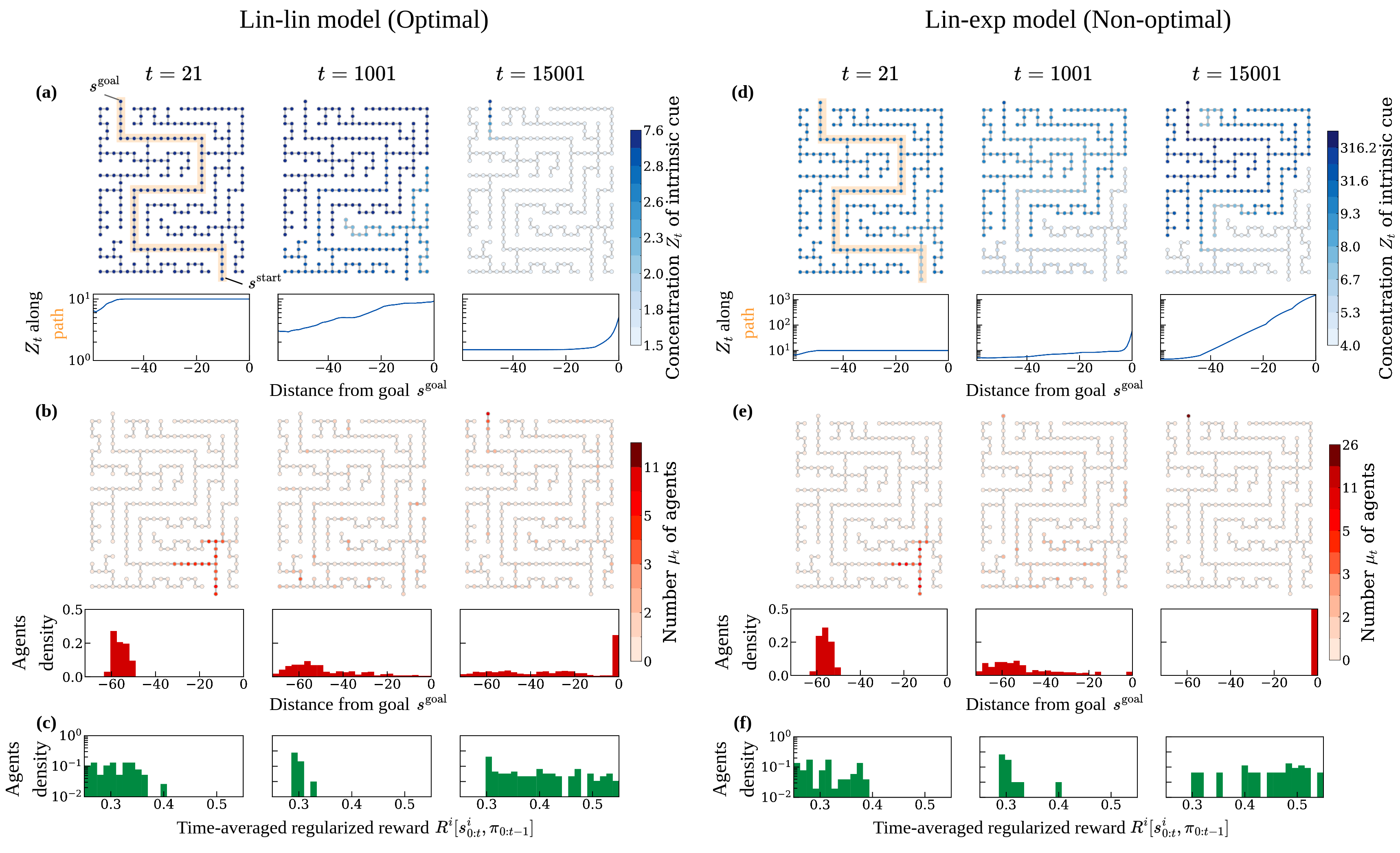}
    \end{minipage}
    \hfill
    \caption{
    Snapshots at $t=21,\, 1001,\, 15001$ for one trial of the maze exploration by the optimal lin-lin coupling dynamics~\eqref{eq:dist-G-learning-policy}, \eqref{eq:dist-G-learning-diff} and the non-optimal lin-exp coupling model~\eqref{eq:dist-G-learning-policy}, \eqref{eq:dist-Z-learning-diff}. 
    The distributions of the endogenous cue $Z_t$ are shown on the maze (upper panel) and on the path from $s^\mathrm{start}$ to $s^\mathrm{goal}$ (lower panel) for the lin-lin (a) and the lin-exp (d) couplings. 
    The path is highlighted with orange background in the maze and the distribution along the path is shown as a function from the goal $s^\mathrm{goal}$.
    The agent distribution $\mu_t:\mathcal{S}\to\{0,1,\ldots,N\}$ defined as $\mu_t(s)\coloneqq\sum_{i=1}^N\mathbbm{1}_{s_t^i=s}$ is similarly shown for the lin-lin (b) and the lin-exp (e) couplings. 
    Time-averaged regularized reward distribution $(R^i[s^i_{0:t},\pi_{0:t-1}])^N_{i=1}$ for the lin-lin (c) and the lin-exp (f) couplings.
    The vertices are color-coded by the value of cue concentration in (a) and (c) and that of agent density in (b) and (e).
    The parameter values used are $\alpha=0.0098$, $N=100$, $D=0.01$, $\gamma=0.8$, $Z_0=10.0$, $\varepsilon=0.5$, $\beta$, $r_\mathrm{target}=1.5$, $r_\mathrm{default}=0.3$.
    }
    \label{fig:distZ-diff_maze_onetar_VV-VZ}
\end{figure*}

\subsection{Supplementary information of state value to Figure~\ref{fig:comparison-steady-supplement}}
\label{sec:optimality-supp}
In this section, we show additional metrics for $V^\pi(s^0)$ (Fig.~\ref{fig:comparison-steady-supplement} (a) and (b)) and the average of $V^\pi$ over $\mathcal{S}$ (Fig.~\ref{fig:comparison-steady-supplement} (c) and (d)), supplementing the analysis of $V^\pi(s^{19})$ (Fig.~\ref{fig:comparison-steady}) in Sec.~\ref{sec:optimality}.
Due to the monotonic distribution of endogenous cue $r$, the average lies between $V^\pi(s^0)$ and $V^\pi(s^{19})$, thus detailed description of the average is omitted here. 

\begin{figure}[tbp]
    \centering
    \includegraphics[width=0.70\textwidth]{./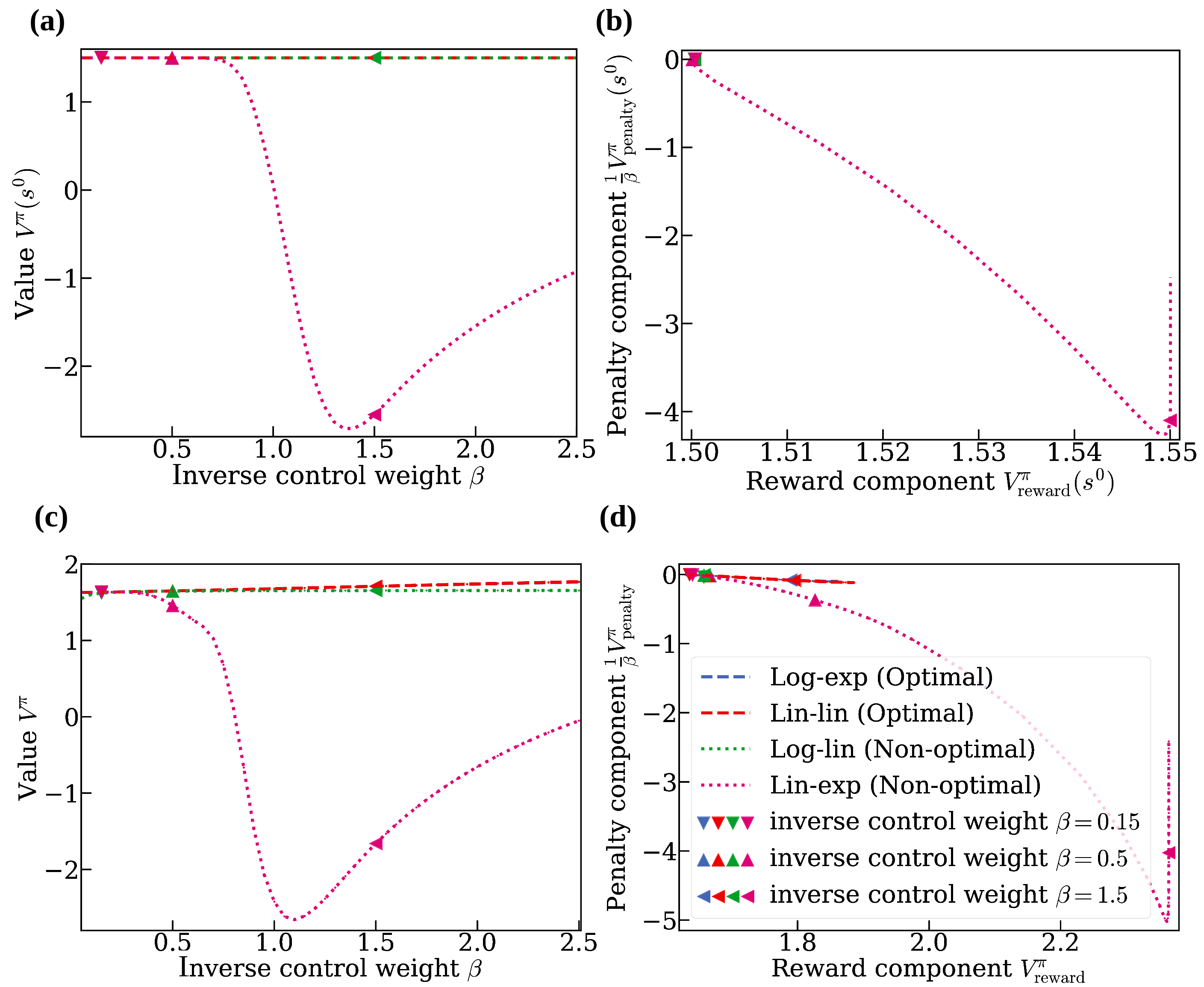}
    \caption{
    State value $V^\pi$ at vertex $s^{19}$ (a) and its average over $\mathcal{S}$ (c) for each policy $\pi[\tilde{Z}_T]$ based on the trial-averaged steady concentration $\tilde{Z}_T$ for the four couplings with $\beta\in[0.25,2.5]$. 
    The relationship between the reward component $V^\pi_\mathrm{reward}$ and penalty component $\frac{1}{\beta}V^\pi_\mathrm{penalty}$ of $V^\pi(s^{19})$ (b), and of average of $V^\pi$ $\mathcal{S}$ (d), for the four couplings with $\beta\in[0.25,2.5]$.
    The other parameters values are same as Fig.~\ref{fig:comparison-steady}.
    }
    \label{fig:comparison-steady-supplement}
\end{figure}

We observe similar trends at $s^0$ to those at $s^{19}$.
The optimal log-exp and lin-lin couplings exhibits almost identical and higher values $V^\pi$ compared to the non-optimal log-lin and lin-exp couplings (Fig.~\ref{fig:comparison-steady-supplement} (a)).
Also, the optimal couplings achieve a good balance between  the reward component $V^\pi_\mathrm{reward}$ the penalty component $V^\pi_\mathrm{penalty}/\beta$ (Fig.~\ref{fig:comparison-steady-supplement} (b)).
Additionally, the non-optimal couplings show lower $V^\pi$ and an imbalance between these components.
This consistency at both $s^0$ and $s^{19}$ reinforces the robustness of these trends across vertices.

The deficiency of the non-optimal couplings is more pronounced at $s^{0}$ than at $s^{19}$, especially for the non-optimal lin-exp coupling: for large $\beta$, the lin-exp coupling incurs a much higher penalty $V^\pi_\mathrm{penalty}(s^0)$  (Fig.~\ref{fig:comparison-steady-supplement} (a)), resulting in a significantly lower $V^\pi(s^0)$ (Fig.~\ref{fig:comparison-steady-supplement} (b)).
This discrepancy arises because $V^\pi_\mathrm{reward}(s)$ places more emphasis on rewards near the starting vertex $s$, which diminish as $s$ are farther from the goal $s^{19}$. 
Consequently, the strong control by the lin-exp couplings around $s^0$ does not contribute to reward gain and instead leads to high penalties.   


These findings support the conclusion provided in Sec.~\ref{sec:optimality} that the non-optimal couplings are indeed inferior in terms of optimization for target accumulation, reflecting their qualitative difference in behavior. 

\section{Convergence analysis on an approximated model}
In this section, we discuss a convergence analysis of the lin-lin dynamics by assuming a time-scale separation between the policy and cue modulation.
In Sec.~\ref{sec:conv-approx-model}, we introduce an approximated model for convergence analysis.
Then, we provide a convergence result and discussion in Sec.~\ref{sec:conv-result}. Its proof is provided in Sec.~\ref{sec:conv-proof}. 

\subsection{An approximated model for convergence analysis}
\label{sec:conv-approx-model}
For an ease of a convergence analysis, we hereafter focus on an approximated version~\eqref{eq:original-evol-dist-discrete}, \eqref{eq:original-evol-chem-discrete} of the lin-lin model~\eqref{eq:dist-G-learning-policy}, \eqref{eq:dist-G-learning-diff}:
\begin{subequations}
    \begin{align}
        \mu_{t+\Delta t}(s)-\mu_t(s)=\rho_t\Delta t\sum_{s'\in\mathcal{N}(s)}A[Z](s'|s)\mu_t(s'),\quad s\in\mathcal{S},
        \label{eq:original-evol-dist-discrete}\\
        Z_{t+\Delta t}(s)=\alpha_t\Delta t\left(-\mu_t(s)\Delta Z(s) -D'\sum_{s'\in\mathcal{S}}L_{ss'}Z(s')\right),\quad s\in\mathcal{S},
        \label{eq:original-evol-chem-discrete}        
    \end{align}
\end{subequations}
where $\alpha_t\in(0,1),\rho_t\in(0,1]$ are the step sizes, $\Delta Z(s)\coloneqq Z(s)-\beta r(s)-\log\sum_{s'\in\mathcal{S}}p(s'|s)\exp(\gamma Z(s'))$.
$A:\mathbb{R}^{|\mathcal{S}|}\to\mathbb{R}^{|\mathcal{S}|\times |\mathcal{S}|}$ is the infinitesimal generator (c.f. \cite[Section 11.5]{PutermanMDP1994}) of the original policy $\pi_{\Delta t}$~\eqref{eq:dist-G-learning-policy} with a lazy intrinsic policy $p$ with parameter $\varepsilon^p \Delta t$ (as in Sec.~\ref{sec:numexp}, \ref{sec:continuous-limit}):
\begin{subequations}
    \label{eq:infinitesimal-default}
    \begin{align}
        A[Z](s|s) &\coloneqq \lim_{\Delta t\to0}\frac{\pi_{\Delta t}[Z](s|s)-1}{\Delta t}= -\varepsilon^p\sum_{s''\in\mathcal{N}(s)}\frac{\exp\left(\gamma (Z(s'')-Z(s))\right)}{|\mathcal{N}(s)|},\\
        A[Z](s'|s) &\coloneqq \lim_{\Delta t\to0}\frac{\pi_{\Delta t}[Z](s'|s)}{\Delta t}= \varepsilon^p\frac{\exp\left(\gamma (Z(s')-Z(s))\right)}{|\mathcal{N}(s)|},\quad s'\in\mathcal{N}(s),\\
        A[Z](s'|s) &=0,\quad s\notin\mathcal{N}(s)^\cup\{s\},
    \end{align}
\end{subequations}
for $s\in\mathcal{S}$.
For the agent motility, instead of individual agents' states following policy~\eqref{eq:dist-G-learning-policy}, we focus on the state distribution $\mu_t\in\Delta(\mathcal{S})$ of agents following Eq.~\eqref{eq:original-evol-dist-discrete}.
This equation is the first-order approximation of the evolution of agents distribution following policy~\eqref{eq:dist-G-learning-policy} with respect to $\Delta t$. 
For endogenous cue dynamics, we have $D=N\alpha_t\Delta t D'$ between Eqs.~\eqref{eq:dist-G-learning-diff} and \eqref{eq:original-evol-chem-discrete}.
One sees that each equation of the pair~\eqref{eq:original-evol-dist-discrete} and \eqref{eq:original-evol-chem-discrete} converges to each of the same limiting PDEs~\eqref{eq:lin-continuous}, respectively. 

To establish a convergence analysis, we assume a time-scale separation between Eqs.~\eqref{eq:original-evol-dist-discrete} and \eqref{eq:original-evol-chem-discrete}: $\rho_t/\alpha_t\to 0$ as $t\to\infty$.
Then, the solution $(\mu_t, Z_t)_t$ of Eqs.~\eqref{eq:original-evol-dist-discrete} and \eqref{eq:original-evol-chem-discrete} is described by the following ODEs in the regime where $\varepsilon\to0$ and $t\to\infty$~\cite{Borkar2008}:
\begin{subequations}
    \label{eq:two-timescale-ode}
    \begin{align}
        \dot{\mu}_t&=\mathcal{P}[Z_t,\mu_t],\\
        \dot{Z}_t&=\frac{1}{\varepsilon}\mathcal{T}[Z_t,\mu_t],
    \end{align}
\end{subequations}
where 
\begin{subequations}
    \begin{align}
        \mathcal{P}[Z,\mu](s)&\coloneqq \sum_{s'\in\mathcal{N}(s)}A[Z](s|s')\mu(s')+A[Z](s|s)\mu(s),\quad s\in\mathcal{S},\\
        \mathcal{T}[Z,\mu](s)&\coloneqq-\mu(s)\Delta Z(s)-D'\sum_{s'\in\mathcal{S}}L_{ss'}Z(s'),\quad s\in\mathcal{S}
        \label{eq:modified-evol-chem-operator}
    \end{align}
\end{subequations}
for $s\in\mathcal{S}$.
Intuitively, the endogenous cue concentration $Z_t$ is the fast component while agent distribution $\mu_t$ is the slow (quasi-static) component, which sees the fast cue concentration as equilibrated.
This represents that agents move slowly, while the endogenous cue rapidly are degraded, produced and diffused, which align with several theories~\cite{Zampetaki2021PNAS} and would be plausible to some organisms~\cite{Tweedy2020Science}.

\subsection{Convergence analysis result and discussion}
\label{sec:conv-result}
We obtain the following convergence result of the iteration~\eqref{eq:original-evol-dist-discrete} \eqref{eq:original-evol-chem-discrete} for sufficiently small $D'$ and $\varepsilon^p$ by using two-timescale ODE scheme~\cite{Borkar2008}.
\begin{theorem}
    \label{thm:ODE-convergence}
    Suppose the learning rate $(\rho_t)_t$ and $(\alpha_t)_t$ satisfy
    \[\sum_{t=0}^\infty\rho_t=\sum_{t=0}^\infty\alpha_t=\infty,\quad\sum_{t=0}^\infty|\rho_t|^2+|\alpha_t|^2<\infty,\quad\rho_t/\alpha_t\to0\ (t\to\infty).\]
    Assume there exists $\delta\in(0,1)$ such that $(\mu_t)_{t\leq 0}$ generated by $\dot{\mu}_t=\mathcal{P}[Z^*[{\mu_t}],\mu_t]$ is confined inside 
    \begin{equation}
        \label{eq:conv-assump-policy}
        \tilde{\Delta}_{\delta,D',\mathcal{G}}(\mathcal{S})\coloneqq \left\{\mu\in\Delta(\mathcal{S})\relmiddle|\delta \leq\mu(s)\leq (1-D'\operatorname{maxdeg}(\mathcal{G})),\forall s\in\mathcal{S}\right\}
    \end{equation}
    and 
    \begin{equation}
        \label{eq:conv-assump-parameters}
        2|\mathcal{S}|\varepsilon^p\left(\exp(2M_{\beta,\gamma,r})+\frac{(1-\gamma)\delta+2D'\operatorname{maxdeg}(\mathcal{G})}{\gamma(1-\gamma)^2\delta^2}\frac{(M_{\beta,\gamma,r})^2}{\exp(m_{\beta,\gamma,r})}\right)<1,
    \end{equation}
    where $\operatorname{maxdeg}(\mathcal{G})\coloneqq \max_{s\in\mathcal{S}}\operatorname{deg}(s)$, and $M_{\beta,\gamma,r} \coloneqq \|r\|_\infty\beta\gamma/(1-\gamma)$ and $m_{\beta,\gamma,r}\coloneqq r_\mathrm{min}\beta\gamma/(1-\gamma)$ are constant depends on parameters.
    Then, the discrete trajectory $(\mu_t,Z_t)_t$ generated by the iteration~\eqref{eq:original-evol-dist-discrete}, \eqref{eq:original-evol-chem-discrete} converges to a fixed point $(\mu^*,Z^*[\mu^*])$ as $t\to\infty$.
\end{theorem}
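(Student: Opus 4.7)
The plan is to invoke the two-timescale stochastic approximation framework of Borkar, reading the iteration~\eqref{eq:original-evol-dist-discrete}--\eqref{eq:original-evol-chem-discrete} as a noisy Euler discretization of the ODE pair~\eqref{eq:two-timescale-ode}. Because $\rho_t/\alpha_t\to 0$, the cue concentration $Z_t$ equilibrates on each slow-time window to the solution $Z^*[\mu_t]$ of the fast equation $\mathcal{T}[Z,\mu_t]=0$, while $\mu_t$ shadows the reduced slow flow $\dot{\mu}=\mathcal{P}[Z^*[\mu],\mu]$. The proof therefore reduces to three ingredients: existence, uniqueness, and Lipschitz dependence of $Z^*[\mu]$; global asymptotic stability of a fixed point $\mu^*$ of the reduced slow flow inside $\tilde{\Delta}_{\delta,D',\mathcal{G}}(\mathcal{S})$; and verification of the boundedness and step-size hypotheses so that Borkar's theorem applies.

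For the fast scale I would fix $\mu$ with $\mu(s)\geq \delta$ and rewrite the equilibrium condition $\mathcal{T}[Z,\mu]=0$ in fixed-point form $Z(s)=\mathcal{B}[Z](s)-(D'/\mu(s))\sum_{s'}L_{ss'}Z(s')$, where $\mathcal{B}[Z](s)\coloneqq\beta r(s)+\log\sum_{s'}p(s'|s)\exp(\gamma Z(s'))$ is the entropy-regularized Bellman operator, known to be a $\gamma$-contraction in $\|\cdot\|_\infty$. A perturbation estimate using $\|L\|_{\infty\to\infty}\leq 2\operatorname{maxdeg}(\mathcal{G})$ and $\mu(s)\geq\delta$ shows that the perturbed operator remains a contraction with Lipschitz constant at most $\gamma+2D'\operatorname{maxdeg}(\mathcal{G})/\delta<1$ for $D'$ small, giving a unique $Z^*[\mu]$. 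Applying the same estimate to $Z^*[\mu_1]-Z^*[\mu_2]$ yields a Lipschitz bound of order $((1-\gamma)\delta+2D'\operatorname{maxdeg}(\mathcal{G}))/((1-\gamma)^2\delta^2)$, which is exactly the prefactor appearing in the second term of the bracket in~\eqref{eq:conv-assump-parameters}. A priori $L^\infty$ bounds $m_{\beta,\gamma,r}\leq\gamma Z^*[\mu]\leq M_{\beta,\gamma,r}$ follow from the standard bounded-reward estimate for the Bellman equation.

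For the slow scale I would substitute $Z=Z^*[\mu]$ and study $\dot{\mu}=\mathcal{P}[Z^*[\mu],\mu]$ on $\tilde{\Delta}_{\delta,D',\mathcal{G}}(\mathcal{S})$. Since $p$ is a lazy random walk on a connected graph, the generator $A[Z]$ is irreducible for every $Z$, so $\mathcal{P}[Z,\cdot]$ admits a unique invariant distribution $\Pi[Z]$, and equilibria of the slow flow are fixed points of $\mu\mapsto\Pi[Z^*[\mu]]$. I would bound the Jacobian of this composite map by the chain rule: the sensitivity of $A[Z]$ with respect to $Z$ yields the factor $\varepsilon^p\exp(2M_{\beta,\gamma,r})$ via~\eqref{eq:infinitesimal-default} and the $L^\infty$ bound on $Z^*[\mu]$, while the sensitivity of $\Pi[\cdot]$ to generator perturbations scales like $1/\exp(m_{\beta,\gamma,r})$ through the spectral gap, and the Lipschitz bound from Step~1 supplies the remaining factor. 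Hypothesis~\eqref{eq:conv-assump-parameters} is precisely the statement that the product of these sensitivities, multiplied by $2|\mathcal{S}|$, stays below $1$; this makes the composite map a strict contraction on $\tilde{\Delta}_{\delta,D',\mathcal{G}}$ and hence produces a unique attracting fixed point $\mu^*$.

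Finally, the step-size hypotheses $\sum\rho_t=\sum\alpha_t=\infty$, $\sum(\rho_t^2+\alpha_t^2)<\infty$, $\rho_t/\alpha_t\to 0$ together with the $L^\infty$ bound on $Z$ derived above and the assumed confinement of $(\mu_t)_t$ to $\tilde{\Delta}_{\delta,D',\mathcal{G}}$ give the tightness and Lyapunov-stability premises of the two-timescale ODE theorem, delivering $(\mu_t,Z_t)\to(\mu^*,Z^*[\mu^*])$. The hard part, in my view, is Step~3: extracting an explicit contraction rate that matches~\eqref{eq:conv-assump-parameters}, because one must track how a perturbation in $\mu$ propagates through the implicit map $Z^*[\cdot]$, then through the nonlinear dependence of $A[Z]$ on $Z$, and finally through a stationary-distribution sensitivity, all without violating the interior constraint $\delta\leq\mu(s)\leq 1-D'\operatorname{maxdeg}(\mathcal{G})$—an invariance the theorem conveniently posits as a hypothesis rather than derives, which is also the most restrictive facet of the statement.
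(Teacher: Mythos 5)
Your skeleton is the same as the paper's: reduce the theorem to Borkar's two-timescale ODE theorem by establishing (i) Lipschitz continuity of $\mathcal{P}$ and $\mathcal{T}$, (ii) a unique globally asymptotically stable equilibrium $Z^*[\mu]$ of the fast flow that is Lipschitz in $\mu$, and (iii) a unique globally asymptotically stable equilibrium of the reduced slow flow. The gaps are in how you establish (ii) and (iii). For the fast scale, your reformulation $Z(s)=\mathcal{B}[Z](s)-(D'/\mu(s))\sum_{s'}L_{ss'}Z(s')$ yields the contraction constant $\gamma+2D'\operatorname{maxdeg}(\mathcal{G})/\delta$, which you concede is $<1$ only ``for $D'$ small.'' That smallness is not a hypothesis of the theorem and is not implied by the stated ones: confinement only forces $D'\operatorname{maxdeg}(\mathcal{G})\leq 1-\delta$, and \eqref{eq:conv-assump-parameters} can always be met by shrinking $\varepsilon^p$, so parameters such as $\gamma=0.8$, $\delta=0.01$, $D'\operatorname{maxdeg}(\mathcal{G})=0.01$ are admissible yet give $\gamma+2D'\operatorname{maxdeg}(\mathcal{G})/\delta=2.8$. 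The paper avoids dividing by $\mu(s)$: it writes $\mathcal{T}[\cdot,\mu]=\mathcal{B}_{\mu}[\cdot]-\mathrm{id}$ with $\mathcal{B}_{\mu}[Z]=\operatorname{diag}(\mu)\mathcal{B}[Z]+(I_{|\mathcal{S}|}-\operatorname{diag}(\mu)-D'L)Z$ and uses the upper bound $\mu(s)\leq 1-D'\operatorname{maxdeg}(\mathcal{G})$ to make $I_{|\mathcal{S}|}-\operatorname{diag}(\mu)-D'L$ entrywise non-negative, obtaining the constant $1-(1-\gamma)\delta<1$ for every admissible $D'$. You need this (or an equivalent) construction for your Step~1 to close under the stated hypotheses.

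For the slow scale, your detour through the invariant distribution $\Pi[Z]$ and its spectral-gap sensitivity is not how \eqref{eq:conv-assump-parameters} arises and would not reproduce it: the factor $\exp(-m_{\beta,\gamma,r})$ there comes from the Lipschitz constant of $Z\mapsto A[Z]$ (an upper bound on $1/\exp(\gamma Z(s))$ when differencing the exponentials in \eqref{eq:infinitesimal-default}), not from a mixing-time estimate, and a genuine stationary-distribution perturbation bound would bring in graph-dependent spectral constants that do not appear in the condition. The paper instead bounds $\|\mathcal{P}[Z^*[\mu],\mu]-\mathcal{P}[Z^*[\mu'],\mu']\|_\infty$ directly by splitting into a $\mu$-variation at fixed $Z$ (giving the $2|\mathcal{S}|\varepsilon^p\exp(2M_{\beta,\gamma,r})$ term) and a $Z$-variation at fixed $\mu$ composed with the Lipschitz bound on $\mu\mapsto Z^*[\mu]$ (giving the second term); \eqref{eq:conv-assump-parameters} is precisely the statement that the sum is below $1$. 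Replacing your stationary-distribution argument with this direct composition is what actually assembles the hypothesis you are given.
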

\begin{proof}
    The assumptions of \cite[Chapter 6. Theorem 2]{Borkar2008} are satisfied. 
    This guarantees the convergence.
\end{proof}
Assumptions~\eqref{eq:conv-assump-policy}, \eqref{eq:conv-assump-parameters} on parameters indicates that agents disperse among the entire environments, move at a sufficiently slow speed and degrade or produce the endogenous cue sufficiently faster than its diffusion.
The diminishing property of step sizes (e.g., $\rho_t=1/t^{2/3},\alpha_t=1/t$) is a typical assumption of asymptotic convergence analysis, but has a physical interpretation that the progress of time slows down as time advances.
Even if the step sizes are constant ($\rho_t\equiv\rho\in(0,1)$, $\alpha_t\equiv\alpha\in(0,1)$), we may prove a similar convergence around the fixed point, i.e., $\limsup_{t\to\infty}\|\mu_t-\mu^*\|_\infty=O(\alpha)+O(\rho/\alpha)$ and $\limsup_{t\to\infty}\|Z_t-Z^*[{\mu^*}]\|_\infty=O(\alpha)+O(\rho/\alpha)$~\cite[Chapter 9.4]{Borkar2008}.

An important outcome of the convergence result is the evaluation between the optimal $Z^*$ and the equilibrium $Z^*[\mu^*]$ using the perturbation parameter $D'$.
Optimality of learning dynamics is perturbed by the regularization term, which is added to the cost function $\mathcal{C}_\mathcal{F}$ to model cue diffusion.
This perturbation is characterized by the following inequality of Lemma~\ref{lemma:mod-bellman-cont}:
\[    \|Z^*[\mu]-Z^*\|_\infty\leq \frac{2D'\operatorname{maxdeg}(\mathcal{G})\beta\|r\|_\infty}{(1-\gamma)^2\delta},
\]
which indicates that the equilibrium $Z^*[\mu^*]$ converges to the optimal value $Z^*$ in the limit of no diffusion, i.e., $D'\to 0$ (equivalently $N\to\infty$). 

A limitation of our convergence result is that we cannot check if the assumption~\eqref{eq:conv-assump-policy} holds for given parameters before execution.
Since we can observe the convergence with parameters in much broader region, we expect that we would discard this formal assumption if we could directly evaluate the modulation $\mathcal{T}$ with $\mu_t$ varied.  
Such generalization could allow convergence analysis on the log-exp model, in which we cannot exploit contraction property.

\subsection{Proof of the convergence analysis}
\label{sec:conv-proof}
The main aim of this subsection is to check the condition of Theorem.~\ref{thm:ODE-convergence} by using two-timescale ODE scheme.

First, we can check the Lipschitz continuities of the operators $\mathcal{P}[Z,\mu]$ and $\mathcal{T}[Z,\mu]$.
\begin{proposition}
    \label{prop:P-Lipschitz}
    $\mathcal{P}[Z,\mu]$ is Lipschitz w.r.t. both $Z$ and $\mu$.
\end{proposition}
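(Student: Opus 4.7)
The plan is to exploit the separable structure of $\mathcal{P}$: for fixed $Z$ it is \emph{linear} in $\mu$, with coefficients formed from the entries $A[Z](\cdot|\cdot)$; and for fixed $\mu$ it depends on $Z$ only through the smooth map $Z\mapsto A[Z]$. So the proof reduces to (i) pointwise bounds on $A[Z]$ and (ii) a Lipschitz estimate for $Z\mapsto A[Z]$. Before carrying either step out, I would restrict attention to a bounded set of the form $\{(Z,\mu):\|Z\|_\infty\leq M,\,\mu\in\Delta(\mathcal{S})\}$, where $M$ can be taken as the uniform bound $M_{\beta,\gamma,r}$ already appearing in Theorem~\ref{thm:ODE-convergence}; this is the region in which the convergence analysis operates, and $\mu\in\Delta(\mathcal{S})$ is automatically bounded by $1$ componentwise.

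For Lipschitz continuity in $\mu$, linearity immediately gives $\mathcal{P}[Z,\mu]-\mathcal{P}[Z,\mu']=\mathcal{P}[Z,\mu-\mu']$. Using \eqref{eq:infinitesimal-default}, each entry of $A[Z]$ is bounded in magnitude by $\varepsilon^p\exp(2\gamma M)$ on the chosen set. Summing over the at most $\operatorname{maxdeg}(\mathcal{G})+1$ nonzero terms in each row yields
\begin{equation*}
    \|\mathcal{P}[Z,\mu]-\mathcal{P}[Z,\mu']\|_\infty\leq C_\mu\,\|\mu-\mu'\|_\infty,\qquad C_\mu=2\,\operatorname{maxdeg}(\mathcal{G})\,\varepsilon^p\exp(2\gamma M).
\end{equation*}

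For Lipschitz continuity in $Z$, I would apply the elementary estimate $|e^a-e^b|\leq e^{\max(|a|,|b|)}|a-b|$ with $a,b\in[-2\gamma M,2\gamma M]$, which makes each entry of $A[Z]$ Lipschitz in $Z$ in the $\infty$-norm with constant of order $\gamma\varepsilon^p e^{2\gamma M}$. Together with $0\leq\mu(s')\leq 1$, summing over $\mathcal{N}(s)\cup\{s\}$ and applying the triangle inequality to the two terms defining $\mathcal{P}$ gives
\begin{equation*}
    \|\mathcal{P}[Z,\mu]-\mathcal{P}[Z',\mu]\|_\infty\leq C_Z\,\|Z-Z'\|_\infty,
\end{equation*}
with $C_Z$ a constant depending only on $\gamma,\varepsilon^p,M,\operatorname{maxdeg}(\mathcal{G})$. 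A final triangle-inequality step combines the two one-variable estimates into joint local Lipschitz continuity on the invariant set.

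The only substantive obstacle is that Lipschitz continuity is inherently \emph{local}: the exponentials in $A[Z]$ are not uniformly Lipschitz on all of $\mathbb{R}^{|\mathcal{S}|}$, so the argument requires confinement of $Z$ to a bounded set. This confinement is supplied by the two-timescale setup surrounding Theorem~\ref{thm:ODE-convergence} (and would otherwise need a separate a priori bound derived from the modulation dynamics), so no additional work beyond the elementary estimates above is needed.
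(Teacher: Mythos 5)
Your proposal is correct and follows essentially the same route as the paper's proof: linearity in $\mu$ combined with a uniform bound on the entries of $A[Z]$, and local Lipschitz continuity of the exponential on the bounded band in which $\gamma Z$ lives (the paper's $m_{\beta,\gamma,r}\leq\gamma Z(s)\leq M_{\beta,\gamma,r}$, which you make explicit as a restriction to an invariant set). The only cosmetic difference is that you sum over the at most $\operatorname{maxdeg}(\mathcal{G})+1$ nonzero entries per row rather than invoking $\ell_1$/$\ell_2$/$\ell_\infty$ norm equivalences over all of $\mathcal{S}$, which yields slightly sharper constants but does not change the argument.
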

\begin{proof}
    Let $M_{\beta,\gamma,r} \coloneqq \|r\|_\infty\beta\gamma/(1-\gamma)$ and $m_{\beta,\gamma,r}\coloneqq r_\mathrm{min}\beta\gamma/(1-\gamma)$, then $m_{\beta,\gamma,r}\leq |\gamma Z(s)|\leq M_{\beta,\gamma,r}$.
    Next, we check that $\mathcal{P}[Z,\mu]$ is Lipschitz w.r.t. $Z$:
    \begin{align*}
        \|\mathcal{P}[Z,\mu]-\mathcal{P}[Z',\mu]\|_\infty&\leq \sum_{s\in\mathcal{S}}\left|\sum_{s'\in\mathcal{S}}\varepsilon^p\mu(s')\left(A[Z](s|s')-A[Z'](s|s')\right)\right|\\
        &\leq \sum_{s'\in\mathcal{S}}\varepsilon^p\mu(s')\sum_{s\in\mathcal{S}}\left|A[Z](s|s')-A[Z'](s|s')\right|\\
        &\leq \sum_{s'\in\mathcal{S}}\varepsilon^p\mu(s')\sqrt{|\mathcal{S}|}\|A[Z](\cdot|s')-A[Z'](\cdot|s')\|_2\\
        &\leq \sum_{s'\in\mathcal{S}}\frac{2\sqrt{|\mathcal{S}|}\varepsilon^p\mu(s')}{|\mathcal{N}(s')|}\|\exp(\gamma (Z(\cdot)-Z(s')))-\exp(\gamma (Z'(\cdot)-Z'(s')))\|_2\\
        &\leq \sum_{s'\in\mathcal{S}}\frac{2\sqrt{|\mathcal{S}|}\varepsilon^p\mu(s')}{|\mathcal{N}(s')|\exp(m_{\beta,\gamma,r})}\|\exp(\gamma Z(\cdot))-\exp(\gamma Z'(\cdot))\|_2\\
        &\leq \sum_{s'\in\mathcal{S}}\frac{2\sqrt{|\mathcal{S}|}\varepsilon^p M_{\beta,\gamma,r}\mu(s')}{|\mathcal{N}(s')|\exp(m_{\beta,\gamma,r})}
        \|Z-Z'\|_2\\
        &\leq \sum_{s'\in\mathcal{S}}\frac{2\sqrt{|\mathcal{S}|}\varepsilon^pM_{\beta,\gamma,r}\mu(s')}{\exp(m_{\beta,\gamma,r})}
        \|Z-Z'\|_2\\
        &\leq \sum_{s'\in\mathcal{S}}\frac{2\sqrt{|\mathcal{S}|}\varepsilon^pM_{\beta,\gamma,r}}{\exp(m_{\beta,\gamma,r})}
        \|Z-Z'\|_2\\
        &\leq \frac{2|\mathcal{S}|\varepsilon^pM_{\beta,\gamma,r}}{\exp(m_{\beta,\gamma,r})}
        \|Z-Z'\|_\infty
    \end{align*}
    where we used the definition of $\|\cdot\|_\infty$ in the first inequality, the norm equlivalence between $\|\cdot\|_1$ and $\|\cdot\|_2$ in the third, the definition~\eqref{eq:infinitesimal-default} of $A[Z]$ in the forth, $|x/a-y/b|<|x-y|/c$ for $0<\mathrm{min}\{a,b\}\leq c$ in the fifth, $M_{\beta,\gamma,r}/\beta\gamma$-Lipschitz property of $Z$ in the sixth, $\mathcal{N}(s)\geq 1$ for any $s$ due to the connectivity in the seventh, $\mu(s)\leq 1$ for any $s$ in the eighth, and the norm equlivalence between $\|\cdot\|_2$ and $\|\cdot\|_\infty$ in the last.

    Then, we also check that $\mathcal{P}[Z,\mu]$ is Lipschitz w.r.t. $\mu$:
    \begin{align*}
    \|\mathcal{P}[Z,\mu]-\mathcal{P}[Z,\mu']\|_\infty&\leq
    \|A[Z]\|_\infty \|\mu-\mu'\|_1\\
    &\leq 2\varepsilon^p|\mathcal{S}|\exp\left(2M_{\beta,\gamma,r}\right)\|\mu-\mu'\|_\infty,
    \end{align*}
    where we used the following an upper bound of $\|A[Z]\|_\infty$:
    \begin{align*}
    \|A[Z]\|_\infty&= \varepsilon^p\max_{s\in\mathcal{S}}2\sum_{s'\in\mathcal{N}(s)}\frac{\exp\left(\gamma (Z(s')-Z(s))\right)}{|\mathcal{N}(s)|}\\
    &\leq \varepsilon^p\max_{s\in\mathcal{S}}2\sum_{s'\in\mathcal{N}(s)}\frac{\exp\left(2\gamma\|Z\|_\infty\right)}{|\mathcal{N}(s)|}\\
    &=2\varepsilon^p\exp\left(2M_{\beta,\gamma,r}\right).
    \end{align*}
\end{proof}

\begin{proposition}
    $\mathcal{T}[Z,\mu]$ is Lipschitz w.r.t. both $Z$ and $\mu$.
\end{proposition}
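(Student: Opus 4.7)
The proof will mirror the strategy used for Proposition~\ref{prop:P-Lipschitz}: decompose $\mathcal{T}[Z,\mu]$ into the modulation term $-\mu(s)\Delta Z(s)$ and the diffusion term $-D'\sum_{s'}L_{ss'}Z(s')$, and bound each contribution separately in the sup norm. The diffusion term is linear in $Z$ and independent of $\mu$, so its Lipschitz constant in $Z$ is immediately controlled by $D'\|L\|_\infty \leq 2D'\operatorname{maxdeg}(\mathcal{G})$, and it contributes nothing to the Lipschitz bound in $\mu$. Hence the real work is to control $(Z,\mu)\mapsto \mu(s)\Delta Z(s)$.

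For Lipschitz continuity in $Z$ with $\mu$ fixed, I would first establish that the log-sum-exp map
\[
f(Z)(s) \;\coloneqq\; \log\sum_{s'\in\mathcal{S}} p(s'|s)\exp(\gamma Z(s'))
\]
is $\gamma$-Lipschitz in $\|\cdot\|_\infty$. This is a standard fact: its partial derivatives $\partial_{Z(s')} f(s) = \gamma\, p(s'|s)\exp(\gamma Z(s'))/\sum_{s''} p(s''|s)\exp(\gamma Z(s''))$ are non-negative and sum to $\gamma$, so $|f(Z)(s)-f(Z')(s)|\le \gamma\|Z-Z'\|_\infty$. Consequently
\[
\|\Delta Z - \Delta Z'\|_\infty \;\le\; (1+\gamma)\|Z-Z'\|_\infty,
\]
and since $\mu\in\Delta(\mathcal{S})$ implies $\mu(s)\le 1$, combining with the diffusion bound yields
\[
\|\mathcal{T}[Z,\mu]-\mathcal{T}[Z',\mu]\|_\infty \;\le\; \bigl((1+\gamma)+2D'\operatorname{maxdeg}(\mathcal{G})\bigr)\|Z-Z'\|_\infty .
\]

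For Lipschitz continuity in $\mu$ with $Z$ fixed, only the first term survives, giving
\[
\|\mathcal{T}[Z,\mu]-\mathcal{T}[Z,\mu']\|_\infty \;\le\; \|\Delta Z\|_\infty\, \|\mu-\mu'\|_\infty,
\]
so everything reduces to an a priori bound on $\|\Delta Z\|_\infty$. Using the same value-function bounds invoked in Proposition~\ref{prop:P-Lipschitz}, namely $\gamma\|Z\|_\infty \le M_{\beta,\gamma,r}$, together with the $\gamma$-Lipschitz property of $f$ applied around $Z\equiv 0$ (which yields $|f(Z)(s)|\le \gamma\|Z\|_\infty$), I can bound
\[
\|\Delta Z\|_\infty \;\le\; \|Z\|_\infty + \beta\|r\|_\infty + \gamma\|Z\|_\infty \;\le\; \tfrac{1+\gamma}{\gamma}M_{\beta,\gamma,r} + \beta\|r\|_\infty,
\]
which gives the desired constant.

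The main obstacle, minor but worth flagging, is justifying that $Z$ is restricted to the bounded range used in Proposition~\ref{prop:P-Lipschitz} so that $\|\Delta Z\|_\infty$ admits a uniform bound; once this is done, the rest is straightforward norm arithmetic. This boundedness follows from the same invariant set reasoning implicit in the convergence theorem's assumptions, since the ODE~\eqref{eq:two-timescale-ode} keeps $Z_t$ in a neighbourhood of the fixed point $Z^*[\mu]$, whose components are uniformly bounded in terms of $\|r\|_\infty$, $\beta$, and $\gamma$.
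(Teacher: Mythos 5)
Your proposal is correct and follows essentially the same route as the paper: split $\mathcal{T}$ into the modulation and diffusion terms, use the $\gamma$-contraction (equivalently, $\gamma$-Lipschitz) property of the log-sum-exp Bellman operator to get the $(1+\gamma)$ factor for the $Z$-Lipschitz bound plus a linear diffusion contribution, and bound $\|\Delta Z\|_\infty$ by $(1+\gamma)M_{\beta,\gamma,r}/\gamma+\beta\|r\|_\infty$ for the $\mu$-Lipschitz bound. The only cosmetic differences are that you spell out the log-sum-exp Lipschitz estimate via its partial derivatives and write the diffusion constant as $2D'\operatorname{maxdeg}(\mathcal{G})$ where the paper uses $D'\|L\|_1$, and you make explicit the a priori bound $\gamma\|Z\|_\infty\leq M_{\beta,\gamma,r}$ that the paper uses implicitly.
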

\begin{proof}
    \begin{align*}
        \|\mathcal{T}[Z,\mu]-\mathcal{T}[Z',\mu]\|_\infty&\leq \|\mu\|_\infty \|\Delta Z-\Delta Z'\|_\infty +D'\|L(Z-Z')\|_\infty\\
        &= \|\mu\|_\infty \|Z-\mathcal{B}[Z]-Z'+\mathcal{B}[Z']\|_\infty +D'\|L(Z-Z')\|_\infty\\
        &\leq \|\mu\|_\infty (\|Z-Z'\|_\infty+\|\mathcal{B}[Z]-\mathcal{B}[Z']\|_\infty) +D'\|L(Z-Z')\|_\infty\\
        &\leq \|\mu\|_\infty (1+\gamma)\|Z-Z'\|_\infty+D'\|L(Z-Z')\|_\infty\\
        &\leq \left(\|\mu\|_\infty (1+\gamma)+D'\|L\|_1\right)\|Z-Z'\|_\infty,
    \end{align*}
    where $\mathcal{B}:\mathbb{R}^{|\mathcal{S}|}\to\mathbb{R}^{|\mathcal{S}|}$ is defined as $\mathcal{B}[Z](s)\coloneqq \beta r(s)+\log\sum_{s'\in\mathcal{S}}p(s'|s)\exp(\gamma Z(s'))$,
    and 
    \begin{align*}
        \|\mathcal{T}[Z,\mu]-\mathcal{T}[Z,\mu']\|_\infty&\leq \|\mu-\mu'\|_\infty \|\Delta Z\|_\infty\\
        &\leq \|\mu-\mu'\|_\infty \left((1+\gamma)M_{\beta,\gamma,r}/\gamma+\beta\|r\|_\infty\right),
    \end{align*}
    where we used
    \begin{align*}
        \|\Delta Z\|_\infty &= \max_{s\in\mathcal{S}}\left\{Z(s)-\beta r(s)-\ln\sum_{s'\in\mathcal{S}}p(s'|s)\exp(\gamma Z(s'))\right\}\\
        &\leq \|Z\|_\infty+\beta\|r\|_\infty+\max_{s\in\mathcal{S}}\left\{\ln\sum_{s'\in\mathcal{S}}p(s'|s)\exp(\gamma Z(s'))\right\}\\
        &\leq \|Z\|_\infty+\beta\|r\|_\infty+\max_{s\in\mathcal{S}}\left\{\ln\sum_{s'\in\mathcal{S}}p(s'|s)\exp(\gamma \|Z\|_\infty)\right\}\\
        &= \|Z\|_\infty+\beta\|r\|_\infty+\gamma \|Z\|_\infty\\
        &\leq (1+\gamma)M_{\beta,\gamma,r}/\gamma+\beta\|r\|_\infty
    \end{align*}
\end{proof}

Under some assumption on $\mu$, $\mathcal{G}$ and $D'$, one can prove that $\mathcal{T}[\cdot,\mu]$ has a unique fixed point $Z^*[\mu]$, which converges to $Z^*$ as $D'\to 0$.
\begin{lemma}
    \label{lemma:mod-bellman-cont}
    Assume $\mu$ satisfies $\delta \leq\mu(s)\leq (1-D'\operatorname{maxdeg}(\mathcal{G})),\ \forall s\in\mathcal{S}$ for some $\delta>0$.
    Then, $\mathcal{T}[\cdot,\mu]$ has a unique fixed point $Z^*[\mu]$, which satisfies
    \begin{equation}
        \|Z^*[\mu]-Z^*\|_\infty\leq \frac{2D'\operatorname{maxdeg}(\mathcal{G})\beta\|r\|_\infty}{(1-\gamma)^2\delta}.
        \label{eq:V-mu-bound}
    \end{equation}
\end{lemma}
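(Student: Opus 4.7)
The plan is to reformulate the fixed-point equation $\mathcal{T}[Z,\mu]=0$ as a fixed-point of a contraction, then compare to the optimal Bellman fixed point. Setting $\mathcal{B}[Z](s) := \beta r(s) + \log\sum_{s'} p(s'|s)\exp(\gamma Z(s'))$, the equation $\mathcal{T}[Z,\mu]=0$ becomes
\begin{equation*}
  (\mu(s)+D'\deg(s))Z(s) - D'\sum_{s'\in\mathcal{N}(s)} Z(s') = \mu(s)\mathcal{B}[Z](s),
\end{equation*}
equivalently $Z = \tilde{\mathcal{B}}[Z,\mu]$ where
\begin{equation*}
  \tilde{\mathcal{B}}[Z,\mu](s) := \frac{\mu(s)}{\mu(s)+D'\deg(s)}\mathcal{B}[Z](s) + \frac{D'}{\mu(s)+D'\deg(s)}\sum_{s'\in\mathcal{N}(s)} Z(s').
\end{equation*}

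The first step is existence and uniqueness of $Z^*[\mu]$. The entropy-regularized Bellman operator $\mathcal{B}$ is a $\gamma$-contraction in the sup-norm (log-sum-exp is nonexpansive, and the $\gamma$ comes from the $\exp(\gamma Z)$ in its argument). The second term of $\tilde{\mathcal{B}}[\cdot,\mu]$ is a convex averaging over neighbors and hence $1$-Lipschitz in sup-norm. Combining these weighted contributions gives contraction factor
\begin{equation*}
  c \;=\; \sup_{s\in\mathcal{S}} \frac{\mu(s)\gamma + D'\deg(s)}{\mu(s)+D'\deg(s)} \;<\; 1,
\end{equation*}
since $\mu(s)\geq\delta>0$ and $\gamma<1$. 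Banach's fixed-point theorem then delivers the unique $Z^*[\mu]$.

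For the perturbation bound~\eqref{eq:V-mu-bound}, I will \emph{not} use the standard "$\|\tilde{\mathcal{B}}[Z^\ast,\mu]-Z^\ast\|_\infty/(1-c)$" estimate, since that loses an extra factor of $\delta$ from a denominator that is only lower-bounded by $\delta$. Instead I will subtract the two fixed-point equations pointwise. Using $Z^* = \mathcal{B}[Z^*]$, the identity
\begin{equation*}
  (\mu(s)+D'\deg(s))Z^*(s) - D'\sum_{s'\in\mathcal{N}(s)} Z^*(s') \;=\; \mu(s)\mathcal{B}[Z^*](s) + D'\sum_{s'} L_{ss'} Z^*(s')
\end{equation*}
holds; subtracting the corresponding equation for $Z^*[\mu]$ and writing $e := Z^*[\mu]-Z^*$ gives, via the $\gamma$-Lipschitz property of $\mathcal{B}$,
\begin{equation*}
  (\mu(s)+D'\deg(s))|e(s)| \;\leq\; \bigl(D'\deg(s) + \gamma\mu(s)\bigr)\|e\|_\infty + D'\,|LZ^*(s)|.
\end{equation*}
Evaluating at $s^\ast\in\operatorname{argmax}_s|e(s)|$ and cancelling $\|e\|_\infty$ on both sides collapses this to
\begin{equation*}
  \mu(s^\ast)(1-\gamma)\,\|e\|_\infty \;\leq\; D'\,|LZ^*(s^\ast)|.
\end{equation*}

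Finally, I bound the right-hand side by $|LZ^*(s^\ast)| \leq 2\deg(s^\ast)\|Z^*\|_\infty \leq 2\operatorname{maxdeg}(\mathcal{G})\,\beta\|r\|_\infty/(1-\gamma)$, using the standard uniform bound $\|V^*\|_\infty\leq\|r\|_\infty/(1-\gamma)$ (hence $\|Z^*\|_\infty\leq\beta\|r\|_\infty/(1-\gamma)$ since $Z^\ast=\beta V^\ast$ under the lin–lin mapping), together with $\mu(s^\ast)\geq\delta$, yielding precisely the claimed bound. The main subtlety is step three: the naive application of the Banach bound would produce an extra $1/\delta$ (because both $1-c$ and $\|\tilde{\mathcal{B}}[Z^\ast,\mu]-Z^\ast\|_\infty$ involve the denominator $\mu(s)+D'\deg(s)$), so it is essential to use the pointwise subtraction trick so that the $\mu(s^\ast)+D'\deg(s^\ast)$ factors on the two sides cancel before any sup-norm bound is taken.
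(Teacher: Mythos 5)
Your proof is correct, and it reaches the stated bound by a route that differs from the paper's in both the reformulation and the final estimate. The paper defines the unnormalized operator $\mathcal{B}_{\mu}[Z] = \operatorname{diag}(\mu)\mathcal{B}[Z]+(I-\operatorname{diag}(\mu))Z-D'LZ$, shows it is a $(1-(1-\gamma)\delta)$-contraction by checking that $\operatorname{diag}(\mu)$ and $I-\operatorname{diag}(\mu)-D'L$ are entrywise nonnegative (this is where the hypothesis $\mu(s)\leq 1-D'\operatorname{maxdeg}(\mathcal{G})$ is used), and then applies the standard perturbation estimate $\|Z^*[\mu]-Z^*\|_\infty\leq\|\mathcal{B}_{\mu}[Z^*]-Z^*\|_\infty/((1-\gamma)\delta)$ with residual $\|\mathcal{B}_{\mu}[Z^*]-Z^*\|_\infty=D'\|LZ^*\|_\infty$; because that residual carries no $1/\delta$, the Banach bound is already tight for this operator. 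You instead normalize by $\mu(s)+D'\deg(s)$, which makes the contraction property immediate as a convex combination (and, incidentally, dispenses with the upper bound on $\mu$ entirely — only $\mu(s)\geq\delta$ is used), but, as you correctly diagnose, puts a $1/\delta$ into the residual so that the off-the-shelf Banach estimate would give $O(1/\delta^2)$; your pointwise evaluation at $\operatorname{argmax}_s|e(s)|$ and cancellation recovers the sharp $1/\delta$. So the two proofs make the same trade in different places: the paper's choice of operator pays for tightness with the extra nonnegativity hypothesis, while yours pays with the argmax trick. Both yield the identical inequality, and all your intermediate identities and the bounds $|LZ^*(s)|\leq 2\deg(s)\|Z^*\|_\infty$ and $\|Z^*\|_\infty\leq\beta\|r\|_\infty/(1-\gamma)$ check out.
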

\begin{proof}
    For $\mu$, $\mathcal{G}$ and $D'$ satisfy the assumption, 
    we define another Bellman opeartor $\mathcal{B}_{\mu}$ by 
    \begin{equation}
    \mathcal{B}_{\mu}[Z] = \operatorname{diag}(\mu)\cdot\mathcal{B}[Z]+(I_{|\mathcal{S}|}-\operatorname{diag}(\mu))\cdot Z-D'L\cdot Z,
    \end{equation}
    similarly to \cite[p.97]{Borkar2008}.
    Then, we have $\mathcal{T}[\cdot,\mu]=\mathcal{B}_{\mu}[Z]-Z$.
    \begin{align*}
        \|\mathcal{B}_{\mu}[Z]-\mathcal{B}_{\mu}[Z']\|_\infty&=
        \|\operatorname{diag}(\mu)(\mathcal{B}[Z]-\mathcal{B}[Z'])+(I_{|\mathcal{S}|}-\operatorname{diag}(\mu)-D'L)(Z-Z')\|_\infty\\
        &\leq
        \|\operatorname{diag}(\mu) \mathbbm{1}_{|\mathcal{S}|}\|\mathcal{B}[Z]-\mathcal{B}[Z']\|_\infty+(I_{|\mathcal{S}|}-\operatorname{diag}(\mu)-D'L)\mathbbm{1}_{|\mathcal{S}|}\|Z-Z'\|_\infty\|_\infty\\
        &\leq
        \|\operatorname{diag}(\mu) \mathbbm{1}_{|\mathcal{S}|}\gamma\|Z-Z'\|_\infty+(I_{|\mathcal{S}|}-\operatorname{diag}(\mu)-D'L)\mathbbm{1}_{|\mathcal{S}|}\|Z-Z'\|_\infty\|_\infty\\
        &=
        \|(I_{|\mathcal{S}|}-(1-\gamma)\operatorname{diag}(\mu)-D'L)\mathbbm{1}_{|\mathcal{S}|}\|Z-Z'\|_\infty\|_\infty\\
        &\leq
        \|\left((1-(1-\gamma)\delta) I_{|\mathcal{S}|}-D'L\right)\mathbbm{1}_{|\mathcal{S}|}\|_\infty\|Z-Z'\|_\infty\\
        &=
        (1-(1-\gamma)\delta)\|Z-Z'\|_\infty,
    \end{align*}
    where we used the fact that $\operatorname{diag}(\mu)$ and $I_{|\mathcal{S}|}-\operatorname{diag}(\mu)-D'L$ are matrices whose all elements are non-negative in the first equality, $\mathcal{B}$ is $\gamma$-contraction mapping in the second, and $\delta I_{|\mathcal{S}|}\leq\operatorname{diag}(\mu)$ in the third.
    Thus, $\mathcal{T}[\cdot,\mu]$ is $(1-(1-\gamma)\delta)$-contraction and has a unique fixed point $Z^*[\mu]\in\mathbb{R}^{|\mathcal{S}|}$ by the fixed point theorem.

    Moreover, we obtain Eq.~\eqref{eq:V-mu-bound} by $\|Z^*\|_\infty\leq \beta\|r\|_\infty/(1-\gamma)$ and
    \begin{align*}
        \|Z^*[\mu]-Z^*\|_\infty&=\|\mathcal{B}_{\mu}[Z^*[\mu]]-\mathcal{B}_{\mu}[Z^*]+\mathcal{B}_{\mu}[Z^*]-Z^*\|_\infty\\
        &\leq\|\mathcal{B}_{\mu}[Z^*[\mu]]-\mathcal{B}_{\mu}[Z^*]\|_\infty+\|\mathcal{B}_{\mu}[Z^*]-Z^*\|_\infty\\        
        &\leq(1-(1-\gamma)\delta)\|Z^*[\mu]-Z^*\|_\infty+\|\mathcal{B}_{\mu}[Z^*]-Z^*\|_\infty\\            
        &=(1-(1-\gamma)\delta)\|Z^*[\mu]-Z^*\|_\infty+\|D'LZ^*\|_\infty\\            
        &\leq(1-(1-\gamma)\delta)\|Z^*[\mu]-Z^*\|_\infty+2D'\operatorname{maxdeg}(\mathcal{G})\|Z^*\|_\infty.            
    \end{align*}
\end{proof}

By using Lemma~\ref{lemma:mod-bellman-cont}, we can check the fast dynamics with a fixed slow component $\dot{Z}_t=\mathcal{T}[Z_t,\mu]$ satisfies one assumption of Theorem~\ref{thm:ODE-convergence} for a sufficiently small diffusion (relative) coefficient $D'$. 

\begin{proposition}
    \label{prop-GASE-V}
    For some $\delta>0$, we define $\tilde{\Delta}_{\delta,D',\mathcal{G}}(\mathcal{S})\coloneqq \left\{\mu\in\Delta(\mathcal{S})\relmiddle|\delta \leq\mu(s)\leq (1-D'\operatorname{maxdeg}(\mathcal{G})), \forall s\in\mathcal{S}\right\}$.
    For $\mu\in\tilde{\Delta}_{\delta,D',\mathcal{G}}(\mathcal{S})$, $\dot{Z}_t=\mathcal{T}[Z_t,\mu]$ has a unique globally asymptotically stable equilibrium (GASE) $Z^*[\mu]$. 
    Moreover, $Z^*[\mu]\in\mathbb{R}^{|\mathcal{S}|}$ is Lipschitz w.r.t. $\mu$ on $\tilde{\Delta}_{\delta,D',\mathcal{G}}(\mathcal{S})$. 
\end{proposition}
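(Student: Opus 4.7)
The plan is to leverage the contraction property of $\mathcal{B}_\mu$ established in Lemma~\ref{lemma:mod-bellman-cont} and translate it from a discrete fixed-point contraction into a continuous-time Lyapunov estimate. Writing $\mathcal{T}[Z,\mu]=\mathcal{B}_\mu[Z]-Z$, the ODE becomes $\dot Z_t=\mathcal{B}_\mu[Z_t]-Z_t$, and $Z^*[\mu]$ is already known to be the unique zero of $\mathcal{T}[\cdot,\mu]$. So existence and uniqueness of an equilibrium are free; what remains is global asymptotic stability and Lipschitz dependence on $\mu$.

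For global asymptotic stability, I would study the candidate Lyapunov function $\phi(t)\coloneqq \|Z_t-Z^*[\mu]\|_\infty$. Using the Danskin/envelope-type computation of the upper Dini derivative of a max of smooth functions, at any time $t$ one can pick a state $s_0$ achieving the maximum (with the appropriate sign) and obtain
\begin{equation*}
D^+\phi(t)\;\le\;\bigl|\mathcal{B}_\mu[Z_t](s_0)-\mathcal{B}_\mu[Z^*[\mu]](s_0)\bigr|-\bigl(Z_t(s_0)-Z^*[\mu](s_0)\bigr)\operatorname{sgn}(\cdot)\;\le\;-(1-c)\phi(t),
\end{equation*}
where $c=1-(1-\gamma)\delta<1$ is the contraction constant from Lemma~\ref{lemma:mod-bellman-cont}. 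Grönwall's inequality then yields exponential convergence $\phi(t)\le\phi(0)e^{-(1-c)t}$, giving GASE.

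For the Lipschitz dependence of $Z^*[\mu]$ on $\mu$, I would exploit $\mathcal{B}_\mu[Z]-\mathcal{B}_{\mu'}[Z]=(\operatorname{diag}(\mu)-\operatorname{diag}(\mu'))(\mathcal{B}[Z]-Z)$ together with the fixed-point identity. Writing
\begin{equation*}
Z^*[\mu]-Z^*[\mu']=\bigl(\mathcal{B}_\mu[Z^*[\mu]]-\mathcal{B}_\mu[Z^*[\mu']]\bigr)+\bigl(\mathcal{B}_\mu[Z^*[\mu']]-\mathcal{B}_{\mu'}[Z^*[\mu']]\bigr),
\end{equation*}
the contraction bound on the first bracket absorbs a factor $c$ of $\|Z^*[\mu]-Z^*[\mu']\|_\infty$ to the left-hand side, while the second bracket is controlled by $\|\mu-\mu'\|_\infty\,\|\mathcal{B}[Z^*[\mu']]-Z^*[\mu']\|_\infty$. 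Since $\|Z^*[\mu']\|_\infty$ is uniformly bounded on $\tilde\Delta_{\delta,D',\mathcal G}(\mathcal S)$ by the a priori estimate $\|Z^*\|_\infty\le\beta\|r\|_\infty/(1-\gamma)$ and Eq.~\eqref{eq:V-mu-bound}, the term $\|\mathcal{B}[Z^*[\mu']]-Z^*[\mu']\|_\infty$ is bounded by a constant depending only on $\beta,\gamma,\|r\|_\infty$, yielding a Lipschitz constant of order $1/((1-\gamma)\delta)$.

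The main technical subtlety I expect is the justification of the Dini-derivative step for $\phi(t)=\|Z_t-Z^*[\mu]\|_\infty$, since the max-of-absolute-values is only one-sided differentiable where multiple coordinates tie; one must either invoke Danskin's theorem on the set of maximizers or smooth $\|\cdot\|_\infty$ by $\|\cdot\|_p$ with $p\to\infty$. Everything else reduces to the contraction estimate already in hand and routine manipulations of the explicit expression for $\mathcal{B}_\mu$.
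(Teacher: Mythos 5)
Your proposal is correct and takes essentially the same route as the paper: the Lipschitz bound uses the same add-and-subtract decomposition of $Z^*[\mu]-Z^*[\mu']$ (yours inserts $\mathcal{B}_\mu[Z^*[\mu']]$ where the paper inserts $\mathcal{B}_{\mu'}[Z^*[\mu]]$, which is equivalent), absorbs the contraction factor $1-(1-\gamma)\delta$ into the left-hand side, and controls the remaining term by the a priori bound on $\|Z^*[\mu]\|_\infty$ from Lemma~\ref{lemma:mod-bellman-cont}. The only difference is that the paper disposes of the GASE claim with a one-line appeal to Lemma~\ref{lemma:mod-bellman-cont} (implicitly the standard fact that $\dot Z=\mathcal{B}_\mu[Z]-Z$ with $\mathcal{B}_\mu$ a sup-norm contraction has its fixed point as GASE), whereas you spell out the underlying Dini-derivative and Gr\"onwall argument explicitly; that argument is sound.
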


\begin{proof}
    The first claim follows by Lemma~\ref{lemma:mod-bellman-cont}.
    
    For $\mu,\mu'\in\tilde{\Delta}_{\delta,D',\mathcal{G}}(\mathcal{S})$,
    we have
    \begin{align*}
        \|Z^*[\mu]-Z^*[\mu']\|_\infty&=\|\mathcal{B}_{\mu}[Z^*[\mu]]-\mathcal{B}_{\mu'}[Z^*[\mu]]+\mathcal{B}_{\mu'}[Z^*[\mu]]-\mathcal{B}_{\mu'}[Z^*[\mu']]\|_\infty\\
        &\leq \|\mathcal{B}_{\mu}[Z^*[\mu]]-\mathcal{B}_{\mu'}[Z^*[\mu]]\|_\infty + \|\mathcal{B}_{\mu'}[Z^*[\mu]]-\mathcal{B}_{\mu'}[Z^*[\mu']]\|_\infty\\
        &\leq \|(\operatorname{diag}(\mu-\mu')\cdot Z^*[\mu]\|_\infty+ (1-(1-\gamma)\delta)\|Z^*[\mu]-Z^*[\mu']\|_\infty\\
        &< \|\mu-\mu'\|_\infty\|Z^*[\mu]\|_\infty+ (1-(1-\gamma)\delta)\|Z^*[\mu]-Z^*[\mu']\|_\infty\\
        &< \frac{(1-\gamma)\delta+2D'\operatorname{maxdeg}(\mathcal{G})}{(1-\gamma)^2\delta}\beta\|r\|_\infty\|\mu-\mu'\|_\infty\notag\\
        &+ (1-(1-\gamma)\delta)\|Z^*[\mu]-Z^*[\mu']\|_\infty
        ,
    \end{align*}
    where we used
    \begin{align*}
        \|Z^*[\mu]\|_\infty&\leq\|Z^*[\mu]-Z^*\|_\infty+\|Z^*\|_\infty\\
        &\leq \frac{(1-\gamma)\delta+2D'\operatorname{maxdeg}(\mathcal{G})}{(1-\gamma)\delta}\|Z^*\|_\infty\\
        &\leq \frac{(1-\gamma)\delta+2D'\operatorname{maxdeg}(\mathcal{G})}{(1-\gamma)^2\delta}\beta\|r\|_\infty.    
    \end{align*}
    Hence, we obtain
    \begin{equation}
        \|Z^*[\mu]-Z^*[\mu']\|_\infty< \frac{(1-\gamma)\delta+2D'\operatorname{maxdeg}(\mathcal{G})}{(1-\gamma)^3\delta^2}\beta\|r\|_\infty\|\mu-\mu'\|_\infty.
    \end{equation}
\end{proof}

We then check the slow dynamics with an equibrated fast component $\dot{\mu}_t=\mathcal{P}[Z^*[{\mu_t}],\mu_t]$ satisfies another assumption of Theorem~\ref{thm:ODE-convergence} for a sufficiently small transition rate $\varepsilon^p$. 
\begin{proposition}
    \label{prop-GASE-rho}
    Assume there exists $\delta\in(0,1)$ such that $(\mu_t)_{t\leq 0}$ generated by $\dot{\mu}_t=\mathcal{P}[Z^*[{\mu_t}],\mu_t]$ is confined inside $\tilde{\Delta}_{\delta,D',\mathcal{G}}(\mathcal{S})$ and 
    \[2|\mathcal{S}|\varepsilon^p\left(\exp(2M_{\beta,\gamma,r})+\frac{(1-\gamma)\delta+2D'\operatorname{maxdeg}(\mathcal{G})}{\gamma(1-\gamma)^2\delta^2}\frac{(M_{\beta,\gamma,r})^2}{\exp(m_{\beta,\gamma,r})}\right)<1.\]
    Then $\dot{\mu}_t=\mathcal{P}[Z^*[{\mu_t}],\mu_t]$ has a unique GASE $\mu^*$.
\end{proposition}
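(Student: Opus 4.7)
The plan is to prove existence, uniqueness, and global asymptotic stability of an equilibrium for the slow-timescale dynamics $\dot{\mu}_t = \mathcal{P}[Z^*[\mu_t], \mu_t]$ on $\tilde{\Delta}_{\delta, D', \mathcal{G}}(\mathcal{S})$, recasting the equilibrium condition as a self-consistent fixed-point equation and combining Banach's fixed-point theorem with a Lyapunov argument built on the Lipschitz bounds established in Propositions~\ref{prop:P-Lipschitz} and~\ref{prop-GASE-V}.

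First I would exploit the fact that $\mathcal{P}[Z,\mu] = A[Z]^{\top}\mu$ is the forward Kolmogorov equation of a continuous-time Markov chain whose generator $A[Z]$ is irreducible (by connectivity of $\mathcal{G}$ and $\varepsilon^p>0$). Hence $A[Z]^{\top}$ admits a unique stationary distribution $\nu[Z]\in\Delta(\mathcal{S})$, and the equilibrium condition $\mathcal{P}[Z^*[\mu^*],\mu^*]=0$ is equivalent to the self-consistency equation $\mu^* = \Psi(\mu^*)$ with $\Psi(\mu) \coloneqq \nu[Z^*[\mu]]$.

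Second, I would establish existence and uniqueness by showing $\Psi$ is a strict contraction on $\tilde{\Delta}_{\delta, D', \mathcal{G}}(\mathcal{S})$. Markov-chain perturbation theory---via the group inverse of $A[Z]$ or a Dobrushin-type ergodicity coefficient---yields $\|\nu[Z]-\nu[Z']\|_\infty \leq C\,\|Z-Z'\|_\infty$ with $C$ controlled by the minimum off-diagonal entries of $A[Z]$ (bounded below by $\varepsilon^p e^{-2M_{\beta,\gamma,r}}/\operatorname{maxdeg}(\mathcal{G})$) and by the spectral gap of $A[Z]$. Composing with the Lipschitz dependence of $Z^*[\mu]$ on $\mu$ from Proposition~\ref{prop-GASE-V} yields a Lipschitz constant for $\Psi$ dominated by the bracketed expression in the hypothesis; the smallness assumption on $\varepsilon^p$ then forces this constant below $1$, and Banach's fixed-point theorem delivers a unique $\mu^*\in\tilde{\Delta}_{\delta, D', \mathcal{G}}(\mathcal{S})$ with $\Psi(\mu^*)=\mu^*$.

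Third, for global asymptotic stability I would use a Lyapunov function such as $V(\mu)\coloneqq\|\mu-\mu^*\|_2^2$ or the relative entropy $\mathrm{KL}(\mu\,\|\,\mu^*)$. Along the flow, the derivative decomposes as
\begin{equation*}
\dot V(\mu_t) = \mathrm{D}V(\mu_t)\cdot A[Z^*[\mu^*]]^{\top}\mu_t + \mathrm{D}V(\mu_t)\cdot\bigl(A[Z^*[\mu_t]] - A[Z^*[\mu^*]]\bigr)^{\top}\mu_t,
\end{equation*}
where the first term is bounded above by $-\lambda V(\mu_t)$ with $\lambda>0$ the uniform spectral gap of $A[Z]$ over $\tilde{\Delta}_{\delta, D', \mathcal{G}}(\mathcal{S})$ (the standard contraction for continuous-time Markov chains), and the second term is controlled by the Lipschitz constants of $A[\cdot]$ and $Z^*[\cdot]$. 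The hypothesis guarantees the perturbation is strictly dominated by the linear decay, yielding $\dot V<0$ on $\tilde{\Delta}_{\delta, D', \mathcal{G}}(\mathcal{S})\setminus\{\mu^*\}$ and hence the GASE claim.

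The main obstacle is Step~3: converting the hypothesis---which controls only composite Lipschitz norms---into a quantitative lower bound $\lambda>0$ on the spectral gap of $A[Z^*[\mu]]$ uniform over $\mu\in\tilde{\Delta}_{\delta, D', \mathcal{G}}(\mathcal{S})$, and then carefully balancing this gap against the $Z^*$-dependent perturbation. Without such a uniform spectral estimate one can still argue local contraction of $\Psi$ (hence uniqueness and local stability), but upgrading to the global statement asserted in the proposition forces this extra piece of quantitative bookkeeping, which is where the technical weight of the proof accumulates.
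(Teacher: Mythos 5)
Your plan diverges substantially from the paper's, and it contains a gap that the stated hypothesis cannot close. The paper's proof is a three-line computation: it bounds the Lipschitz constant of the composite vector field $\mu\mapsto\mathcal{P}[Z^*[\mu],\mu]$ by the triangle inequality, combining the two Lipschitz estimates of Proposition~\ref{prop:P-Lipschitz} (in $Z$ and in $\mu$ separately) with the Lipschitz bound on $\mu\mapsto Z^*[\mu]$ from Proposition~\ref{prop-GASE-V}; the resulting constant is exactly the left-hand side of the hypothesis, hence $<1$, and the GASE claim is read off from this contraction property. Crucially, every term in that bound carries a factor $\varepsilon^p$, because the vector field $\mathcal{P}[Z,\mu]=A[Z]^{\top}\mu$ scales linearly with $\varepsilon^p$; the smallness hypothesis on $\varepsilon^p$ is doing all the work.

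Your reformulation $\mu^*=\Psi(\mu^*)$ with $\Psi(\mu)\coloneqq\nu[Z^*[\mu]]$ is a correct characterization of the equilibrium, but it cannot be combined with the proposition's hypothesis. The generator factors as $A[Z]=\varepsilon^p\tilde{A}[Z]$ with $\tilde{A}$ independent of $\varepsilon^p$, and the stationary distribution of a generator is invariant under positive rescaling, so $\nu[Z]$ — and hence $\Psi$ — does not depend on $\varepsilon^p$ at all. Likewise $Z^*[\mu]$ is defined through $\mathcal{T}$, which contains no $\varepsilon^p$. Consequently no choice of $\varepsilon^p$ can make $\Psi$ a contraction, and your claim that ``the smallness assumption on $\varepsilon^p$ then forces this constant below $1$'' is false: the Markov-chain perturbation constant you invoke scales like the inverse spectral gap, i.e.\ like $1/\varepsilon^p$, and exactly cancels the $\varepsilon^p$ in $\|A[Z]-A[Z']\|$. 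The same cancellation defeats your Step~3: the decay rate $\lambda$ and the perturbation term in your Lyapunov decomposition are both proportional to $\varepsilon^p$, so their balance is $\varepsilon^p$-independent and the hypothesis again provides no leverage. To salvage your route you would need a smallness condition on an $\varepsilon^p$-free quantity (e.g.\ the product of the Lipschitz constants of $Z\mapsto\nu[Z]$ and $\mu\mapsto Z^*[\mu]$), which is a genuinely different, and stronger, assumption than the one in the proposition. If you want to pursue a Lyapunov-style justification of the GASE conclusion (which, to be fair, the paper itself glosses over — a Lipschitz-small vector field is not literally a self-map with a fixed point), the natural object is $G(\mu)\coloneqq\mu+\mathcal{P}[Z^*[\mu],\mu]$ together with the distance to its fixed point, not the stationary-distribution map $\Psi$.
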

\begin{proof}
    $\mathcal{P}[Z,\mu]$ is contraction w.r.t. $\mu$:
    \begin{align*}
        &\|\mathcal{P}[Z^*[{\mu}],\mu]-\mathcal{P}[Z^*[{\mu'}],\mu']\|_\infty\\
        \leq &\|\mathcal{P}[Z^*[{\mu}],\mu]-\mathcal{P}[Z^*[{\mu}],\mu']\|_\infty+\|\mathcal{P}[Z^*[{\mu}],\mu']-\mathcal{P}[Z^*[{\mu'}],\mu']\|_\infty\\
        \leq&2\varepsilon^p\exp(2M_{\beta,\gamma,r})|\mathcal{S}|\|\mu-\mu'\|_\infty+\frac{2|\mathcal{S}|\varepsilon^p M_{\beta,\gamma,r}}{\exp(m_{\beta,\gamma,r})}\|Z^*[{\mu}]-Z^*[{\mu'}]\|_\infty\\
        <&2|\mathcal{S}|\varepsilon^p\left(\exp(2M_{\beta,\gamma,r})+\frac{(1-\gamma)\delta+2D'\operatorname{maxdeg}(\mathcal{G})}{(1-\gamma)^3\delta^2}\beta\|r\|_\infty\frac{M_{\beta,\gamma,r}}{\exp(m_{\beta,\gamma,r})}\right)\|\mu-\mu'\|_\infty\\
        =&2|\mathcal{S}|\varepsilon^p\left(\exp(2M_{\beta,\gamma,r})+\frac{(1-\gamma)\delta+2D'\operatorname{maxdeg}(\mathcal{G})}{\gamma(1-\gamma)^2\delta^2}\frac{(M_{\beta,\gamma,r})^2}{\exp(m_{\beta,\gamma,r})}\right)\|\mu-\mu'\|_\infty\\
        <&\|\mu-\mu'\|_\infty.
    \end{align*}
    Thus, there exists a unique GASE $\mu^*$ by contraction mapping theorem and $\mu_t$ converge to it.
\end{proof}

\section{Boundedness of our dynamics}
For linear production~\eqref{eq:dist-G-learning-diff}, we find a sufficient condition of reward which ensures the non-negativity of the endogenous cue concentration: $r(s)\geq 0$ and $Z_0(s)\geq 0$ for all $s\in\mathcal{S}$.

\begin{proposition}
    Let $Z_{\max}=\max\{\beta \max_{s\in\mathcal{S}} r(s),\max_{s\in\mathcal{S}} Z_0(s)\}$ and $Z_{\min}=\min\{\beta \min_{s\in\mathcal{S}} r(s),\min_{s\in\mathcal{S}} Z_0(s)\}$.
    
    For learning rate $\alpha$ s.t. $0\leq\alpha N\leq 1-D\operatorname{maxdeg}(\mathcal{G})$, we obtain upper and lower bounds of the iteration $Z_k$ ($k=0,1,\ldots$) generated by Eq.~\eqref{eq:dist-G-learning-diff}:
    \begin{subequations}
        \begin{align}
            Z_k(s)&\leq (1+\gamma+\cdots \gamma^k) Z_{\max},\label{eq:Giteration-up-bound}\\
            Z_k(s)&\geq Z_{\min},\label{eq:Giteration-lw-bound}
        \end{align}
    \end{subequations}
    for each $s\in\mathcal{S}$.
    Moreover, boundedness of the iteration follows from $\gamma\in[0,1)$.
\end{proposition}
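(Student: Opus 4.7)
The plan is to rewrite the update rule~\eqref{eq:dist-G-learning-diff} as an affine combination of $Z_k$-values plus a bounded forcing term, check that all coefficients are nonnegative under the stated step-size condition, and then propagate the bounds by induction on $k$.

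First I would expand the graph-Laplacian term as $\sum_{s'} L_{ss'} Z_k(s') = \deg(s) Z_k(s) - \sum_{s'\in\mathcal{N}(s)} Z_k(s')$ and let $m_k(s) \coloneqq \sum_i \mathbbm{1}_{s^i_k = s} \in \{0,1,\ldots,N\}$, so that
\begin{align*}
Z_{k+1}(s) &= \bigl(1 - \alpha m_k(s) - D\deg(s)\bigr) Z_k(s) + D \sum_{s'\in\mathcal{N}(s)} Z_k(s') \\
&\quad + \alpha m_k(s)\Bigl[\beta r(s) + \ln \sum_{s'} p(s'|s)\exp(\gamma Z_k(s'))\Bigr].
\end{align*}
Under $0 \le \alpha N \le 1 - D\operatorname{maxdeg}(\mathcal{G})$, the coefficient on $Z_k(s)$ is nonnegative and together with the $D$-coefficients on neighbors sums to $1 - \alpha m_k(s)$. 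Hence the first two terms equal $(1-\alpha m_k(s))\,\bar Z_k(s)$, where $\bar Z_k(s)$ is a convex combination of $\{Z_k(s')\}_{s'\in\{s\}\cup\mathcal{N}(s)}$.

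Next I would record the Jensen-type sandwich $\gamma \min_{s'} Z_k(s') \le \ln \sum_{s'} p(s'|s)\exp(\gamma Z_k(s')) \le \gamma \max_{s'} Z_k(s')$, obtained by bounding $\exp(\gamma Z_k(s'))$ by its maximum and minimum and taking $\ln$. Combined with $\beta r(s) \in [\beta\min r, \beta\max r] \subseteq [Z_{\min}, Z_{\max}]$ (here $Z_{\min} \ge 0$ from the hypothesis $r,Z_0\ge 0$), this yields the two one-step estimates
\begin{equation*}
Z_{k+1}(s) \le (1-\alpha m_k(s))\,\max_{s'} Z_k(s') + \alpha m_k(s)\bigl[Z_{\max} + \gamma \max_{s'} Z_k(s')\bigr],
\end{equation*}
and analogously for the lower bound with $Z_{\min}$ and $\min Z_k$.

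For the induction, assume $Z_k(s) \le M_k \coloneqq (1+\gamma+\cdots+\gamma^k)Z_{\max}$. Plugging into the upper estimate and using the telescoping identity $(1-\gamma)M_k = (1-\gamma^{k+1})Z_{\max}$ gives
\begin{equation*}
Z_{k+1}(s) \le M_k + \alpha m_k(s)\bigl[Z_{\max} - (1-\gamma) M_k\bigr] = M_k + \alpha m_k(s)\,\gamma^{k+1} Z_{\max} \le M_{k+1},
\end{equation*}
since $\alpha m_k(s) \le \alpha N \le 1$ and $Z_{\max} \ge 0$. For the lower bound, assuming $Z_k(s) \ge Z_{\min} \ge 0$, the same computation produces $Z_{k+1}(s) \ge Z_{\min} + \alpha m_k(s)\gamma Z_{\min} \ge Z_{\min}$. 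The base case $k=0$ holds by definition of $Z_{\max}, Z_{\min}$. Overall boundedness then follows from the geometric series $M_k \uparrow Z_{\max}/(1-\gamma)$. The one subtle point—where I expect to spend most of the care—is verifying that the log-sum-exp term is controlled simultaneously by the \emph{same} envelopes used in the induction, because the nonlinearity does not commute with the convex-combination split; however, the monotonicity of $\ln\sum p\exp(\gamma\cdot)$ in each argument makes the sandwich bound tight enough for the induction to close.
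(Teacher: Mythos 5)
Your proposal is correct and follows essentially the same route as the paper's proof: rewrite the update as a nonnegative affine combination of $Z_k$-values (using $0\le\alpha N\le 1-D\operatorname{maxdeg}(\mathcal{G})$ to keep the coefficient of $Z_k(s)$ nonnegative), bound the log-sum-exp term by $\gamma\max_{s'}Z_k(s')$, and close the induction via $\alpha m_k(s)\gamma^{k+1}Z_{\max}\le\gamma^{k+1}Z_{\max}$. Your version merely makes the Laplacian expansion and the telescoping identity $(1-\gamma)M_k=(1-\gamma^{k+1})Z_{\max}$ explicit, and the "subtle point" you flag is handled exactly as the paper handles it, by applying the inductive envelope uniformly over all vertices inside the log-sum-exp.
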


\begin{proof}
    We show Ineq.~\eqref{eq:Giteration-up-bound} by induction.
    When $t=0$, it holds by definition.
    Assume it holds when $t=k$ for $k=0,1,\ldots$.
    Then, we obtain for $s\in\mathcal{S}$,
    \begin{align*}
        Z_{k+1}(s)&=Z_k(s)+\alpha\mu_t(s)(\beta r(s) + \log\sum_{s'\in\mathcal{S}}p(s'|s)\exp(\gamma Z_k(s'))-Z_k(s))+D\sum_{s'\in\mathcal{S}}L_{ss'}Z_k(s)\\
        &\leq(1-\alpha\mu_t(s)-D\operatorname{deg}(s))(1+\gamma+\gamma^2+\cdots+\gamma^k)Z_{\max}\notag\\&+\alpha\mu_t(s)(Z_{\max}+ \gamma (1+\gamma+\gamma^2+\cdots+\gamma^k)Z_{\max})+D\operatorname{deg}(s)(1+\gamma+\gamma^2+\cdots+\gamma^k)Z_{\max}\\
        &\leq (1+\gamma+\gamma^2+\cdots+\gamma^{k})Z_{\max}+\alpha\mu_t(s)\gamma^{k+1}Z_{\max}\\
        &\leq (1+\gamma+\gamma^2+\cdots+\gamma^{k})Z_{\max}+\alpha N\gamma^{k+1}Z_{\max}\\
        &\leq (1+\gamma+\gamma^2+\cdots+\gamma^{k+1})Z_{\max},
    \end{align*}
    which implies Ineq.~\eqref{eq:Giteration-up-bound} holds when $t=k+1$. 
    This completes the induction.
    
    We can show the lower bound~\eqref{eq:Giteration-lw-bound} similarly.
\end{proof}





\end{document}